\DeclareMathOperator{\assgn}{{:\!=}}
\DeclareMathOperator{\iassgn}{{:\!\backsim}}
\DeclareMathOperator{\uassgn}{{\bumpeq}}
\DeclareMathOperator{\hash}{{\diamond}}
\DeclareMathOperator{\apply}{{@}}
\DeclareMathOperator{\bdot}{{\triangleleft}}
\newcommand{\stringlit}[1]{\textrm{\textquotesingle{#1}\textquotesingle}}
\newcommand{\refval}[2]{\ensuremath{{#1}{\circ}{\rec{{#2}}}}}
\newcommand{\refxval}[2]{\ensuremath{{#1}{\circ}{{#2}}}}
\newcommand{\tbn}{\ensuremath{{\color{brown}N}}}
\newcommand{\tbnprime}{\ensuremath{{\color{brown}N'}}}
\newcommand{\tbname}[1]{\ensuremath{\mathsf{\color{brown}{#1}}}}
\newcommand{\lbn}{\ensuremath{{\color{teal}L}}}
\newcommand{\lbnprimex}[1]{\ensuremath{{\color{teal}L'_{\mathit{#1}}}}}
\newcommand{\lbnx}[1]{\ensuremath{{\color{teal}L_{\mathit{#1}}}}}
\newcommand{\lbname}[1]{\ensuremath{\mathsf{\color{teal}{#1}}}}
\newcommand{\idn}{\ensuremath{{\color{olive}\mathit{id}}}}
\newcommand{\idnprime}{\ensuremath{{\color{olive}\mathit{id'}}}}
\newcommand{\idname}[1]{\ensuremath{{\color{olive}{#1}}}}
\newcommand{\dmeta}{W}
\newcommand{\dsetmeta}{\ensuremath{\overline{W}}}
\newcommand{\dsetmetax}[1]{\ensuremath{\overline{W_{#1}}}}
\newcommand{\dsetmetaprime}{\ensuremath{\overline{W'}}}
\newcommand{\dsetmetaprimex}[1]{\ensuremath{\overline{W'_{#1}}}}
\newcommand{\dbn}{\ensuremath{\mu}}
\newcommand{\dbnx}[1]{\ensuremath{\mu_{#1}}}
\newcommand{\many}{\ensuremath{\infty}}
\newcommand{\lrec}{\langle}
\newcommand{\rrec}{\rangle}
\newcommand{\rec}[1]{\ensuremath{\left\langle{#1}\right\rangle}}
\newcommand{\maprec}[2]{\ensuremath{#1} \assgn {#2}}
\newcommand{\imaprec}[2]{\ensuremath{#1} \iassgn {#2}}
\newcommand{\umaprec}[2]{\ensuremath{#1} \uassgn {#2}}
\newcommand{\recsep}{,}
\newcommand{\tmap}[3]{\ensuremath{{#1}:{#2}\hash{#3}}}
\newcommand{\tstring}{\ensuremath{\textit{str}}}
\newcommand{\tint}{\ensuremath{\textit{int}}}
\newcommand{\tbool}{\ensuremath{\textit{bool}}}
\newcommand{\reft}[2]{\ensuremath{{#1}{\circ}{\rec{{#2}}}}}
\newcommand{\schema}{\ensuremath{\Delta}}
\newcommand{\cardmeta}{\ensuremath{m}}
\newcommand{\tmeta}{\ensuremath{T}}
\newcommand{\tsmeta}{\ensuremath{t}}
\newcommand{\vmeta}{\ensuremath{V}}
\newcommand{\vsmeta}{\ensuremath{v}}
\newcommand{\vsetmeta}{\ensuremath{\overline{V}}}
\newcommand{\vssetmeta}{\ensuremath{\overline{v}}}
\newcommand{\vsetmetax}[1]{\ensuremath{\overline{V_{#1}}}}
\newcommand{\vsetmetaprimex}[1]{\ensuremath{\overline{V'_{#1}}}}
\newcommand{\vssetmetax}[1]{\ensuremath{\overline{v_{#1}}}}
\newcommand{\ctt}{\ensuremath{\mathsf{tt}}}
\newcommand{\cff}{\ensuremath{\mathsf{ff}}}
\newcommand{\lkn}{\ensuremath{\kappa}}
\newcommand{\lkopen}{\text{\footnotesize\faLockOpen}}
\newcommand{\lkclosed}{\text{\footnotesize\faLock}}
\newcommand{\card}[2]{\ensuremath{[{{#1},{#2}}]}}
\newcommand{\vctxn}{\ensuremath{\Omega}}
\newcommand{\tctxn}{\ensuremath{\Gamma}}
\newcommand{\emeta}{\ensuremath{e}}
\newcommand{\emp}{\ensuremath{\varnothing}}
\newcommand{\typmeta}{\ensuremath{\tau}}
\newcommand{\eval}[7]{\ensuremath{{#1} \vdash_{{#2}, {#3}} {#6} \parallel {#4} \Rightarrow {#7} \parallel {#5}}}
\newcommand{\seval}[5]{\ensuremath{{#1} \vdash {#4} \parallel {#2} \Rightarrow {#5} \parallel {#3}}}
\newcommand{\deval}[5]{\eval{#1}{\schema}{\dbnx{\textit{init}}}{#2}{#3}{#4}{#5}}
\newcommand{\shapevalmeta}{\ensuremath{R}}
\newcommand{\pone}{\ensuremath{\texttt{\underline{{1}}}}}
\newcommand{\popt}{\ensuremath{\texttt{\underline{{?}}}}}
\newcommand{\pany}{\ensuremath{\texttt{\underline{{*}}}}}
\begin{document}

\title{Querying Graph-Relational Data}

\author{Michael J. Sullivan}
\orcid{0009-0005-3127-317X}
\affiliation{
  \institution{Gel Data}
  \state{California}
  \country{USA}
}
\author{Zhibo Chen}
\affiliation{
  \institution{Carnegie Mellon University}
  \state{Pennsylvania}
  \country{USA}
}
\orcid{0000-0003-0045-5024}
\email{zhiboc@andrew.cmu.edu}

\author{Elvis Pranskevichus}
\affiliation{
  \institution{Gel Data}
  \state{California}
  \country{USA}
}
\author{Robert J. Simmons}
\orcid{0000-0003-2420-3067}
\affiliation{
  \institution{Owl and Crow Productions}
  \state{Massachusetts}
  \country{USA}
}

\author{Victor Petrovykh}
\affiliation{
  \institution{Gel Data}
  \country{Canada}
}
\author{Aljaž Mur Eržen}
\affiliation{
  \institution{Gel Data}
  \state{California}
  \country{Slovenia}
}
\author{Yury Selivanov}
\affiliation{
  \institution{Gel Data}
  \state{California}
  \country{USA}
}
\renewcommand{\shortauthors}{Sullivan, Chen, Pranskevichus et al.}

\begin{abstract}

For applications that store structured data in relational databases, there is an impedance mismatch between the flat representations encouraged by relational data models and the deeply nested information that applications expect to receive. 
In this work, we present the \textit{graph-relational} database model, which  provides a flexible, compositional, and strongly-typed solution to this ``object-relational mismatch.'' We formally define the graph-relational database model and present a static and dynamic semantics for queries.
In addition, we discuss the realization of the graph-relational database model in EdgeQL, a general-purpose SQL-style query language, and the Gel system, which compiles EdgeQL schemas and queries into PostgreSQL queries.
Gel facilitates the kind of object-shaped data manipulation that is frequently provided inefficiently by object-relational mapping (ORM) technologies, while achieving most of the efficiency that comes from writing complex PostgreSQL queries directly.

\end{abstract}

\begin{CCSXML}
<ccs2012>
   <concept>
       <concept_id>10002951.10002952.10003400.10003403</concept_id>
       <concept_desc>Information systems~Object-relational mapping facilities</concept_desc>
       <concept_significance>500</concept_significance>
       </concept>
   <concept>
       <concept_id>10011007.10011006.10011008.10011024.10011028</concept_id>
       <concept_desc>Software and its engineering~Data types and structures</concept_desc>
       <concept_significance>500</concept_significance>
       </concept>
   <concept>
       <concept_id>10002951.10002952.10003197.10010822</concept_id>
       <concept_desc>Information systems~Relational database query languages</concept_desc>
       <concept_significance>500</concept_significance>
       </concept>
 </ccs2012>
\end{CCSXML}



\maketitle

\section{Introduction}

When information needs to be stored reliably and manipulated in efficient and flexible ways, that information frequently gets stored in a relational databases that is queried with SQL \cite{10.1145/362384.362685,6359709}. SQL-powered databases have proven enormously popular, but they have some significant mismatches with how many applications present and manipulate data. As a simple example, consider an application that treats movies as entities that have directors and actors. These directors and actors are people, and people are another kind of entity.

\begin{verbbox}[\footnotesize]
[{ "title": "Transistors",
   "year": 2007,
   "directors": [
     { "name": "Michael Cove",
       "age": 53 }
   ],
   "actors": [
     { "name": "Megan Wolf",
       "@character": "Meg Tech" },
     { "name": "Shy Andbuff",
       "@character": "Sam Man" }
   ]
 }, 
 ...

\end{verbbox}
\begin{figure}[ht]
  \centering
  \includegraphics[width=8.6cm]{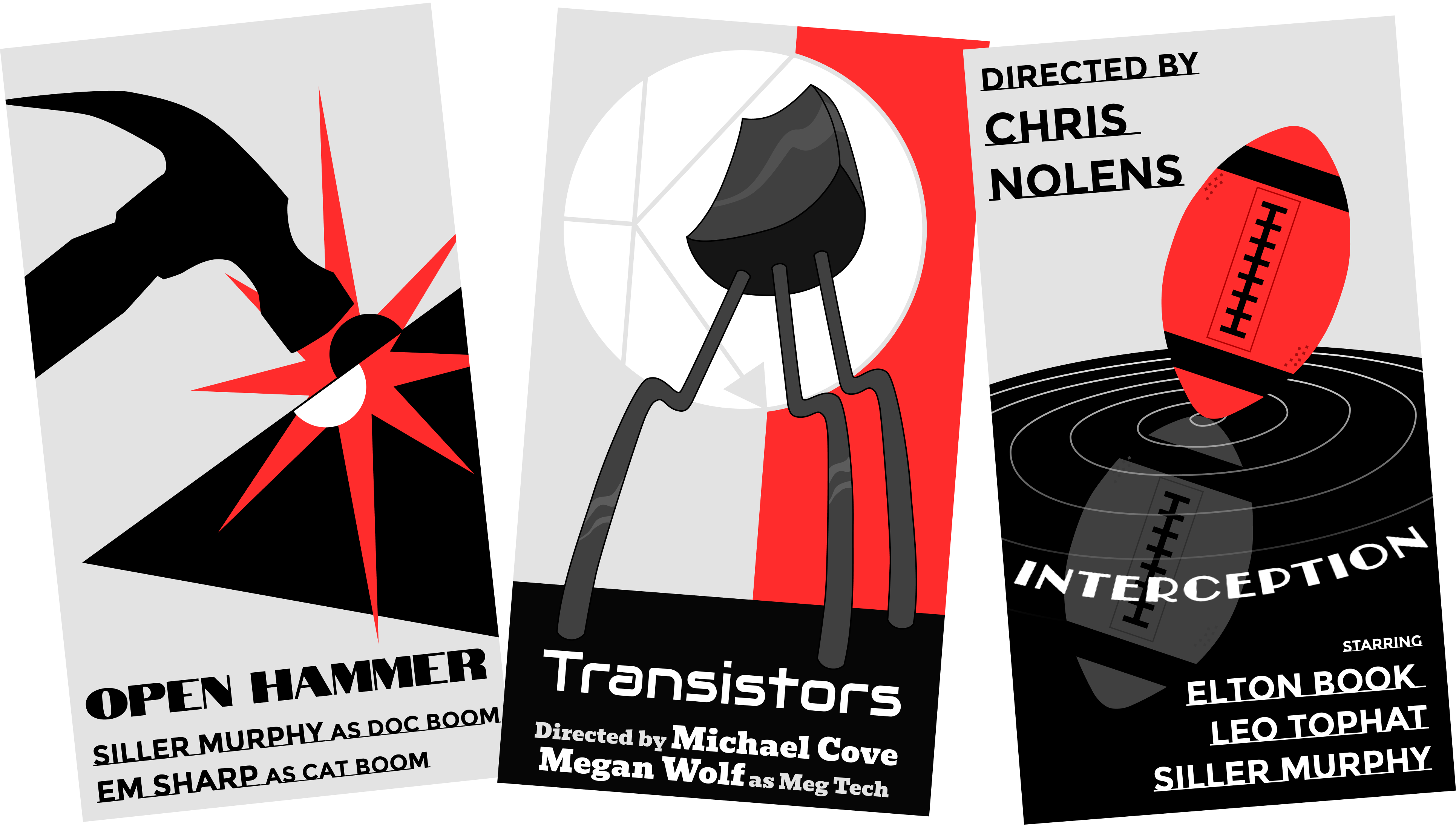}
  \theverbbox
  \caption{Some movie posters and a snippet of a possible JSON representation of those movies}
  \label{figmov}
\end{figure}

\Cref{figmov} invents some movies and presents a JSON structure describing those movies that a web application might expect to receive. This tree-structured JSON representation is generally referred to as an ``object-oriented'' perspective on data, though perhaps ``object-shaped'' is more evocative. Movie entities are presented to the application as the parent object, and they contain child objects which include both information about the entities (a person's name and age) and about the relationship between the entities (the name of the character a person plays in a movie).

In many document-oriented ``NoSQL'' databases like MongoDB \cite{MongoDBwebpage}, information might be stored in precisely the form we see in \cref{figmov}. However, because directors and actors are going to be present across multiple movies, and because biographical details may change over time, application developers often desire to keep a single canonical set of biographical information for each person. Relational databases capture a logical end-point of this desire; \cref{figtable} shows how a relational database might capture information about the movies in \cref{figmov}. Data about entities like movies and people can captured in tables where each row has a \textit{primary key} that is unique among all the rows in that table. Data about relationships between entities are stored in tables like the \texttt{Movie.actor} table, where ``source'' and ``target'' columns contain \textit{foreign keys} that refer to values in the ``id'' columns of the \texttt{Movie} and \texttt{Person} tables, respectively.

\begin{figure}
\centering \small
\begin{tabular}{|l|l|l|}
\multicolumn{3}{l}{\texttt{Movie}}
\\\hline
id & title & year \\\hline
7 & Transistors & 2007 \\
8 & Interception & 2010 \\
9 & Open Hammer & 2024 \\\hline
\multicolumn{3}{l}{} \\
\multicolumn{3}{l}{} \\
\multicolumn{3}{l}{} \\
\multicolumn{3}{l}{}
\end{tabular}
\,
\begin{tabular}{|l|l|l|l|}
\multicolumn{4}{l}{\texttt{Person}}
\\\hline
id & name & age & born \\\hline
  1 & Michael Cove & 60 & The moon \\
  2 & Megan Wolf & 38 & California \\
  3 & Leo Tophat & 50 & New York \\
  4 & Elton Book & 38 & Ottawa \\
  5 & Shy Andbuff & 38 & Kansas \\
  6 & Sillier Murphy & 49 & Ireland \\
\multicolumn{4}{|l|}{\ensuremath{\cdots}}
\end{tabular}
\,
\begin{tabular}{|l|l|l|}
\multicolumn{3}{l}{\texttt{Movie.actor}}
\\\hline
source & target & character \\\hline
  7 & 2 & Meg Tech \\
  7 & 5 & Sam Man \\
  8 & 3 & Corn Cobb \\
  8 & 4 & Spiderface \\
  8 & 6 & Fissure \\
  9 & 6 & Doc Boom \\
\multicolumn{3}{|l|}{\ensuremath{\cdots}}
\end{tabular}
\caption{A partial encoding of movie and actor information from \cref{figmov} into a relational database. The \texttt{Movie.actor} table annotates edges from movies (the source) to people (the target) with the name of the character played by that actor.}
\label{figtable}
\end{figure}

\subsection{An object-relational mismatch}\label{objectrelationalmismatch}

\begin{figure}
\centering \small
\begin{tabular}{|l|l|l|l|l|}
\hline
Movie.id & title & character & Person.id & name \\\hline
7 & Transistors & Meg Tech & 2 & Megan Wolf \\
7 & Transistors & Sam Man & 5 & Shy Andbuff \\
8 & Interception & Corn Cobb & 3 & Leo Tophat \\
8 & Interception & Spiderface & 4 & Elton Book \\
\multicolumn{5}{|l|}{\ensuremath{\cdots}}
\end{tabular}
\caption{A typical representation of the result of querying information about the character-in-a-movie relationship in a relational database by suitably joining the tables in \cref{figtable}.}
\label{figjoin}
\end{figure}

Even in this simple example, we can already see an instance of a well-known mismatch between the way information is stored and the way we want to retrieve it, because relational databases return information in tabular form: a single, homogeneous list of records where each record has same fields with the same types. With SQL, we can simultaneously fetch information about movies and their actors by performing a left join of the \texttt{Movie} and \texttt{Movie.actor} tables and then inner joining the result with the \texttt{Person} table, resulting in a table like the one shown in \cref{figjoin}.\footnote{For a precise PostgreSQL query, see the supplemental materials (\Cref{sec-compilation}).} This approach already lacks a certain parsimony, given that information about the movie is repeated multiple times, but it quickly becomes infeasible if we must simultaneously query information about the directors in order to reproduce all the information in the object-shaped representation from \cref{figmov}.

There are essentially three solutions for this problem. The first is to query the necessary information about movies, and then, for each movie, query all the directors and query all the actors. This is simple but can be wildly inefficient in practice, because it requires an additional query for every occurrence of a person in the output data. Many modern database implementations are tuned towards efficiently returning larger batches of information; an avalanche of small queries will degrade performance.\footnote{SQLite is a notable exception that is designed to efficiently support many small queries \cite{manysmall}.}

The second option uses three queries. The first query gets necessary information about the movies, as before. The second query includes the set of all required movie ids and gets all actors: the result is effectively what was shown in \cref{figjoin} but without the redundant ``title'' information. The third query again includes the set of all required movie ids and gets all directors (using a \texttt{Movie.directors} table similar to the \texttt{Movie.actors} table in \cref{figtable}). This is reasonably efficient in practice, but it forces the application to do a lot of the heavy lifting to correlate actors and directors with movies and create the object-shaped result. The complexity of this correlation increases as we move beyond the simple and shallow relationships from our example.

A third approach is less traditional: some relational databases, like PostgreSQL, allow queries to return nested data by introducing a data type of \textit{arrays of records} in addition to the numbers, dates, and strings implied by \cref{figtable,figjoin}. This has the potential to be quite efficient, but ``arrays of record values'' are not a natural fit for relational databases and the SQL query language: the resulting queries are boilerplate-heavy, unidiomatic, and not standardized across implementations.

Unfortunately, the simple and inefficient behavior we presented as the first option is often the default behavior for Object-Relational Mapper (ORM) libraries, which are the industry-standard way of mediating between object-shaped representations of information and relational databases used to store that information. Most developers who work with ORM libraries need to be aware of this ``n+1 query problem,'' as it can have significant impact on the performance of applications, an impact that often goes unnoticed until the system has grown to hold a significant amount of data. Most ORM libraries provide ways of punching through abstraction layers and providing more direct access to the underlying SQL tools in order to recover acceptable performance.

\subsection{The Gel System and EdgeQL Query Language}
\label{sec-intro-schema-type}

The Gel system was designed as a way of handling the object-relational mismatch while maintaining the benefits of using modern relational databases, including strong type-consistency guarantees and data consistency guarantees. Gel requires that data adhere to a strict schema; our running movie database example uses this schema:
\begin{quote}\small
\begin{verbatim}
type Person { required name: str; required age: int64; born: str; };
type Movie {
  required title: str;
  required year: int64;
  required multi directors: Person;
  multi actors: Person { character: str; }; };
\end{verbatim}
\end{quote}
Gel is implemented on top of the PostgreSQL database to take advantage of its specific nonstandard extensions, and this schema will induce Postgres to create tables like the ones shown in \cref{figtable}. The data shown in  \cref{figmov} can be requested in the EdgeQL query language used by Gel like this:
\begin{quote} \small
\begin{verbatim}
select Movie { title, year, directors: { name, age }, actors: { name, @character }};
\end{verbatim}
\end{quote}
In effect, Gel automates the third approach described above, providing EdgeQL as a convenient, compositional query language for interfacing with the ad-hoc SQL extensions that facilitate working with nested data in PostgreSQL.
It is an absolute requirement of the Gel implementation that all EdgeQL queries be able to be compiled into a single query in PostgreSQL; this has numerous practical benefits, including ensuring that all queries execute as a single database transaction, and has also proven sufficient to ensure that Gel provides acceptable performance relative to ORM approaches and hand-authored PostgreSQL queries.

\subsection{The Graph-Relational Database Model}

This paper presents the \textit{graph-relational database model} as a formal foundation for EdgeQL and Gel.
We call the model ``graph-relational'' because it manipulates graphs made up of nodes (entities, like movies and people) connected by directed edges (relationships between entities, like a person acting in a movie). This graph model maps naturally both onto the object-shaped structures we use to interact with the database and also onto the relational database we use to store data at rest, allowing for a straightforward and compositional expression of object-shaped queries.

In \cref{sec-data-at-rest}, we present a formal model of graph-relational databases, and in \cref{sec-expressions} we describe the syntax and semantics of a query language for graph-relational databases. The query language combines the nested shaping shown in the query above, SQL-style constructs for filtering and ordering, and also functional-programming-style constructs for computation and iteration. The query language enjoys a compositional static and big-step dynamic semantics; we discuss the metatheory of language (totality and type soundness) in \cref{sec-metatheory}.

While we present the graph-relational database model as a standalone programming model, the existence and design of this model is inextricably tied to the realization of this design in the Gel database system, to the design of Gel's EdgeQL query language \cite{GelDocs}, and to the requirement of compiling EdgeQL expressions to single PostgreSQL queries. We discuss the Gel system in \cref{sec-gel}, and present some benchmarking results in \cref{sec-eval}.

\section{Representing graph-relational data}\label{sec-data-at-rest}

To formalize the graph-relational model, we start by describing how we can store information about individual entities in records that map \textit{labels} $\lbn$ to \textit{stored values} $\vmeta$.
Each stored value $\vmeta_i$ is either a \textit{scalar value} (like a string, integer, or boolean) or a \textit{reference value} (which we will explain shortly). 

In the introduction, we saw that relationships between entities can be treated as directional: the ``actor'' and ``director'' relationships have a source (the movie) and a target or destination (the person). The relational database presentation sketched in the introduction treats source and target somewhat symmetrically, whereas our graph-relational model associates relationships more closely with their source entity. (In \cref{secrefvalues} we will see how this symmetry-breaking does not cost us expressive power: it remains possible to query relationships in both directions, and the Gel implementation allows this to be done efficiently.)

In our running example, ``person'' entities aren't the source of any relationships. As a result, their representation is relatively straightforward:
\begin{align*}
R_\textit{mc} & = 
    \lrec && 
     \maprec{\lbname{name}}{[\stringlit{Michael Cove}]}
     \recsep\;
     \maprec{\lbname{age}}{[60]}
     \recsep\;
     \maprec{\lbname{born}}{[\stringlit{The moon}]}
    \qquad \rrec
\\
R_\textit{mw} & = 
    \lrec && 
     \maprec{\lbname{name}}{[\stringlit{Megan Wolf}]}
     \recsep\;
     \maprec{\lbname{age}}{[38]}
     \recsep\;
     \maprec{\lbname{born}}{[\stringlit{California}]}
    \qquad \rrec
\\
R_\textit{lt} & = 
    \lrec && 
     \maprec{\lbname{name}}{[\stringlit{Leo Tophat}]}
     \recsep\;
     \maprec{\lbname{age}}{[50]}
     \recsep\;
     \maprec{\lbname{born}}{[\stringlit{New York}]}
    \qquad \rrec
\intertext{
Movies are the source of the ``actor'' and ``director'' relationships, so the information about these relationships gets stored with the description of movie:
}
R_\textit{tr} & = 
    \lrec &&
      \maprec{\lbname{title}}{[\stringlit{Transistors}]}
      \recsep \;
      \maprec{\lbname{year}}{[2007]}
      \recsep \;
      \maprec{\lbname{directors}}{[\refval{\idname{1}}{}]}
      \recsep \\
    &&&
      \maprec{\lbname{actors}}{[
        \refval{\idname{2}}{\imaprec{\lbname{character}}{[\stringlit{Meg Tech}]}}
        \recsep
        \refval{\idname{5}}{\imaprec{\lbname{character}}{[\stringlit{Sam Man}]}}
      ]}
    \qquad \rrec
\end{align*}
The reference value $\refval{\idname{2}}{\imaprec{\lbname{character}}{[\stringlit{Meg Tech}]}}$ captures that the character named ``Meg Tech'' in the movie \textit{Transistors} is played by the person with the unique identifier $\idname{2}$. (We will call {\lbname{character}} a \textit{link property}; it is a property of the {\lbname{actors}} link between two entities, not a property of either entity.) In order for a reference value to make sense, it must exist in a context of a \textit{database store} {\dbn}, a set of tuples that uniquely map identifiers to entity records. For our running example, that would look something like this:
\[ \dbnx{\textit{example}} = \{\;( \idname{7}, \tbname{Movie}, R_\textit{tr}), \, 
      ( \idname{1}, \tbname{Person}, R_\textit{mb}), \, 
      ( \idname{2}, \tbname{Person},  R_\textit{mw}), \,
      ( \idname{3},\tbname{Person},  R_\textit{lt}), \, \cdots\; \} \]
Unlike the relational database modeled in \cref{figtable}, we will require \textit{global} uniqueness of identifiers, a requirement handled in practice by generating all identifiers as UUIDs with an astronomically low probability of collision.

\subsection{Cardinality modes}

Perhaps the biggest difference between the traditional and graph-relational presentation is that, in the graph-relational presentation, labels are associated with \textit{sequences} of values, not individual values. This means that we can easily give a person multiple names, multiple ages, or zero names or ages. This may not always be desirable!

The novel inclusion of \textit{cardinality modes} in our type system tames this generality, allowing a user to specify properties like ``an actor has at least one name and exactly one age.''  This is crucial for the usability of our system: it connects the uniform model of graph-relational data (where things like movie titles are \emph{sequences} of strings) to the object-shaped representations that consumers of graph-relational data expect. (We will return to the topic of serialization in \cref{sec-type-directed-serial}.)

A cardinality mode takes the form $\card{i_1}{i_2}$ where $i_1 \in \{0, 1\}$ is a lower bound, $i_2 \in \{0, 1,\many\}$ is an upper bound, and $i_1 \leq i_2$. A set or sequence with cardinality mode $[i_1,i_2]$ contains at least $i_1$ elements and at most $i_2$ elements (inclusive). The five possible cardinality modes are:
\begin{enumerate}
\item The \textit{required single} mode $\card{1}{1}$ could be used to ensure that every movie has one title. 
\item The \textit{optional single} mode $\card{0}{1}$ represents the optional presence of an element, and could be used for a birthplace field in a setting where someone's birthplace might not be known.
\item The \textit{required multi} mode $\card{1}{\many}$ represents the certain presence of at least one element, and could be used to force every movie to have at least one director. 
\item The \textit{optional multi} mode $\card{0}{\many}$ is the totally unconstrained cardinality, and could be used for actors, since some movies do not have actors.
\item The \textit{empty} mode $\card{0}{0}$ has limited utility but provides an additive unit that makes cardinality modes a semiring with the ordering $\card{i_1}{i_2} \leq \card{i_3}{i_4}$ when $i_3 \leq i_1$ and $i_2 \leq i_4$, multiplication $\card{i_1}{i_2} \times \card{i_3}{i_4} = [i_1 \times i_3, i_2 \times i_4]$ with the unit $\card{1}{1}$, and addition $\card{i_1}{i_2} + \card{i_3}{i_4} = [\lfloor i_1 + i_3\rfloor, \lceil i_2 + i_4 \rceil]$, where the rounding is to the next element in $\{ 0, 1, \many \}$. 
\end{enumerate}
Gel's concrete syntax for the first four modes was shown in the schema from \cref{sec-intro-schema-type}. The optional single mode is the unannotated default, and the empty mode does not have concrete syntax in Gel.

The utility of ``null'' values in relational databases has been an topic of study for over four decades \cite{ZANIOLO1984142}, yet existing approaches continue to vex users in the present day \cite{10.14778/3551793.3551818}. The graph-relational database model suggests a different perspective: we can treat all fields uniformly as multiset-valued and recover the traditional cases of ``required'' and ``nullable'' as two of five possible constraints on the cardinality of a field. Relationships described as \textit{many-to-many} are generally treated as a different special case in relational databases; in contrast, the graph-relational database model treats these as instances of the cardinality mode $[0,\many]$. We are optimistic that this shift in perspective can improve clarity and user outcomes, though a proper investigation of user outcomes is left for future work.

\begin{figure}
\begin{center}
\begin{tabular}{lrll}
$\lbn$ & $::=$ & 
  $\ldots$ & 
  labels\\
$\tbn$ & $::=$ &
  $\ldots$ &
  type names\\
$\idn$ & $::=$ & 
  $\ldots$ &
  unique identifiers\\
$\cardmeta$ & $::=$ & 
  $[0,0] \mid [0,1] \mid [0, \many] \mid [1,1] \mid [1, \many]$ & 
  cardinality modes\\
$\tsmeta$ & $::=$ &
  $\tint \mid \tstring \mid \tbool \mid \ldots$ &
  scalar types\\
$\vsmeta$ & $::=$ &
  $\ctt \mid \cff \mid \stringlit{hi} \mid 3 \mid  \ldots$ &
  scalar values\\ 
$\vssetmeta$ & $::=$ &
  $[\vsmeta_1, \ldots, \vsmeta_n]$ &
  scalar value sequences\\ 
$\tmeta$ & $::=$ &
  $\tsmeta \mid \reft{\tbn}{\tmap{\lbnx{1}}{\tsmeta_1}{\cardmeta_1}\recsep\ldots\recsep\tmap{\lbnx{n}}{\tsmeta_n}{\cardmeta_n}}$ &
  stored value types\\
$\vmeta$ & $::=$ &
  $\vsmeta \mid \refval{\idn}{\imaprec{\lbnx{1}}{\vssetmetax{1}}\recsep\ldots\recsep\imaprec{\lbnx{n}}{\vssetmetax{n}}}$ & 
  stored values\\
$\vsetmeta$ & $::=$ &
  $[\vmeta_1, \ldots, \vmeta_n]$ &
  stored value sequences\\
$\lkn$ & $::=$ & 
  $\lkopen \mid \lkclosed$ &
  edit marks\\
\end{tabular}
\end{center}
\caption{Syntax of data at rest in the graph-relational database model}
\label{fig-syn-rest}
\end{figure}

\subsection{Types and schemas}\label{sec-schemas}

A database schema maps \textit{type names} {\tbn} to \textit{object types}, and 
object types assign types and cardinality modes to a set of labels. For object types that are not the source entity for any relationships, all labels are {properties} that map to scalar types: in our running example, if we want each person to have exactly one name, one age, and at most one birthplace, $\tbname{Person}$ will be associated with this object type:
\begin{align*}
\lrec\quad & \tmap{\lbname{name}}{\tstring}{[1,1]}
\recsep\;
\tmap{\lbname{age}}{\tint}{[1,1]}
\recsep\; 
\tmap{\lbname{born}}{\tstring}{[0,1]}
\quad \rrec
\intertext{If movies have one title and year, at least one director, and any number of actors, $\tbname{Movie}$ will be associated with this object type:}
\lrec \quad &
  \tmap{\lbname{title}}{\tstring}{[1,1]}
\recsep \\ &
  \tmap{\lbname{year}}{\tint}{[1,1]}
\recsep \\ &
  \tmap{\lbname{directors}}{\reft{\tbname{Person}}{}}{[1,\many]},
\recsep \\ &
  \tmap{\lbname{actors}}
    {\reft{\tbname{Person}}{\tmap{\lbname{character}}{\tstring}{[1,1]}}}
    {[0,\many]}
\quad \rrec
\end{align*}
Labels like {\lbname{directors}} that map to reference types are \textit{links}, {\lbname{age}} and {\lbname{title}} are \textit{properties}.

Formally, we will start with the definitions in \cref{fig-syn-rest} and define database schemas {\schema} as sets of tuples $\left(\tbn, \rec{\tmap{\lbnx{1}}{\tmeta_1}{\cardmeta_1}\recsep\ldots\recsep\tmap{\lbnx{n}}{\tmeta_n}{\cardmeta_n}}\right)$  where any given $\tbn$ only appears in a single tuple; this allows us to treat a schema as a partial map and write $\schema(\tbn) = \rec{\tmap{\lbnx{1}}{\tmeta_1}{\cardmeta_1}\recsep\ldots\recsep\tmap{\lbnx{n}}{\tmeta_n}{\cardmeta_n}}$ when convenient. We will similarly define define database stores {\dbn} as sets of tuples $\left(\idn, \tbn, \lkn, \rec{\maprec{\lbnx{1}}{\vsetmetax{1}}\recsep\ldots\recsep\maprec{\lbnx{n}}{\vsetmetax{n}}}\right)$ where tuples are uniquely identified by their $\idn$ component. Whenever we use the angle bracket notation for record-like syntax, we will treat the uniqueness of labels $\lbnx{i}$ within the record as a syntactic requirement.


\paragraph{Edit marks} The edit mark $\lkn$ is the only element of the formal syntax that we did not encounter in our motivating discussion, and it is the first element of the graph-relational model we encounter that is unambiguously influenced by the constraints of the PostgreSQL implementation. It's possible for an EdgeQL query to attempt to update the same entity multiple times, but in our compiled queries, only the first update to an entity will succeed, and subsequent updates will be ignored. When we present our operational semantics we can treat the database store as starting with all records unlocked ($\lkopen$), and these edit marks will keep track of edits over the course of a single query to capture the behavior of PostgreSQL by preventing multiple updates of the same record.

\paragraph{Typing scalar values}

We leave scalar types and values open-ended. Integers and strings are included for the sake of examples, and a Boolean type $\tbool$ with the two values
$\ctt$ and $\cff$ is needed for the later introduction of conditionals. We only need to require that the primitive typing judgment $\vsmeta : \tsmeta$ identifies each scalar value with at most one type.

\paragraph{Typing database stores}

\begin{figure}
\begin{flushleft}\fbox{$\strut\vdash_{\schema, \dbn}\strut \vsetmeta : \tmeta \hash m$}\end{flushleft}
\[
\inferrule[(ST-set)]
{i_\mathit{lo} \leq n \leq i_\mathit{hi} \\ 
  \vdash_{\schema, \dbn} \vmeta_1 : \tmeta \quad 
  \cdots \quad 
  \vdash_{\schema, \dbn} \vmeta_n : \tmeta}
{\vdash_{\schema, \dbn} \tmap{[\vmeta_1, \ldots, \vmeta_n]}{\tmeta}{[i_\mathit{lo}, i_\mathit{hi}]}}
\]

\begin{flushleft}\fbox{$\strut\vdash_{\schema, \dbn}\strut \vmeta : \tmeta$}\end{flushleft}
\[
\inferrule[(ST-prim)]
{\vsmeta : \tsmeta}
{\vdash_{\schema, \dbn} \vsmeta : \tsmeta}
\qquad
\inferrule[(ST-ref)]
{\left(\idn, \tbn, \lkn, \rec{\ldots}\right) \in \dbn
\\
\vdash_{\schema, \dbn} \vssetmetax{1} : \tsmeta_1 \hash \cardmeta_1
 \quad \cdots
 \quad \vdash_{\schema, \dbn} \vssetmetax{n} : \tsmeta_n \hash \cardmeta_n}
{
\vdash_{\schema, \dbn} 
  \refval{\idn}{\imaprec{\lbnx{1}}{\vssetmetax{1}}\recsep\ldots\recsep\imaprec{\lbnx{n}}{\vssetmetax{n}}}
  : \reft{\tbn}{\tmap{\lbnx{1}}{\tsmeta_1}{\cardmeta_1}\recsep\ldots\recsep\tmap{\lbnx{n}}{\tsmeta_n}{\cardmeta_n}}}
\]
\caption{Typing rule for values and value sets: the premises of \textsc{ST-ref} take advantage of the fact that scalar value sets $\vssetmeta$ are a syntactic refinement of value sets $\vsetmeta$, and likewise for $\tsmeta$ and $\tmeta$. (Alternatively, we could have copied the first judgment to create a similarly-defined judgment $\vdash_{\schema, \dbn}\strut \vssetmeta : \tsmeta \hash m$.)}
\label{figvaluetypes}
\end{figure}
The judgment $\vdash \schema \,\mathtt{ok}$ holds when all names $\tbn$ referenced in the range of $\schema$ are defined by $\schema$, and when the set of labels can be partitioned into \textit{object labels}, the properties and links that appear in the top-level records in the schema, and \textit{link property labels} which appear in the reference types $\tmeta$ within a schema.\footnote{In the EdgeQL implementation, this partitioning of labels is ensured by forcing link property labels to be prefixed with an `@` symbol, but it is simpler in our formal presentation to treat labels uniformly.}
The judgment $\vdash_\schema \dbn \,\mathtt{ok}$ holds when $\vdash \schema \,\mathtt{ok}$ and when, for each $\left(\idn, \tbn, \lkn, \rec{\maprec{\lbnx{1}}{\vsetmetax{1}}\recsep\ldots\recsep\maprec{\lbnx{n}}{\vsetmetax{n}}}\right) \in \dbn$, we have $\schema(\tbn) = \rec{\tmap{\lbnx{1}}{\tmeta_1}{\cardmeta_n}\recsep\ldots\recsep\tmap{\lbnx{n}}{\tmeta_n}{\cardmeta_n}}$ and for all $1 \leq i \leq n$ we have $\vdash_{\schema, \dbn}{\vsetmetax{i} : \tmeta_i \hash \cardmeta_i}$ as defined in \cref{figvaluetypes}.

\subsection{Computed Values}
\label{sec-general-values}

The previous sections described the representation of graph-relational data ``at rest'' in storage, and this data has a limited hierarchy: a database entry maps labels to stored values $\vmeta$, one kind of stored value is a reference value that both identifies another database entry and maps link properties $\lbn$ to scalar values $\vsmeta$, and scalar values cannot reference the store. Stored value types $\tmeta$ and scalar types $\tsmeta$ have an analogous relationship.

The (sequences of) values returned by queries need to have a more general recursive structure.
Consider a schema for a social-networking-like application where $\tbname{User}$ has the object type $\rec{\tmap{\lbname{name}}{\tstring}{[1,1]}\recsep \tmap{\lbname{following}}{\reft{\tbname{User}}{}}{[0,\many]}}$.
Under such a schema, the database store might contain an element like $\left(\idname{16}, \tbname{User}, \lkn, \rec{\maprec{\lbname{name}}{[\stringlit{Alice}]}
, \maprec{\lbname{following}}{[\refval{\idname{12}}{}, \refval{\idname{19}}{}]}}\right)$. We definitely want it to be possible for object-shaped queries to allow access to information about Alice that isn't directly present in this record, for example the names of the users Alice is following (and potentially the names of users Alice's followers are following).

The solution is to introduce types $\typmeta$ and computed values $\dmeta$ as straightforward recursive generalizations of property types $\tmeta$ and stored values $\vmeta$, as shown in \cref{fig-recursive-types-and=data}. Shaped expressions introduced in \cref{sec-exp-shapes} will allow additional information to be ``loaded'' into computed values, and only the information that is present in a computed value will be serialized in the output of a query. The only twist is the presence of \textit{record entry visibility marks}, which exclude certain property-to-value mappings from serialization; these markings only influence serialization and have very little effect on the semantics, so will be mostly ignored until \cref{sec-gel}.

\begin{figure}
\begin{tabular}{lrll}
$\typmeta$ & $::=$ &
  $\tsmeta \mid \reft{\tbn}{\tmap{\lbnx{1}}{\typmeta_1}{\cardmeta_1}\recsep\ldots\recsep\tmap{\lbnx{n}}{\typmeta_n}{\cardmeta_n}}$ &
  types\\
$\dmeta$ & $::=$ &
  $\vsmeta \mid \refxval{\idn}{\shapevalmeta}$ & 
  computed values\\
$\dsetmeta$ & $::=$ &
  $[\dmeta_1, \ldots, \dmeta_n]$ &
  computed value sequences \\
$\shapevalmeta$ & $::=$ &
  $\rec{\umaprec{\lbnx{1}}{\dsetmetax{1}}
  \recsep
  \ldots
  \recsep
  \umaprec{\lbnx{n}}{\dsetmetax{n}}
  }$ &
  shape values \\
$\uassgn$ & $::=$ &
  ${\assgn} \mid {\iassgn}$&
  record entry visibility mark\\
\end{tabular}
\caption{Syntax of computed types and computed values}
\label{fig-recursive-types-and=data}
\end{figure}

\section{Expressions as queries on graph-relational data}
\label{sec-expressions}

EdgeQL is a query language, but in this paper we will call graph-relational queries \textit{expressions} in order to differentiate them from SQL queries. A client retrieves and modifies data in the graph-relational database model by evaluating  expressions {\emeta}  to sequences of computed values $\dsetmeta = [\dmeta_1, \ldots, \dmeta_n]$. In this section, we introduce the EdgeQL expression language in stages: basic expressions in \cref{secebasic,secrefvalues} will be extended with \textit{shapes} in \cref{sec-exp-shapes}, built-in functions in \cref{sec-exp-builtin}, control features in \cref{seceprog}, and expressions that manipulate the database in \cref{secemutate}.

EdgeQL expressions are strongly typed, and the core judgment describing the static semantics of EdgeQL expressions expresses that, given a schema {\schema}, an expression {\emeta} may be given the type {\typmeta} and the cardinality mode {\cardmeta} in a context {\tctxn} that assigns types and cardinality modes to the free variables in {\emeta}:
\[ \tctxn \vdash_\schema \tmap{\emeta}{\typmeta}{\cardmeta} \]

Evaluation can change the database store, so in addition to the starting expression and the computed sequence of values, the big-step operational semantics needs to carry a ``before'' and ``after'' database store. This suggests a judgment that looks like $\emeta \parallel \dbnx{1} \Rightarrow \dsetmeta \parallel \dbnx{2}$, but there are three additional considerations.
\begin{enumerate}
    \item The language has variable binding, necessitating environments {\vctxn} mapping expression variables to computed value sequences.
    \item It is convenient for our definitions if we have access to the schema {\schema} during evaluation.
    \item EdgeQL expressions access the database store that existed \textit{prior} to expression evaluation, not the database store being modified as part of evaluation, so the evaluation judgment needs to carry the initial database state $\dbnx{init}$ as a parameter. This is another decision forced by the PostgreSQL implementation and the goal of compiling EdgeQL expressions to single PostgreSQL queries: PostgreSQL queries can only access the database state that exists before a query's evaluation.
\end{enumerate}
These considerations give us the following evaluation judgment:
\[ \deval{\vctxn}{\dbnx{1}}{\dbnx{2}}{\emeta}{\dsetmeta} \]
When we write rules that do not access the schema {\schema} or initial database {\dbnx{\textit{init}}, we elide  both.

\subsection{Basic Expressions}\label{secebasic}

\begin{figure}
\begin{tabular}{lrll}
$\emeta$ & $::=$ &
  $x \mid \vsmeta \mid \varnothing_\typmeta \mid \emeta_1 \cup \emeta_2  \mid \tbn \mid \ldots$ &
  expressions \\
$\tctxn$ & $::=$ &
  $\tmap{x_1}{\typmeta_1}{\cardmeta_1}, \ldots, \tmap{x_n}{\typmeta_n}{\cardmeta_n}$
  & type contexts (variables must be distinct) \\
$\vctxn$ & $::=$ &
  ${x_1} \mapsto \dsetmetax{1}, \ldots, {x_n} \mapsto \dsetmetax{n}$
  & environments (variables must be distinct)
\end{tabular}

\begin{mathpar}

\inferrule[(T-var)]
{ \tmap{x}{\typmeta}{\cardmeta} \in \tctxn}
{\tctxn \vdash_{\schema} \tmap{x}{\typmeta}{\cardmeta}}

\inferrule[(T-prim)]
{\vsmeta : \tsmeta}
{\tctxn \vdash_{\schema} \vsmeta : \tsmeta \hash [1,1]}

\inferrule[(T-emp)]
{ }
{\tctxn \vdash_{\schema} \varnothing_\typmeta : \typmeta \hash [0,0]}

\\\\

\inferrule[(T-union)]
{
\tctxn \vdash_{\schema} \tmap{\emeta_1}{\typmeta}{\cardmeta_1}
\\
\tctxn \vdash_{\schema} \emeta_2 : \typmeta \hash \cardmeta_2
}
{\tctxn \vdash_{\schema} \emeta_1 \cup \emeta_2 : \typmeta \hash \cardmeta_1 + \cardmeta_2}

\inferrule[(T-name)]
{\tbn \in \mathrm{dom}(\schema)}
{\tctxn \vdash_{\schema} \tmap{\tbn}{\reft{\tbn}{}}{[0,\many]}}
\\\\

\inferrule[(E-var)]
{x \mapsto \dsetmeta \in \vctxn}
{\seval{\vctxn}{\dbn}{\dbn}{x}{\dsetmeta}}

\inferrule[(E-prim)]
{ }
{\seval{\vctxn}{\dbn}{\dbn}{\vsmeta}{[\vsmeta]}}

\inferrule[(E-emp)]
{ }
{\seval{\vctxn}{\dbn}{\dbn}{\varnothing_\typmeta}{[]}}

\\\\

\inferrule[(E-union)]
{
\seval{\vctxn}{\dbn_a}{\dbn_b}{\emeta_1}{\dsetmetax{1}}
\\\\
\seval{\vctxn}{\dbn_b}{\dbn_c}{\emeta_2}{\dsetmetax{2}}
\\
\left[ \dsetmetax{1} , \dsetmetax{2} \right] \approx \dsetmeta
}
{\seval{\vctxn}{\dbn_a}{\dbn_c}{\emeta_1 \cup \emeta_2}{\dsetmeta}}

\inferrule[(E-name)]
{\bigl[\{\refval{\idn}{}\} \mid (\idn, \tbn, \lkn, \rec{\ldots}) \in \dbnx{init}\bigr] \approx \dsetmeta}
{\deval{\vctxn}{\dbn}{\dbn}{\tbn}{\dsetmeta}}

\end{mathpar}

\caption{Syntax and semantics of basic expressions}
\label{figsembasic}
\end{figure}

In a database query formalism where expressions evaluate to sequences, the most fundamental operations are the ones that build sequences and the ones that access database records.

The empty set expression $\emp_\typmeta$ evaluates to an empty sequence $[]$, a literal mention of a scalar value $\vsmeta$ is an expression that evaluates to a singleton sequence $[\vsmeta]$, and union $\emeta_1 \cup \emeta_2$ evaluates its sub-expressions to two sequences and then evaluates nondeterministically to some permutation of the concatenation of those sequences. The expression $3 \cup 4 \cup 4 \cup \varnothing_\tint$ therefore evaluates to the sequence $[3,4,4]$, the sequence $[4,3,4]$, or the sequence $[4,4,3]$.

A named type {\tbn} evaluates to a sequence with one element $\refval{\idn}{}$ for each tuple of that type in the database. In our running example, $\tbname{Movie}$ would evaluate to some permutation of $[\refval{\idname{7}}{}, \refval{\idname{8}}{}, \refval{\idname{9}}{}]$, where the three identifiers reference the three movies in our database.

The syntax and static and dynamic semantics of these basic expressions is given in  \cref{figsembasic}. Many evaluation rules boil down to underlying operations on sequences: the judgment $\left[ \dsetmetax{1} , \dsetmetax{2} \right] \approx \dsetmeta$ in \textsc{E-union} holds whenever $\dsetmeta$ is some permutation of $\dsetmetax{1}$ concatenated with $\dsetmetax{2}$. (In general, we will use $\approx$ to describe sequences equivalent up to permutation.) This high degree of nondeterminism will be a common pattern: when we don't care about the order of results, this nondeterminism allows the underlying database engine to return results in whatever order is most efficient.

The judgment form $\left[ F_P \mid P \in F \right] \approx \dsetmeta$ introduced in \textsc{E-name} is a judgment describing filtered sequence (or set) comprehension: for each element in a finite set or sequence $F$ that matches the pattern $P$, we can define $F_P$, which is either a sequence $\dsetmetax{P}$ or a set that is then coerced into a sequence. The resulting $\dsetmeta$ is some permutation of all those $F_P$ sequences concatenated. In functional programming terms, the judgment represents a filter, then a map, then a concatenation, and then a nondeterministic permutation.

\subsection{Projection}\label{secrefvalues}

\begin{figure}
\begin{tabular}{lrll}
$\emeta$ & $::=$ & $\ldots \mid e \cdot \lbn \mid \emeta \bdot \lbn [\tbn] \mid \ldots$ &
  expressions (continued)
\end{tabular}

\begin{mathpar}

\inferrule[(T-proj-db)]
{
\tctxn \vdash_{\schema} \tmap{e}{\reft{\tbn}{\tmap{\lbnx{1}}{\typmeta_1}{\cardmeta_1},\ldots,\tmap{\lbnx{n}}{\typmeta_n}{\cardmeta_n}}}{\cardmeta'}
\\\\
\lbn \notin \{ \lbnx{1}, \ldots, \lbnx{n} \}
\\
\schema(\tbn) = \rec{\ldots\recsep\tmap{\lbn}{\tmeta}{\cardmeta}\recsep\ldots}
}
{\tctxn \vdash_{\schema} \tmap{e \cdot \lbn}{\tmeta}{\cardmeta \times \cardmeta'}}

\inferrule[(T-proj-ext)]
{
\tctxn \vdash_{\schema} \tmap{e}{\reft{\tbn}{\ldots\recsep\tmap{\lbn}{\typmeta}{\cardmeta}\recsep\ldots}}{\cardmeta'}
}
{\tctxn \vdash_{\schema} \tmap{e \cdot \lbn}{\typmeta}{\cardmeta \times \cardmeta'}}
\\\\
\inferrule[(T-backlink)]
{
\schema(\tbn) = \rec{\ldots, \tmap{\lbn}{\tmeta}{\cardmeta}\recsep\ldots}
\\
\tmeta = \reft{\tbnprime}{\tmap{\lbnx{1}}{\tsmeta_1}{\cardmeta_1}\recsep\ldots\recsep\tmap{\lbnx{n}}{\tsmeta_n}{\cardmeta_n}}
\\
\tctxn \vdash_{\schema} \tmap{e}{\reft{\tbnprime}{\ldots}}{\cardmeta'}
}
{\tctxn \vdash_{\schema} \tmap{\emeta \bdot \lbn [\tbn]}{
\reft{\tbn}{\tmap{\lbnx{1}}{\tsmeta_1}{\cardmeta_1}\recsep\ldots\recsep\tmap{\lbnx{n}}{\tsmeta_n}{\cardmeta_n}}
}{\card{0}{\infty}}}
\\\\

\inferrule[(E-proj)]
{
\deval{\vctxn}{\dbn_a}{\dbn_b}{\emeta}{\dsetmetax{e}}
\\\\
 \left[ \mathsf{project}(\dbn_\textit{init}, \lbn, \dmeta) \mid \dmeta \in \dsetmetax{e} \right] \approx \dsetmetax{b}
}
{\deval{\vctxn}{\dbn_a}{\dbn_b}{\emeta \cdot \lbn}{\dsetmetax{b}}}

\inferrule[(E-backlink)]
{
\deval{\vctxn}{\dbn_a}{\dbn_b}{\emeta}{\dsetmetax{e}}
\\\\
 \left[ \textsf{seek}(\dbn_\textit{init}, \tbn, \lbn, \idn) \mid \refval{\idn}{\ldots} \in \dsetmetax{e} \right] \approx \dsetmetax{b}
}
{\deval{\vctxn}{\dbn_a}{\dbn_b}{\emeta \bdot \lbn [\tbn]}{\dsetmetax{b}}}

\end{mathpar}

\begin{align*}
 \mathsf{project}(\dbn, \lbn, \dmeta) & = \begin{cases}
  \dsetmetax{L} & \text{if } W = \refval{\idn}{\ldots\recsep \umaprec{\lbn}{\dsetmetax{L}}\recsep\ldots} \\
  \vsetmeta & \text{otherwise, if } W = \refval{\idn}{\ldots} \text{ and } \left(\idn, \tbn, \lkn, \rec{{\ldots\recsep\maprec{\lbn}{\vsetmeta}},\ldots}\right) \in \dbn 
    \end{cases}
\\
\mathsf{seek}(\dbn, \tbn, \lbn, \idn) &=
\left\{
\refxval{\idnprime}{\shapevalmeta}
\mid
\left(
\idnprime, \tbn, \lkn, \rec{\ldots\recsep\maprec{\lbn}{\vsetmeta}\recsep\ldots}
\right)
\in \dbn
\wedge
\refxval{\idn}{\shapevalmeta} \in \vsetmeta
\right\}
\end{align*}

\caption{Syntax and semantics of projection}
\label{figsemproj}
\end{figure}

The rule \textsc{E-name} in \cref{figsembasic} returns sequences of object values, but does not describe any elimination forms for these values. The simplest way of using object values is by projecting labels from them. In our running example, if $e_\mathit{tr}$ is an expression that evaluates to the singleton object value $\refval{\idname{7}}{}$ referencing the movie \textit{Transistors}, we have the following:
\begin{align*}
 \tbname{Movie} \cdot \lbname{title} & \approx \left[ \stringlit{Transistors},  \stringlit{Interception}, \stringlit{Open Hammer} \right] \\
 e_\textit{tr} \cdot \lbname{title} & = \left[ \stringlit{Transistors}\right] \\
 e_\textit{tr} \cdot \lbname{actors} \cdot \lbname{character} & \approx \left[ \stringlit{Meg Tech}, \stringlit{Sam Man} \right] \\
 e_\textit{tr} \cdot \lbname{directors} & = \left[\refval{\idname{1}}{}\right]
 \\
 e_\textit{tr} \cdot \lbname{directors} \cdot \lbname{name} & = \left[{\stringlit{Michael Cove}}\right]
\intertext{From these examples, it should be apparent that projecting a property like {\lbname{title}} from a reference $\refval{\idname{1}}{}$ or a link like $\lbname{directors}$ from a reference $\refval{\idname{3}}{}$ will necessarily involve looking up the corresponding records in the database store (written respectively as $R_\mathit{tr}$ and $R_\mathit{mc}$ when we set up our running example in the introduction to \Cref{sec-data-at-rest}). The link property {\lbname{character}} is not stored in the database store entry for a person, so when we evaluate $e_\textit{tr} \cdot \lbname{actors}$, the information about which character an actor played must be stored in the projected object value in order for us to later access it:}
 e_\textit{tr} \cdot \lbname{actors} & \approx \left[
 \refval{\idname{2}}{\imaprec{\lbname{character}}{[\stringlit{Meg Tech}]}}
        ,
        \refval{\idname{5}}{\imaprec{\lbname{character}}{[\stringlit{Sam Man}}]}
 \right]
\end{align*}
This behavior of object values --- storing values internally to the object value and to preferring to project the internal value over any value in the database store --- is captured in \cref{figsemproj} by the rule \textsc{E-proj} and the two cases of the helper function $\mathsf{project}(\dbn, \lbn, \dmeta)$.

A key advantage of the graph-relational model over NoSQL models is that it's possible to cleanly and (in the context of the Gel implementation) efficiently query the reverse of relationships; after all, these relationships are ultimately captured in PostgreSQL link tables with multiple indexed columns containing foreign keys, as illustrated in \cref{figtable}. The backlink expression $\emeta \bdot \lbn [\tbn]$ captures these reverse lookups; if
$\emeta_\textit{\textit{cn}}$ evaluates to the singleton object value $\refval{\idname{11}}{}$ for the person named ``Christopher Nolens,'' a backwards projection of {\lbname{directors}} finds all the movies where that person was the director:
\begin{align*}
 \emeta_\textit{cn} \bdot \lbname{director} [\tbname{Movie}]
 & \approx \left[ \refval{\idname{8}}{}, \refval{\idname{9}}{} \right] \\
 \emeta_\textit{cn} \bdot \lbname{director} [\tbname{Movie}] \cdot \lbname{title}
 & \approx \left[ \stringlit{Interception}, \stringlit{Open Hammer} \right]  \\
  \emeta_\textit{cn} \bdot \lbname{actor} [\tbname{Movie}] & = []
 \intertext{If $\emeta_{\textit{sm}}$ evaluates to the singleton object value $\refval{\idname{6}}{}$ for the person ``Sillier Murphy,'' then a backwards projection of {\lbname{actors}} finds all the movies where that person played a character, and attaches to those values the link property of the character that person played:}
 \emeta_\textit{\textit{sm}} \bdot \lbname{actors}  [\tbname{Movie}]
 & \approx \left[ \refval{\idname{8}}{ \imaprec{\lbname{character}}{[\stringlit{Fissure}]} }, \refval{\idname{9}}{ \imaprec{\lbname{character}}{[\stringlit{Doc Boom}]}} \right] \\
 \emeta_\textit{\textit{sm}} \bdot \lbname{actors}  [\tbname{Movie}] \cdot \lbname{title}
 & \approx \left[ \stringlit{Interception}, \stringlit{Open Hammer} \right] \\
 \emeta_\textit{\textit{sm}} \bdot \lbname{actors}  [\tbname{Movie}] \cdot \lbname{character}
 & \approx \left[ \stringlit{Fissure}, \stringlit{Doc Boom} \right]
\end{align*}
This behavior is captured in the rule \textsc{E-backlink} and the helper function $\mathsf{seek}(\dbn, \tbn, \lbn, \idn)$.

Backwards links remove all information about cardinality: the rule \textsc{T-backlink} results in the cardinality constraint $\card{0}{\infty}$. In EdgeQL, a schema can say a property is exclusive (or unique) in the same way that a relational database might say that users have to have an exclusive (or unique) email address. If we backwards-project such a constrained property from an expression with cardinality $\card{i}{1}$, then the resulting expression has cardinality $\card{0}{1}$. Formalizing uniqueness constraints, and the way they can make cardinality inference more specific, is left for future work.

\subsection{Shapes}
\label{sec-exp-shapes}

\begin{figure}
\begin{tabular}{lrll}
$\emeta$ & $::=$ & $\ldots \mid x \mid e \apply x.S \mid \ldots$ &
  expressions (continued)\\
$S$ & $::=$ & $\rec{\maprec{\lbnx{1}}{\emeta_1}\recsep\ldots\recsep\maprec{\lbnx{n}}{\emeta_n}}$ & shapes
\end{tabular}

\begin{mathpar}

\inferrule[(T-shaping)]
{
\tctxn \vdash_{\schema} \tmap{e}{\typmeta}{\cardmeta}
\\
\typmeta = \reft{\tbn}{\ldots}
\\\\
\tctxn, \tmap{x}{\typmeta}{\card{1}{1}} \vdash_{\schema} \tmap{\emeta_1}{\typmeta_1}{\cardmeta_1}
\\
\cdots
\\
\tctxn, \tmap{x}{\typmeta}{\card{1}{1}} \vdash_{\schema} \tmap{\emeta_n}{\typmeta_n}{\cardmeta_n}
\\\\
\typmeta' = \typmeta \uplus \rec{
 \tmap{\lbnx{1}}{\typmeta_1}{\cardmeta_1}
 \recsep
 \ldots
 \recsep
 \tmap{\lbnx{n}}{\typmeta_n}{\cardmeta_n}
 }
}
{\tctxn \vdash_{\schema} \tmap{e \apply x.\rec{\maprec{\lbnx{1}}{\emeta_1}\recsep\ldots\recsep\maprec{\lbnx{n}}{\emeta_n}}}{\typmeta'}{\cardmeta }}

\\\\

\inferrule[(E-shaping)]
{
\seval{\vctxn}{\dbn_0}{\dbn_1}{\emeta}{[\dmeta_{1}, \ldots, \dmeta_{n}]}
\\\\
\seval{\vctxn, x \mapsto [\dmeta_1]}{\dbn_1}{\dbn_2}{S}{\shapevalmeta_1}
\\
\cdots
\\
\seval{\vctxn, x \mapsto [\dmeta_n]}{\dbn_n}{\dbn_{n+1}}{S}{\shapevalmeta_n}
}
{\seval{\vctxn}{\dbn_0}{\dbn_{n+1}}{\emeta \apply x.S}{[\dmeta_{1} \uplus \shapevalmeta_1, \ldots, \dmeta_{n} \uplus \shapevalmeta_n]}}

\inferrule[(E-shape)]
{
\seval{\vctxn}{\dbn_1}{\dbn_2}{\emeta_1}{\dmeta_1}
\\
\cdots
\\
\seval{\vctxn}{\dbn_{n}}{\dbn_{n+1}}{\emeta_n}{\dmeta_n}
}
{\seval{\vctxn}{\dbn_1}{\dbn_{n+1}}
{\rec{\maprec{\lbnx{1}}{\emeta_1}\recsep\ldots\recsep\maprec{\lbnx{n}}{\emeta_n}}}
{\rec{\maprec{\lbnx{1}}{\dmeta_1}\recsep\ldots\recsep\maprec{\lbnx{n}}{\dmeta_{n}}}}
}

\end{mathpar}

\caption{Syntax and semantics of shapes: the rules \textsc{E-shaping} and \textsc{E-shape} introduce $\seval{\vctxn}{\dbn}{\dbn'}{S}{\shapevalmeta}$ as a new judgment.}
\label{figsemshape}
\end{figure}

The syntax forms we have presented thus far will only ever attach link properties like $\lbname{character}$ to object values, but as we alluded to in \cref{sec-general-values}, we want to be able to extend object values by attaching arbitrary nested information. This is achieved with the \textit{shaping expression} $\emeta \apply x.S$, where the shape $S$ has the form $\rec{
\maprec{\lbnx{1}}{\emeta_1}
\recsep
\ldots
\recsep
\maprec{\lbnx{n}}{\emeta_{n}}
}$. The mappings $\maprec{\lbnx{i}}{\emeta_i}$ explain how to derive new property values from a single object value represented by the variable $x$. As before, we start with a few illustrative cases based on our running example:
\begin{align*}
\tbname{Movie} \apply {x.\rec{\maprec{\lbname{rating}}{4}}}
& \approx
\left[
\refval{\idname{7}}{\maprec{\lbname{rating}}{[4]}}, \refval{\idname{8}}{\maprec{\lbname{rating}}{[4]}}, \refval{\idname{9}}{\maprec{\lbname{rating}}{[4]}}
\right]
\\
\tbname{Movie} \apply {x.\rec{\maprec{\lbname{year}}{x \cdot \lbname{year}}}}
& \approx
\left[
\refval{\idname{7}}{\maprec{\lbname{year}}{[2007]}}, \refval{\idname{8}}{\maprec{\lbname{rating}}{[2010]}}, \refval{\idname{9}}{\maprec{\lbname{rating}}{[2024]}}
\right]
\\
\tbname{Movie} \apply {x.\rec{}}
& \approx
[\refval{\idname{7}}{}, \refval{\idname{8}}{}, \refval{\idname{9}}{}]
\\
\tbname{Movie} \apply {x.\rec{\maprec{\lbname{rating}}{4}}} \apply {y.\rec{}}
& \approx
\left[
\refval{\idname{7}}{\imaprec{\lbname{rating}}{[4]}}, \refval{\idname{8}}{\imaprec{\lbname{rating}}{[4]}}, \refval{\idname{9}}{\imaprec{\lbname{rating}}{[4]}}
\right]
\end{align*}

The first example adds a new property, {\lbname{rating}}, to the movie references that are returned, whereas the second example loads in an existing property of movies, {\lbname{year}}, that is already present in the database store. The third example may give the impression that applying the trivial shape, $x.{\rec{}}$, doesn't do anything, but that isn't quite true: when object values that contain computed properties are re-shaped, any record entries that aren't explicitly mentioned in the final shape are preserved but marked as non-visible. In the last example, the ratings are rendered invisible by reshaping; the mappings of the form $\maprec{\lbname{rating}}{[4]}$ are replaced by mappings $\imaprec{\lbname{rating}}{[4]}$. Record entry visibility marks were mentioned in \cref{sec-general-values}, but this is the first time we are seeing them used; they do not affect the evaluation of expressions or projection, but record entries of the form $\imaprec{\lbn}{\dsetmeta}$ are intended to be omitted when results are displayed or serialized.

Nested shaping is the essential tool for querying object-shaped data in the graph-relational database model; evaluating this expression would get us the structure in \cref{figmov}:
\begin{align*}
\tbname{Movie} \apply x.\lrec \;\; &  \lbname{title} := x\cdot\lbname{title}\recsep\\
& \lbname{year}  := x\cdot\lbname{year}\recsep\\
& \lbname{directors}  := (x\cdot\lbname{directors}) \apply y.\rec{
  \; \lbname{name} := y\cdot\lbname{name}\recsep
  \; \lbname{age} := y\cdot\lbname{age}
  \;}
\recsep\\
& \lbname{actors}  := (x\cdot\lbname{actors}) \apply y.\rec{
  \; \lbname{name} := y\cdot\lbname{name}\recsep
  \; \lbname{character} := y\cdot\lbname{character}
  \; }
    \;\; \rrec
\intertext{In the context of our running example, this will evaluate to a thee values, one of which is the following:}
\lrec \;\;
& \lbname{title} := [\;\stringlit{Transistors}\;]\recsep\\
& \lbname{year}  := [\;2007\;]\recsep\\
& \lbname{directors} := [\;\refval{\idname{1}}{\;
    \maprec{\lbname{name}}{[\,\stringlit{Michael Cove}\,]}
    \recsep\;
    \maprec{\lbname{age}}{[\,53\,]}\;
    }\;]
\recsep\\
& \lbname{actors}  := [\;
  \refval{\idname{2}}{\;
    \maprec{\lbname{name}}{[\,\stringlit{Megan Wolf}\,]}
    \recsep\;
    \maprec{\lbname{character}}{[\,\stringlit{Meg Tech}\,]}\;
  }
  \recsep\\
&\qquad\qquad\;\,\,
  \refval{\idname{5}}{\;
    \maprec{\lbname{name}}{[\,\stringlit{Shy Andbuff}\,]}
    \recsep\;
    \maprec{\lbname{character}}{[\,\stringlit{Sam Man}\,]}\;
  }
\;]
    \;\; \rrec
\end{align*}
Unlike the stored value $R_\mathit{tr}$ presented at the beginning of \cref{sec-data-at-rest}, this value explicitly contains all the information from the JSON object that we saw in \cref{figmov} as visible record entries. (We will return to JSON serialization in \cref{sec-type-directed-serial}.)

This example makes it apparent that record entries of the form $\maprec{\lbn}{x\cdot \lbn}$ in shapes are extremely common in many use cases. The concrete syntax for this query shown in \cref{sec-intro-schema-type} demonstrates EdgeQL's syntactic shorthand for this pattern, analogous to shorthand property names in JavaScript.
That shorthand syntax bears a striking resemblance to a queries in some API configuration languages, particularly GraphQL \cite{Hartig18www,GraphQLwebpage}. There is substantial overlap between the goals of GraphQL and EdgeQL, and the language of GraphQL queries bears a strong resemblance to a limited subset of EdgeQL that is primarily concerned with shaping. However, GraphQL's perspective on the storage and mutation of data is fundamentally different: GraphQL mutations don't have a defined semantics, and the relationship between GraphQL types and the representation of data at rest is entirely up to the programmer, rendering may graph-relational database concepts, like backwards projection, meaningless.

Shapes are defined in \cref{figsemshape} with the help of a few additional definitions. One is $\seval{\vctxn}{\dbn}{\dbn'}{S}{\shapevalmeta}$, which manages the threading of state through the evaluation of the many parts of a shape record. The other is the right-biased \textit{record extension}, which is defined both on object types as ($\tau \uplus \rec{
 \tmap{\lbnx{1}}{\typmeta_1}{\cardmeta_1}
 \recsep
 \ldots
 \recsep
 \tmap{\lbnx{n}}{\typmeta_n}{\cardmeta_n}
 }$) and on object values as
($\dmeta \uplus \rec{
 \maprec{\lbnx{1}}{\dsetmeta_1}
 \recsep
 \ldots
 \recsep
 \maprec{\lbnx{n}}{\dsetmeta_n}
 }$). Both operators include all the mappings on the right-hand side and then add any mappings from the original object type or reference value that are not being shadowed by the new additions. Formally:
\begin{itemize}
    \item
$\reft{\tbn}{
 \tmap{\lbnx{1}}{\typmeta_1}{\cardmeta_1}
 \recsep
 \ldots
 \recsep
 \tmap{\lbnx{n}}{\typmeta_n}{\cardmeta_n}
} \uplus \rec{
 \tmap{\lbnprimex{1}}{\typmeta'_1}{\cardmeta'_1}
 \recsep
 \ldots
 \recsep
 \tmap{\lbnprimex{n}}{\typmeta'_n}{\cardmeta'_n}
}$
\\
$ ~ = \reft{\tbn}{\left\{
\tmap{\lbnx{i}}{\typmeta_i}{\cardmeta_i}
\mid
\lbnx{i} \notin \{ \lbnprimex{1}, \ldots, \lbnprimex{n} \}
\right\} \cup \left\{
 \tmap{\lbnprimex{1}}{\typmeta'_1}{\cardmeta'_1}
 \recsep
 \ldots
 \recsep
 \tmap{\lbnprimex{n}}{\typmeta'_n}{\cardmeta'_n}
\right\} }$

\item
$\refval{\idn}{
 \umaprec{\lbnx{1}}{\dsetmetax{1}}
 \recsep
 \ldots
 \recsep
 \umaprec{\lbnx{n}}{\dsetmetax{n}}
} \uplus \rec{
 \maprec{\lbnprimex{1}}{\dsetmetaprimex{1}}
 \recsep
 \ldots
 \recsep
 \maprec{\lbnprimex{n}}{\dsetmetaprimex{n}}
}$
\\
$ ~ = \refval{\idn}{\left\{
 \imaprec{\lbnx{i}}{\dsetmetax{i}}
 \mid
 \lbnx{i} \notin \{ \lbnprimex{1}, \ldots, \lbnprimex{n} \}
\right\} \cup \left\{
 \maprec{\lbnprimex{1}}{\dsetmetaprimex{1}}
 \recsep
 \ldots
 \recsep
 \maprec{\lbnprimex{n}}{\dsetmetaprimex{n}}
\right\}
}$
\end{itemize}
Record extension for values $\dmeta \uplus \shapevalmeta$ is also responsible for rendering all mappings that are carried over from the original object value invisible by changing every $\umaprec{\lbnx{i}}{\dsetmetax{i}}$ mapping to $\imaprec{\lbnx{i}}{\dsetmetax{i}}$.



\subsection{Built-in functions}\label{sec-exp-builtin}

\begin{figure}
\begin{tabular}{lrll}
$f$ & $::=$ &
$\mathtt{count} \mid \mathtt{eq} \mid \mathtt{append} \mid \mathtt{coalesce} \mid \mathtt{any} \mid \ldots$
 & built-in functions
\\
$p$ & $::=$ & $\pone \mid \popt \mid \pany$
& parameter modifiers \\
$\emeta$ & $::=$ & $
\ldots \mid
f^\dagger(\emeta_1, \ldots, \emeta_n) \mid \ldots
$ &
 expressions (continued)
\end{tabular}

\begin{mathpar}
\inferrule[(T-function)]
{
f : \left( \typmeta_1 \hash p_1, \ldots \typmeta_n \hash p_n \right) \rightarrow \typmeta' \hash \cardmeta'
\\\\
\Gamma \vdash_\schema \tmap{\emeta_1}{\typmeta_1}{\cardmeta_1}
\quad
\cdots
\quad
\Gamma \vdash_\schema \tmap{\emeta_n}{\typmeta_n}{\cardmeta_n}
\\
\cardmeta_1 \leq \llbracket p_1 \rrbracket
\quad
\cdots
\quad
\cardmeta_n \leq \llbracket p_n \rrbracket
}
{\tctxn \vdash_{\schema} \tmap{f^\dagger(\emeta_1, \ldots, \emeta_n)}{\typmeta'}{\cardmeta'}}

\inferrule[(E-function)]
{
\seval{\vctxn}{\dbn_1}{\dbn_2}{\emeta_1}{\dsetmetax{1}}
\quad
\cdots
\quad
\seval{\vctxn}{\dbn_n}{\dbn_{n+1}}{\emeta_n}{\dsetmetax{n}}
\\
\llbracket f \rrbracket (\dsetmetax{1},\ldots,\dsetmetax{n}) \approx \dsetmeta
}
{\seval{\vctxn}{\dbn_1}{\dbn_{n+1}}{f^\dagger(\emeta_1, \ldots, \emeta_n)}{\dsetmeta}}

\end{mathpar}

\caption{Syntax and semantics of built-in functions: the rule \textsc{T-function} introduces $f : \left( \typmeta_1 \hash p_1, \ldots \typmeta_n \hash p_n \right) \rightarrow \typmeta'$ as a new judgment.}
\label{figsemfunction}
\end{figure}

The ability to support nontrivial transformations on data within database queries is a significant part of what distinguishes Gel from systems like GraphQL that are primarily concerned with the shape of data, and one of the ways this is managed while keeping the language total is to have an extensive standard library of functions. \Cref{figsemfunction} introduces the structure of built-in functions and gives a few examples.

Built-in functions are typed according to a typing judgment $f : \left( \typmeta_1 \hash p_1, \ldots \typmeta_n \hash p_n \right) \rightarrow \typmeta' \hash \cardmeta'$, where the parameter modifiers $p$ correspond to three of the cardinality modes: $\llbracket \pone \rrbracket = \card{1}{1}$, $\llbracket \popt \rrbracket = \card{0}{1}$, and $\llbracket \pany \rrbracket = \card{0}{\infty}$. The theorems we prove about the language in the \Cref{sec-metatheory} place a number of constraints on built-in functions:
\begin{itemize}
    \item The parameter modifiers and return type and cardinality must be uniquely defined by the function $f$ and the sequence of types $\typmeta_1, \ldots, \typmeta_n$. This requirement allows significant ``overloading'' of functions, works perfectly well with variadic functions that take any number of arguments, and allows functions like equality to be polymorphic by letting $\mathtt{eq} : (\typmeta \hash \pone, \typmeta \hash \pone) \rightarrow \tbool \hash \card{1}{1}$ for every type $\typmeta$.
    \item If arguments $\dsetmetax{1},\ldots,\dsetmetax{n}$ match the type and cardinality indicated by the  typing judgment for built-ins, then the interpretation of that built-in function (written as $\llbracket f \rrbracket$ in \textsc{E-function}) must be defined and must return a sequence of computed values with the type and cardinality described by the typing judgment for built-in functions.
\end{itemize}
Here are a few illustrative examples of built-in functions:
\begin{itemize}
    \item $\mathtt{count}(\typmeta \hash \pany) \rightarrow \tint \hash \card{1}{1}$ returns the number of values in its argument.
    \item $\mathtt{eq}(\typmeta \hash \pone, \typmeta \hash \pone) \rightarrow \tbool \hash \card{1}{1}$ returns {\ctt} if both its arguments are the same, and otherwise returns {\cff}.
    \item $\mathtt{append}(\tstring \hash \pone, \tstring \hash \pone) \rightarrow \tstring \hash \card{1}{1}$ appends two strings.
    \item $\mathtt{coalesce}(\typmeta \hash \popt, \typmeta \hash \pany) \rightarrow \typmeta \hash [0, \infty]$ returns the first argument if it's nonempty, and otherwise returns the second argument.
    \item $\mathtt{any}(\tbool \hash \pany) \rightarrow \tbool \hash [1, 1]$ returns {\ctt} if its argument contains any {\ctt} values, and otherwise returns {\cff}.
\end{itemize}
Many built-in functions have special syntax in EdgeQL's concrete syntax: $\mathtt{eq}(x,y)$ is, for example, written \verb|x=y|, and $\mathtt{coalesce}(x,y)$ is written \verb|x??y|.

The rule \textsc{T-function} is the first time we have encountered an expression that might fail to have a type because the cardinality of a sub-expression is wrong: because $\card{1}{\infty} \not\leq \card{1}{1}$, you cannot give a type to the expression $\mathtt{append}^\dagger(\stringlit{Hello } \cup \stringlit{Bye }, \stringlit{Alice} \cup \stringlit{Bob})$. In some cases this is desirable, but in EdgeQL we would like all function calls to be appropriately lifted so that this append operation results in four strings which say hello and goodbye to Alice and Bob. This will be accomplished in the next section with the introduction of derived expressions. (We're adopting a convention of marking syntax forms that are only used via derived forms with a dagger symbol ($\dagger$).)

\subsection{Computation in expressions}\label{seceprog}

\begin{figure}
\begin{tabular}{lrll}
$\emeta$ & $::=$ & $
\ldots \mid \mathsf{if}^\dagger(\emeta; \emeta_t; \emeta_f)\mid \mathsf{with}(\emeta_1; x.\emeta_2)
\mid \mathsf{for}(\emeta_1; x.\emeta_2)
\mid \mathsf{orderby}(\emeta_1; x.\emeta_2) \mid \ldots$
\end{tabular}

\begin{mathpar}

\inferrule[(T-if)]
{
\tctxn \vdash_{\schema} \tmap{\emeta}{\tbool}{\card{1}{1}}
\\\\
\tctxn \vdash_{\schema} \tmap{\emeta_t}{\typmeta}{\card{i_1}{i_2}}
\\
\tctxn \vdash_{\schema} \tmap{\emeta_f}{\typmeta}{\card{i_3}{i_4}}
\\\\
\cardmeta' = [\mathrm{min}(i_1, i_3), \mathrm{max}(i_2,i_4)]
}
{
{\tctxn \vdash_{\schema} \tmap{\mathsf{if}^\dagger(\emeta; \emeta_t; x.\emeta_f)}{\typmeta}{ \cardmeta'}}
}

\inferrule[(T-with)]
{
\tctxn \vdash_{\schema} \tmap{\emeta_1}{\typmeta_1}{\cardmeta_1}
\\\\
\tctxn, \tmap{x}{\typmeta_1}{\cardmeta_1} \vdash_{\schema} \tmap{\emeta_2}{\typmeta_2}{\cardmeta_2}
}
{\tctxn \vdash_{\schema} \tmap{\mathsf{with}(\emeta_1; x.\emeta_2)}{\typmeta_2}{\cardmeta_2}}

\inferrule[(T-for)]
{
\tctxn \vdash_{\schema} \tmap{\emeta_1}{\typmeta_1}{\cardmeta_1}
\\\\
\tctxn, \tmap{x}{\typmeta_1}{[1,1]} \vdash_{\schema} \tmap{\emeta_2}{\typmeta_2}{\cardmeta_2}
}
{\tctxn \vdash_{\schema} \tmap{\mathsf{for}(\emeta_1; x.\emeta_2)}{\typmeta_2}{\cardmeta_1 \times \cardmeta_2}}

\inferrule[(T-orderby)]
{
\tctxn \vdash_{\schema} \tmap{\emeta_1}{\typmeta_1}{\cardmeta}
\\\\
\tctxn, \tmap{x}{\typmeta_1}{[1,1]} \vdash_{\schema} \tmap{\emeta_2}{\typmeta_2}{\card{0}{1}}
}
{\tctxn \vdash_{\schema} \tmap{\mathsf{orderby}(\emeta_1; x.\emeta_2)}{\typmeta_2}{\cardmeta}}

\inferrule[(E-if)]
{\seval{\vctxn}{\dbn_a}{\dbn_b}{\emeta}{[\dmeta_\textit{cond}]}
\\\\
{\begin{cases}
\seval{\vctxn}{\dbn_b}{\dbn_c}{\emeta_t}{\dsetmeta} & \text{if } \dmeta_\textit{cond} = \ctt\\
\seval{\vctxn}{\dbn_b}{\dbn_c}{\emeta_f}{\dsetmeta} & \text{if } \dmeta_\textit{cond} = \cff
\end{cases}}
}
{\seval{\vctxn}{\dbn_a}{\dbn_{c}}{\mathsf{if}^\dagger(\emeta; \emeta_t; \emeta_f)}{\dsetmeta}}

\inferrule[(E-with)]
{\seval{\vctxn}{\dbn_a}{\dbn_b}{\emeta_1}{\dsetmeta}
\\\\
\seval{\vctxn, x \mapsto \dsetmeta}{\dbn_b}{\dbn_c}{\emeta_2}{\dsetmetaprime}
}
{\seval{\vctxn}{\dbn_a}{\dbn_{c}}{\mathsf{with}(\emeta_1; x.\emeta_2)}{\dsetmetaprime}}

\inferrule[(E-for)]
{\seval{\vctxn}{\dbn_0}{\dbn_1}{\emeta}{[\dmeta_1, \ldots, \dmeta_n]}
\\\\
\seval{\vctxn, x \mapsto \dmeta_1}{\dbn_1}{\dbn_2}{\emeta'}{\dsetmetaprimex{1}}
\quad
\cdots
\quad
\seval{\vctxn, x \mapsto \dmeta_n}{\dbn_n}{\dbn_{n+1}}{\emeta'}{\dsetmetaprimex{n}}
\\
\left[\dsetmetaprimex{1}, \ldots ,\dsetmetaprimex{n}\right] \approx \dsetmetaprime
}
{\seval{\vctxn}{\dbn_0}{\dbn_{n+1}}{\mathsf{for}(\emeta; x.\emeta')}{\dsetmetaprime}}

\inferrule[(E-orderby)]
{\seval{\vctxn}{\dbn_0}{\dbn_1}{\emeta}{[\dmeta_1, \ldots, \dmeta_n]}
\\\\
\seval{\vctxn, x \mapsto \dmeta_1}{\dbn_1}{\dbn_2}{\emeta'}{\dsetmetaprime{1}}
\quad
\cdots
\quad
\seval{\vctxn, x \mapsto \dmeta_n}{\dbn_n}{\dbn_{n+1}}{\emeta'}{\dsetmetaprimex{n}}
\\
{\textsf{sort}}\left(
\left( \dmeta_1, \dsetmetaprimex{1} \right),
\ldots,
\left( \dmeta_n, \dsetmetaprimex{n}  \right)
\right) = \dsetmetax{\textit{ordered}}
}
{\seval{\vctxn}{\dbn_0}{\dbn_{n+1}}{\mathsf{orderby}(\emeta; x.\emeta')}{\dsetmetax{\textit{ordered}}}}

\end{mathpar}

\caption{Syntax and semantics of computations: the \textsf{sort(\ldots)} operation in \textsf{E-orderby} returns a sequence containing the first element of the tuples passed in, sorted by the second element of the tuples passed in. The second element must be a sequence of length zero or one; empty sequences are treated as smaller than nonempty sequences.}
\label{figsemprog}
\end{figure}

Alongside built-in functions, \cref{figsemprog} describes some expressions that allow programmers to more directly manipulate information in the graph-relational model.
With-binding\footnote{SQL programmers say ``with'' where functional programmers say ``let.''} and for-loops have the same syntactic structure, but the former evaluates the sub-expression once with the variable bound to the full result of evaluating the first argument, and the latter evaluates the sub-expression once for each value resulting from evaluating the first argument.

Conditional expressions provide a simple example of derived forms for syntax. The requirement that the scrutinee of a conditional expression have the cardinality mode $\card{1}{1}$ is more restrictive than the EdgeQL implementation, but we can derive the  if-then-else expression we actually want as
$\mathsf{if}(\emeta; \emeta_t; \emeta_f) \doteq \mathsf{for}(\emeta; x.\,\mathsf{if}^\dagger(x; \emeta_f; \emeta_t))$. We can similarly define the critical filter function $\mathsf{filter}(\emeta_1; x.\emeta_2)$, equivalent to SQL's \verb|WHERE|, as
$
\mathsf{for}(
e_1;
x.\;
\mathsf{if}(\mathtt{any}^\dagger(e_2);
x;
\emp
)
)
$.
Another useful derived form is $\mathsf{optional\_for}(e_1; x.e_2)$ which behaves like a for-loop unless the scrutinee evaluates to zero values, in which case it still runs the  sub-expression but with $x$ mapping to the empty sequence; it can be defined with existing machinery like this:
\[\mathsf{optional\_for} \doteq
\mathsf{with}(e_1; y.\;\mathsf{if}(\mathtt{eq}(0, \mathtt{count}^\dagger(y));\; \mathsf{with}(\emp; x.e_2);\; \mathsf{for}(y; x.e_2))\]
Derived forms in this presentation need not be derived forms in the actual implementation, and in both of the last two cases they are definitely not. In full EdgeQL, cardinality inference for filtering expressions interacts with the uniqueness constraints previously mentioned in \cref{secrefvalues}, and actually implementing $\mathsf{optional\_for}$ as described here could greatly increase code size.

Finally, using $\mathsf{with}$, $\mathsf{for}$, and $\mathsf{optional\_for}$, we can present a lifted version of function call, $f(e_1, \ldots, e_2)$, which is not restricted by cardinality, as follows:
\[
f(e_1, \ldots, e_2) \doteq \mathsf{bind}_1(e_1; x_1. \ldots \mathsf{bind}_n(e_n; x_n. \; f^\dagger(x_1, \ldots, x_n))\ldots)
\]
where $\mathsf{bind}_i$ is ``$\mathsf{with}$'' if the $i^\textit{th}$ argument has parameter mode $\pany$, ``$\mathsf{for}$'' if it has the parameter mode $\pone$, and the derived ``$\mathsf{optional\_for}$'' if it has the parameter mode $\popt$. The effect of this derived form is to ensure that function arguments can, in practice, have any cardinality. If required by the parameter mode, arguments will be split apart and their cross-product will be broadcast across multiple calls to the underlying function.

\subsection{Manipulating the database store}\label{secemutate}

\begin{figure}
\begin{tabular}{lrll}
$\emeta$ & $::=$ & $
\ldots \mid
\textsf{insert}\,\tbn\,S \mid \textsf{update}^\dagger\,\emeta\,\textsf{set}\,x.S
$ &
 expressions (continued)\\
\end{tabular}

\begin{mathpar}
\inferrule[(T-insert)]
{
\schema(\tbn) = \rec{\tmap{\lbnx{1}}{\tmeta_1}{\cardmeta_1},\ldots,\tmap{\lbnx{n}}{\tmeta_n}{\cardmeta_n}}
\\
\tctxn \vdash_{\schema} \emeta_1 \div \typmeta_1 / \tmeta_1 \hash \cardmeta_1
\quad
\cdots
\quad
\tctxn \vdash_{\schema} \emeta_n \div \typmeta_n / \tmeta_n \hash \cardmeta_n
}
{\tctxn \vdash_{\schema} \tmap{\mathsf{insert}\,\tbn\,
\rec{\maprec{\lbnx{1}}{\emeta_1}\recsep\ldots\recsep\maprec{\lbnx{n}}{\emeta_n}}
}{\reft{\tbn}{
\tmap{\lbnx{1}}{\typmeta_1}{\cardmeta_1},\ldots,\tmap{\lbnx{n}}{\typmeta_n}{\cardmeta_n}
}}{\card{1}{1}}}

\inferrule[(T-update)]
{
\tctxn \vdash_{\schema} \tmap{\emeta}{\typmeta}{\card{1}{1}}
\\
\typmeta = \reft{\tbn}{\ldots}
\\
\schema(\tbn) = \rec{\tmap{\lbnx{1}}{\tmeta_1}{\cardmeta_1},\ldots,\tmap{\lbnx{n}}{\tmeta_n}{\cardmeta_n}, \ldots}
\\
\tctxn, \tmap{x}{\typmeta}{\card{1}{1}} \vdash_{\schema} \emeta_1 \div \typmeta_1/\tmeta_1 \hash \cardmeta_1
\quad
\cdots
\quad
\tctxn, \tmap{x}{\typmeta}{\card{1}{1}} \vdash_{\schema} \emeta_n \div \typmeta_n/\tmeta_n \hash \cardmeta_n
}
{\tctxn \vdash_{\schema} \tmap{\mathsf{update}^\dagger\,\emeta\,\mathsf{set}\,
x.\rec{\maprec{\lbnx{1}}{\emeta_1}\recsep\ldots\recsep\maprec{\lbnx{n}}{\emeta_n}}
}{\reft{\tbn}{
\tmap{\lbnx{1}}{\typmeta_1}{\cardmeta_1},\ldots,\tmap{\lbnx{n}}{\typmeta_n}{\cardmeta_n}
}}{\card{0}{1}}}

\inferrule[(TS-simple)]
{
\tctxn \vdash_{\schema} \tmap{\emeta}{\tsmeta'}{\cardmeta'}
\\
\cardmeta' \leq \cardmeta
}
{\tctxn \vdash_{\schema} \emeta \div \tsmeta/\tsmeta \hash \cardmeta}

\inferrule[(TS-reference)]
{
\tctxn \vdash_{\schema} \tmap{\emeta}{\typmeta}{\cardmeta'}
\\
\typmeta = \reft{\tbn}{\tmap{\lbnx{1}}{\tsmeta_1}{\cardmeta_1'}, \ldots, \tmap{\lbnx{n}}{\tsmeta_n}{\cardmeta_n'}, \ldots}
\\\\
\cardmeta' \leq \cardmeta \\
\cardmeta'_1 \leq \cardmeta_1 \quad \cdots \quad \cardmeta'_n \subseteq \cardmeta_n
}
{\tctxn \vdash_{\schema} \emeta \div \typmeta/\reft{\tbn}{\tmap{\lbnx{1}}{\tsmeta_1}{\cardmeta_1}, \ldots, \tmap{\lbnx{n}}{\tsmeta_n}{\cardmeta_n}} \hash \cardmeta}

\inferrule[(E-insert)]
{
\schema(\tbn) = \rec{\tmap{\lbnx{1}}{\tmeta_1}{\cardmeta_1},\ldots,\tmap{\lbnx{n}}{\tmeta_n}{\cardmeta_n}}
\\
\deval{\vctxn}{\dbn_1}{\dbn_2}{\emeta_1}{\dsetmetax{1}}
\quad
\cdots
\quad
\deval{\vctxn}{\dbn_n}{\dbn_{n+1}}{\emeta_n}{\dsetmetax{n}}
\\
\dsetmetax{1} : \tmeta_1 \rhd \vsetmetax{1}
\quad
\cdots
\quad
\dsetmetax{n} : \tmeta_n \rhd \vsetmetax{n}
\\
\idn\#\dbn_\textit{init}
\\
\idn\#\dbn_{n+1}
\\
\dbn' = \dbn_{n+1} \cup \left\{\left(
  \idn, \tbn, \lkclosed\,,
   \rec{
     \maprec{\lbnx{1}}{\vsetmetax{1}},\ldots,
     \maprec{\lbnx{n}}{\vsetmetax{n}}
   }
\right)\right\}
}
{\deval{\vctxn}{\dbn_1}{\dbn'}{
\mathsf{insert}\,\tbn\,
\rec{\maprec{\lbnx{1}}{\emeta_1}\recsep\ldots\recsep\maprec{\lbnx{n}}{\emeta_n}}
}
{[
\refval{\idn}{
\imaprec{\lbnx{1}}{\dsetmetax{1}},\ldots,
\imaprec{\lbnx{n}}{\dsetmetax{n}}}
]}
}

\inferrule[(E-update)]
{
\seval{\vctxn}{\dbn_0}{\dbn_1}{\emeta}{[\dmeta]}
\\
\dmeta = \refval{\idn}{\ldots}
\\\\
\seval{\vctxn, x \mapsto [\dmeta]}{\dbn_1}{\dbn_2}{\emeta_1}{\dsetmetax{1}}
\quad
\cdots
\quad
\seval{\vctxn, x \mapsto [\dmeta]}{\dbn_n}{\dbn_{n+1}}{\emeta_n}{\dsetmetax{n}}
\\
\dbn_{n+1} - \left\{\left(\idn, \tbn, \lkn, \rec{
\maprec{\lbnprimex{1}}{\vsetmetaprimex{1}}\recsep\ldots\recsep\maprec{\lbnprimex{n}}{\vsetmetaprimex{n}}
}\right)\right\} = \dbn'
\\
\dbn'\#\idn
\\\\
\schema(\tbn) = \rec{\tmap{\lbnx{1}}{\tmeta_1}{\cardmeta_1},\ldots,\tmap{\lbnx{n}}{\tmeta_n}{\cardmeta_n},\ldots}
\\
\dsetmetax{1} : \tmeta_1 \rhd \vsetmetax{1}
\quad
\cdots
\quad
\dsetmetax{n} : \tmeta_n \rhd \vsetmetax{n}
\\
\dbn'' = \dbn' \cup \left\{\left(
\idn, \tbn, \lkclosed\,, \rec{
\left\{
\maprec{\lbnprimex{i}}{\vsetmetaprimex{i}}
\mid \lbnprimex{i} \notin \{ \lbnx{1}, \ldots, \lbnx{n} \}
\right\}
\cup
\left\{
\maprec{\lbnx{1}}{\vsetmetax{1}},\ldots,\maprec{\lbnx{n}}{\vsetmetax{n}}
\right\}
}
\right)\right\}
\\\\
{\begin{cases}
\dbn_\textit{result} = \dbn_{n+1}, \dsetmetaprime = [] & \text{if }  \idn\#\dbn_{n+1} \text{ or if } \lkn = \lkclosed\\
\dbn_\textit{result} = \dbn'', \dsetmetaprime = [\refval{\idn}{\imaprec{\lbnx{1}}{\dsetmetax{1}},\ldots,\imaprec{\lbnx{n}}{\dsetmetax{n}}}] & \textit{if } \dbn_{n+1} \neq \dbn' \text{ and } \lkn = \lkopen
\end{cases}}
}
{\seval{\vctxn}{\dbn_0}{\dbn_\textit{result}}
{
\mathsf{update}^\dagger\,\emeta\,\mathsf{set}\,x.\rec{\maprec{\lbnx{1}}{\emeta_1}\recsep\ldots\recsep\maprec{\lbnx{n}}{\emeta_n}}
}
{
\dsetmetaprime
}
}

\inferrule[(ES-sequence)]
{ \dmeta_1 : \tmeta \rhd \vmeta_1
\quad
\cdots
\quad
  \dmeta_n : \tmeta \rhd \vmeta_n
}
{[\dmeta_1, \ldots, \dmeta_n] : \tmeta \rhd [\vmeta_1, \ldots, \vmeta_n]}

\inferrule[(ES-simple)]
{ }
{\vsmeta : \tsmeta \rhd \tsmeta}

\inferrule[(ES-reference)]
{ }
{\refval{\idn}{\umaprec{\lbnx{1}}{\vsmeta_1}, \ldots, \umaprec{\lbnx{n}}{\vsmeta_n}, \ldots} : \reft{\tbn}{\tmap{\lbnx{1}}{\tsmeta_1}{\cardmeta_1'}, \ldots, \tmap{\lbnx{n}}{\tsmeta_n}{\cardmeta_n'}} \rhd \refval{\idn}{\imaprec{\lbnx{1}}{\vsmeta_1}, \ldots, \imaprec{\lbnx{n}}{\vsmeta_n}}}
\end{mathpar}

\caption{Syntax and semantics of insertion and update: the static rules introduce a judgment $\tctxn \vdash_{\schema} \emeta \div \typmeta/\tmeta \hash \cardmeta$ for checking expressions against the stored value type and schema present in a schema, and the dynamic rules introduce $\dsetmeta : \tmeta \rhd \vsetmeta$ and $\dmeta : \tmeta \rhd \vmeta$ for stripping everything but link properties off of a value for storage. The apartness judgment $\idn\#\dbn$ holds if no tuple in $\dbn$ has $\idn$ as its first element.}
\label{figseminsertupdate}
\end{figure}

The last stop of our tour of the core calculus for graph-relational programming are two expressions that actually manipulate the database store, insertion and update.
While the language specification could in some ways be dramatically specified if insertion and update were interpreted as top-level commands and not as potentially-deeply-nested expressions, it's incredibly useful to be able to compositionally add subterms ``where they belong'' in an object-shaped structure, as demonstrated by this example adding a movie along with its director:
\begin{align*}
\mathsf{insert}\;\tbname{Movie}\; &
    \lrec &&
      \maprec{\lbname{directors}}{
      \mathsf{insert}\;\tbname{Person}
      \rec{\;
        \maprec{\lbname{name}}{\stringlit{Paul Shiver}},
        \maprec{\lbname{age}}{37},
        \maprec{\lbname{born}}{\stringlit{Earth}}
      \;}
      }
      \recsep \\
      &&&
            \maprec{\lbname{title}}{\stringlit{Frozen Planet}}\recsep\;\;
      \maprec{\lbname{year}}{2011}\recsep\;\;
      \maprec{\lbname{actors}}{\emp}
    \qquad \rrec
\end{align*}

The definitions in \cref{figseminsertupdate} have a couple of features worth explicitly calling out. It is a bit of a subtle point that \textsc{T-update} is only required to capture a subset of the labels defined in the schema, and that the object types in \textsc{TA-reference} can mention a superset of the labels in the schema. The rule \textsc{E-update} includes a right-biased union like the ones we saw in \cref{sec-exp-shapes}.

We lift the update operation to work on multiple expressions the same way we lifted conditionals:
\[\textsf{update}\,\emeta\,\textsf{set}\,x.S
 \doteq \mathsf{for}(\emeta; y.\,\textsf{update}^\dagger\,y\,\textsf{set}\,x.S)\]
The shape parts of the update and insertion expressions do not get lifted; these join $\mathsf{orderby}$ in requiring subterms to have specific cardinality constraints rather than using a lifted or broadcasting interpretation.

\section{Metatheory}
\label{sec-metatheory}

Our static semantics can be read in the conventional way as a functional algorithm or logic program, which leads immediately to the following two theorems about the type system:

\begin{theorem}[Decidability of type synthesis]
    Given schema $\schema$ where $\vdash \schema \,\mathtt{ok}$, context $\tctxn$, and expression $\emeta$, it is decidable whether there exists some type $\typmeta$ and some cardinality mode $\cardmeta$ such that $\tctxn \vdash_\schema \tmap{\emeta}{\typmeta}{\cardmeta}$.
\end{theorem}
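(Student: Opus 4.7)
The plan is to prove this by structural induction on $\emeta$, strengthening the statement to say that one can effectively compute the (unique) pair $(\typmeta, \cardmeta)$ such that $\tctxn \vdash_\schema \tmap{\emeta}{\typmeta}{\cardmeta}$ whenever one exists and reject otherwise. The argument reduces to checking that each typing rule in \cref{figsembasic,figsemproj,figsemshape,figsemfunction,figsemprog,figseminsertupdate} is syntax-directed and that every premise is decidable. For each form of $\emeta$---variable, scalar literal, empty set, union, named type, projection, backlink, shape, built-in call, conditional, $\mathsf{with}$, $\mathsf{for}$, $\mathsf{orderby}$, insertion, and update---at most one rule applies: the only place with two competing rules is projection $\emeta \cdot \lbn$, where T-proj-db and T-proj-ext are mutually exclusive thanks to the disjointness side condition $\lbn \notin \{\lbnx{1}, \ldots, \lbnx{n}\}$ in T-proj-db.

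At each step the algorithm recurses on subexpressions, obtains their synthesized type and cardinality by the induction hypothesis, and then performs the finite checks required by the premises. The only nontrivial decidability obligations are: (i) equality of two synthesized types in T-union, which holds because types are finite syntactic trees over finite alphabets of names and labels; (ii) lookups in $\tctxn$, $\schema$, and in reference type records, all of which are finite partial maps; (iii) cardinality comparisons $\cardmeta \leq \cardmeta'$ together with $+$ and $\times$ on cardinality modes, which are decidable because $\cardmeta$ ranges over a five-element set; and (iv) resolving built-in signatures $f : (\typmeta_1 \hash p_1, \ldots, \typmeta_n \hash p_n) \to \typmeta' \hash \cardmeta'$, which is assumed decidable and functional in its type arguments per \cref{sec-exp-builtin}. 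The auxiliary judgment $\tctxn \vdash_\schema \emeta \div \typmeta/\tmeta \hash \cardmeta$ used by T-insert and T-update is handled the same way: the shape of $\tmeta$ dispatches between TS-simple and TS-reference, and each premise becomes a recursive synthesis followed by pointwise cardinality comparison.

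The main obstacle is justifying that the typing rules are actually functional, so that the phrase ``the unique pair'' is well-defined and the algorithm can commit to a single output. This requires a companion uniqueness lemma proved by the same structural induction: at each rule, the synthesized $(\typmeta, \cardmeta)$ of a compound expression is completely determined by those of its subexpressions together with the deterministic data in $\schema$, $\tctxn$, the built-in function table, and the deterministic operators such as record extension $\uplus$, cardinality addition, and cardinality multiplication. Termination is then immediate since every recursive call operates on a strictly smaller subterm, and decidability follows because a bounded number of decidable checks is performed per node of $\emeta$.
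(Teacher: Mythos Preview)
Your proposal is correct and follows essentially the same approach as the paper: both argue by structural induction on $\emeta$, observing that the typing rules form a syntax-directed, deterministic logic program with $\schema$, $\tctxn$, $\emeta$ as inputs and $\typmeta$, $\cardmeta$ as outputs (and, for the auxiliary judgment, with $\tmeta$ and $\cardmeta$ also as inputs). The paper's proof is a single sentence to this effect, whereas you spell out the per-rule decidability obligations and the handling of the two projection rules explicitly; these elaborations are accurate and consistent with what the paper leaves implicit.
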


\begin{theorem}[Uniqueness of type synthesis]
If $\vdash \schema \,\mathtt{ok}$, $\tctxn \vdash_\schema \tmap{\emeta}{\typmeta}{\cardmeta}$, and $\tctxn \vdash_\schema \tmap{\emeta}{\typmeta'}{\cardmeta'}$, then $\typmeta = \typmeta'$ and $\cardmeta = \cardmeta'$.
\end{theorem}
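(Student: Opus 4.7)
The plan is to proceed by structural induction on the expression $\emeta$, showing for each syntactic form that (i) at most one typing rule from the figures in \Cref{sec-expressions} can conclude a derivation for that form, and (ii) all the metavariables occurring in the conclusion are uniquely determined by $\schema$, $\tctxn$, and $\emeta$ (together with the inductive hypotheses on subterms). Since every syntactic form we introduced is associated with either a single typing rule or a pair of rules whose premises are syntactically disjoint, the induction should go through cleanly.

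First I would handle the routine cases: \textsc{T-var} is unique because type contexts are defined to have distinct variables; \textsc{T-prim} relies on the stated assumption that the primitive judgment $\vsmeta : \tsmeta$ assigns at most one type per scalar value; \textsc{T-emp} and \textsc{T-name} read their result directly off the expression and the domain of $\schema$; \textsc{T-union} requires the two branches to share a type, so by IH that common type is uniquely determined, and the cardinality is the sum of two unique cardinalities; \textsc{T-with}, \textsc{T-for}, \textsc{T-orderby}, and \textsc{T-if} each have a single applicable rule whose result is computed from unique subderivations via fixed arithmetic on cardinality modes. For \textsc{T-shaping}, uniqueness follows because the inductive hypothesis gives a unique $\reft{\tbn}{\ldots}$ type for $e$, the shape record's labels are syntactic, and the right-biased extension $\uplus$ is a deterministic operation on a unique input. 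Insertion and update inherit their result type directly from the syntactic label list in the shape and the schema lookup, which is single-valued by $\vdash \schema \,\mathtt{ok}$; the auxiliary judgment $\emeta \div \typmeta/\tmeta \hash \cardmeta$ does not appear in the conclusion's type, so its nondeterminism over $\typmeta$ (if any) would not threaten uniqueness --- and in any case it is determined by IH on $\emeta$.

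The main obstacle, and really the only case that needs care, is projection, which has two rules \textsc{T-proj-db} and \textsc{T-proj-ext}. I would show these are mutually exclusive: \textsc{T-proj-ext} requires $\lbn$ to be among the labels $\lbnx{1},\ldots,\lbnx{n}$ of the (unique, by IH) object type assigned to $e$, while \textsc{T-proj-db} explicitly requires $\lbn \notin \{\lbnx{1},\ldots,\lbnx{n}\}$. When \textsc{T-proj-ext} applies, the result type and cardinality are read off the unique object type by a syntactic lookup that is well-defined because records have distinct labels. When \textsc{T-proj-db} applies, they are read off $\schema(\tbn)$, which is a partial function. For built-in function calls, we simply appeal to the stated requirement that $f : (\typmeta_1 \hash p_1, \ldots, \typmeta_n \hash p_n) \rightarrow \typmeta' \hash \cardmeta'$ uniquely determines $\typmeta'$ and $\cardmeta'$ from $f$ and the argument types, which are themselves unique by IH. Finally, for \textsc{T-backlink}, the schema lookup $\schema(\tbn)$ and the syntactic annotation $[\tbn]$ jointly pin down the object type and cardinality without any nondeterminism. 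Wrapping these observations together in an induction yields both $\typmeta = \typmeta'$ and $\cardmeta = \cardmeta'$.
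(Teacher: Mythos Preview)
Your proposal is correct and follows the same approach as the paper, which simply states that the typing rules form a syntax-directed deterministic logic program with $\schema$, $\tctxn$, $\emeta$ as inputs and $\typmeta$, $\cardmeta$ as outputs (noting that the auxiliary judgment $\tctxn \vdash_{\schema} \emeta \div \typmeta/\tmeta \hash \cardmeta$ treats everything but $\typmeta$ as input). One minor slip: the $\typmeta_i$ produced by the auxiliary judgment \emph{do} appear in the conclusion types of \textsc{T-insert} and \textsc{T-update}, but your fallback---that each $\typmeta_i$ is determined by the inductive hypothesis on the corresponding subexpression---is exactly what is needed and matches the paper's own parenthetical remark.
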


\begin{proof}
    By induction on the expression $\emeta$ in both cases.
    Assuming the properties of built-in function typing required in \cref{sec-exp-builtin}, the typing judgments can be read algorithmically as a  deterministic syntax-directed logic program with $\schema$, $\tctxn$, and $\emeta$ as inputs and $\typmeta$ and $\cardmeta$ as outputs. (The auxiliary judgment $\tctxn \vdash_{\schema} \emeta \div \typmeta/\tmeta \hash \cardmeta$ in \cref{figseminsertupdate} treats all components except $\typmeta$ as input.)
\end{proof}

\begin{figure}
\begin{flushleft}\fbox{$\strut\vdash_{\schema, \dbn_\textit{init}, \dbn}\strut \dsetmeta : \typmeta \hash m$}\end{flushleft}
\[
\inferrule[(VT-set)]
{i_\mathit{lo} \leq n \leq i_\mathit{hi} \\
  \vdash_{\schema, \dbn} \dmeta_1 : \typmeta \quad
  \cdots \quad
  \vdash_{\schema, \dbn} \dmeta_n : \typmeta}
{\vdash_{\schema, \dbn_\textit{init}, \dbn} \tmap{[\dmeta_1, \ldots, \dmeta_n]}{\typmeta}{ [i_\mathit{lo}, i_\mathit{hi}]}}
\]

\begin{flushleft}\fbox{$\strut\vdash_{\schema, \dbn_\textit{init}, \dbn}\strut \dmeta : \typmeta$}\end{flushleft}
\begin{mathpar}
\inferrule[(VT-prim)]
{\vsmeta : \tsmeta}
{\vdash_{\schema, \dbn_\textit{init}, \dbn} \vsmeta : \tsmeta}

\inferrule[(VT-stored-ref)]
{\left(
\idn, \tbn, \lkn, \rec{\ldots}
\right) \in \dbn_\textit{init}
\\
\vdash_{\schema, \dbn_\textit{init}, \dbn} \dsetmetax{1} : \typmeta_1 \hash \cardmeta_1
 \quad \cdots
 \quad \vdash_{\schema, \dbn_\textit{init}, \dbn} \dsetmetax{n} : \typmeta_n \hash \cardmeta_n}
{\vdash_{\schema, \dbn_\textit{init}, \dbn}
  \refval{\idn}{\umaprec{\lbnx{1}}{\dsetmetax{1}}\recsep\ldots\recsep\umaprec{\lbnx{n}}{\dsetmetax{n}}}
  : \reft{\tbn}{\tmap{\lbnx{1}}{\typmeta_1}{\cardmeta_1}\recsep\ldots\recsep\tmap{\lbnx{n}}{\tsmeta_n}{\cardmeta_n}}}

\inferrule[(VT-new-ref)]{
\schema(\tbn) =
\rec{\tmap{\lbnprimex{1}}{\tmeta'_1}{\cardmeta'_1}\recsep\ldots\recsep\tmap{\lbnprimex{n}}{\tmeta'_n}{\cardmeta'_n}}
\\
\{ \lbnprimex{1}, \ldots, \lbnprimex{n} \} \subseteq \{ \lbnx{1}, \ldots, \lbnx{n} \}
\\\\
\left(
\idn, \tbn, \lkclosed\,, \rec{\ldots}
\right) \in \dbn
\\
\vdash_{\schema, \dbn_\textit{init}, \dbn} \dsetmetax{1} : \typmeta_1 \hash \cardmeta_1
 \quad \cdots
 \quad \vdash_{\schema, \dbn_\textit{init}, \dbn} \dsetmetax{n} : \typmeta_n \hash \cardmeta_n}
{\vdash_{\schema, \dbn_\textit{init}, \dbn}
  \refval{\idn}{\umaprec{\lbnx{1}}{\dsetmetax{1}}\recsep\ldots\recsep\umaprec{\lbnx{n}}{\dsetmetax{n}}}
  : \reft{\tbn}{\tmap{\lbnx{1}}{\typmeta_1}{\cardmeta_1}\recsep\ldots\recsep\tmap{\lbnx{n}}{\tsmeta_n}{\cardmeta_n}}}
\end{mathpar}

\caption{Typing rule for computed values}
\label{figcomputedvaluetypes}
\end{figure}

To talk about the type soundness of our model of graph-relational programming, it is first necessary to create an adaptation of typing rules for stored values from \cref{figvaluetypes} that targets computed values, shown in \cref{figcomputedvaluetypes}. These rules type values relative to two stores: an initial store and an extended store. Tuples that initially appeared in the database store can be typed with \textsc{VT-stored-ref}, which is the common case. Tuples that were introduced during the course of a query, on the other hand, will not appear in the initial store, only in the extended store. Projection in the dynamic semantics reference either the internally-stored values or the \textit{initial} store. The rule \textsc{VT-new-ref} encodes the way we avoid a situation where projecting from a newly-inserted object is undefined: any references to newly-introduced tuples carry mappings for every label in the schema, which ensures that projection will fall back to the initial database store. The rules \textsc{VT-stored-ref} and \textsc{VT-new-ref} are effectively the same rule with two different side conditions, and when both side conditions apply, the two rules will both apply but will give the same answer.

Our desired type soundness property is this: if we have a well-formed schema and database store ($\vdash \schema \,\mathtt{ok}$ and $\vdash_\schema \dbn \,\mathtt{ok}$), then we can take a closed, well-typed query
($\emptyset \vdash_\schema \tmap{\emeta}{\typmeta}{\cardmeta}$) and use the dynamic semantics to generate both a computed sequence $\dsetmeta$ and a new database store $\dbn'$ ($\eval{\emptyset}{\schema}{\dbn}{\dbn}{\dbn'}{\emeta }{\dsetmeta}$). Furthermore, whenever we run the dynamic semantics, the resulting $\dsetmeta$ has the expected type and cardinality ($\emptyset \vdash_{\schema, \dbn, \dbn'}\strut \tmap{\dsetmeta}{\typmeta}{\cardmeta}$) and the database store remains well-formed ($\vdash_\schema \dbn' \,\mathtt{ok}$).
Because evaluation is non-deterministic, this should be expressed with two separate theorems, one with evaluation as an output (the ``progress'' or ``totality'' theorem) and another with evaluation as an input (the ``preservation'' theorem).

As is usually the case, the statements of progress and preservation need to be strengthened with some additional machinery in order for to prove the result by induction over the appropriate derivation.
\begin{itemize}
\item We say that $\dbn'$ \textit{extends} $\dbn$ if $\dbn'$ has all the identifiers (with the same types) as $\dbn$, and if any tuple in $\dbn'$ with $\lkn = \lkopen$ also appears, unchanged, in $\dbn$.
\item \label{pos-envtypingdef} The judgment
$\strut\vdash_{\schema, \dbn_\textit{init}, \dbn} \vctxn : \tctxn$
holds if $\strut\vdash_{\schema} \dbn \,\mathtt{ok}$, if the variables mapped by $\vctxn$ and $\tctxn$ are identical, and if, whenever $\tmap{x}{\typmeta}{\cardmeta} \in \tctxn$ and $x \mapsto \dsetmeta \in \vctxn$, we have $\vdash_{\schema, \dbn} \tmap{\dsetmeta}{\typmeta}{\cardmeta}$.
\end{itemize}

These theorem statements also introduce a judgment for typing variable environments:
\label{pos-envtypingdef}
$\strut\vdash_{\schema, \dbn_\textit{init}, \dbn} \vctxn : \tctxn$
holds if the variables mapped by $\vctxn$ and $\tctxn$ are identical, and if, whenever $\tmap{x}{\typmeta}{\cardmeta} \in \tctxn$ and $x \mapsto \dsetmeta \in \vctxn$, we have $\vdash_{\schema, \dbn_\textit{init}, \dbn} \tmap{\dsetmeta}{\typmeta}{\cardmeta}$. 

\begin{theorem}[Preservation]
\label{theorem-preservation}
If $\dbn_1$ extends $\dbn_\textit{init}$ and we have $\tctxn \vdash_\schema \tmap{\emeta}{\typmeta}{\cardmeta}$ and $\mathstrut\vdash_\schema \dbn_\textit{init}\,\mathsf{ok}$ and $\mathstrut\vdash_\schema \dbn_1\,\mathsf{ok}$ and $\strut\vdash_{\schema, \dbn_\textit{init}, \dbn_1} \vctxn : \tctxn$ and
$\deval{\vctxn}{\dbn_1}{\dbn_2}{\emeta }{\dsetmeta}$,
then $\vdash_{\schema, \dbn_\textit{init}, \dbn_2}\strut \tmap{\dsetmeta}{\typmeta}{\cardmeta}$ and $\vdash_{\schema} \dbn_2\,\mathsf{ok}$
\end{theorem}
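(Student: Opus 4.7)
The plan is to proceed by induction on the derivation of $\deval{\vctxn}{\dbn_1}{\dbn_2}{\emeta}{\dsetmeta}$, inverting the typing derivation $\tctxn \vdash_\schema \tmap{\emeta}{\typmeta}{\cardmeta}$ at each step to line up the premises about subexpressions. Throughout, I would maintain the joint invariants that the final store extends each intermediate store (so extension is transitive along the way) and that every intermediate store remains schema-well-formed, so that the induction hypothesis can in fact be applied at each subexpression.

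Before the induction, I would prove several structural lemmas. The key one is \emph{monotonicity of value typing under store extension}: if $\vdash_{\schema, \dbn_\textit{init}, \dbn} \tmap{\dsetmeta}{\typmeta}{\cardmeta}$ and $\dbn'$ extends $\dbn$, then $\vdash_{\schema, \dbn_\textit{init}, \dbn'} \tmap{\dsetmeta}{\typmeta}{\cardmeta}$; this goes by induction on the typing derivation, using that extension preserves identifiers with their types and keeps closed-lock tuples fixed. The same lemma lifts to environment typing, so an IH that types a result against an intermediate $\dbn_i$ can be re-used against the final $\dbn_2$. I would also need small lemmas stating that the helpers $\mathsf{project}$ and $\mathsf{seek}$ and the right-biased record extension $\uplus$ on values/types produce well-typed outputs from well-typed inputs, and that the cardinality-semiring operations $\cardmeta_1 \times \cardmeta_2$ and $\cardmeta_1 + \cardmeta_2$ bound the natural set-theoretic constructions on sequences.

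With these in place the induction is routine for variables, literals, empty sets, union, named types, the built-in function rule (where the result's type and cardinality follow from the interpretation contract imposed in \cref{sec-exp-builtin}), and the control constructs $\mathsf{with}$, $\mathsf{for}$, and $\mathsf{orderby}$, where the IH applied to each subexpression combines with cardinality arithmetic and, in the $\mathsf{for}$/$\mathsf{orderby}$ cases, with the fact that each element bound to $x$ is typed at $\card{1}{1}$. Projection and backlink appeal to the helper-function typing lemmas; shaping uses the IH on each subexpression with the environment extended by $x$ mapped to a singleton and concludes with the $\uplus$ typing lemma.

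The main obstacle will be E-insert and E-update, since they genuinely mutate the store and must simultaneously preserve (a) the extension relation, (b) schema well-formedness of the new store, and (c) typability of the returned reference. For E-insert, the returned reference must be typed by \textsc{VT-new-ref} rather than \textsc{VT-stored-ref}, which requires that the inserted tuple carries a mapping for \emph{every} label in $\schema(\tbn)$; happily T-insert demands exactly one subexpression per schema label, and the stripping judgment $\dsetmetax{i} : \tmeta_i \rhd \vsetmetax{i}$ produces stored values whose sequences satisfy the schema's cardinality bounds (inherited from the typing premises via the IH). Freshness of $\idn$ with respect to $\dbn_\textit{init}$ makes the extension relation hold vacuously at the new identifier, and freshness with respect to $\dbn_{n+1}$ ensures the new store is a well-formed set. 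For E-update, the case split in the rule yields two subcases: in the no-op branches the result $[]$ is trivially typed at $\card{0}{1}$ and the store is unchanged; in the actual-updating branch, the right-biased union carries forward the old stored mappings for labels \emph{not} mentioned in the shape (which remain well-typed because $\dbn_{n+1}$ was well-formed, and T-update only requires a subset of the schema's labels to appear), while the new labels are verified via their fresh stripping derivations. Monotonicity of value typing under extension must then be invoked liberally to promote each IH, originally stated against some $\dbn_i$, up to the final $\dbn_2$.
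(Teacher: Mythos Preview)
Your proposal is correct and follows essentially the same approach as the paper: induction on the evaluation derivation, inversion of the typing derivation in each case, and reliance on a monotonicity lemma stating that value (and environment) typing is preserved under store extension. The one minor structural difference is that the paper factors out ``evaluation implies extension'' as a standalone lemma proved beforehand (\cref{label-eval-pres-extension}), whereas you fold the extension invariant into the main induction; both organizations work, but proving it separately keeps the preservation induction slightly cleaner since you can invoke it freely rather than restating it at each step.
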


\begin{proof}
    By induction on the evaluation derivation.
\end{proof}

\begin{theorem}[Totality]
\label{theorem-progress}
If $\dbn_1$ extends $\dbn_\textit{init}$ and we have $\tctxn \vdash_\schema \tmap{\emeta}{\typmeta}{\cardmeta}$ and $\vdash_\schema \dbn_\textit{init} \,\mathtt{ok}$ and $\vdash_\schema \dbn_1 \,\mathtt{ok}$ and $\strut\vdash_{\schema, \dbn_\textit{init}, \dbn_1} \vctxn : \tctxn$,
then we can construct $\dsetmeta$ and $\dbn_2$ such that
$\deval{\vctxn}{\dbn_1}{\dbn_2}{\emeta }{\dsetmeta}$.
\end{theorem}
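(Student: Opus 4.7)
The plan is to prove Totality by induction on the typing derivation $\tctxn \vdash_\schema \tmap{\emeta}{\typmeta}{\cardmeta}$, mirroring the case analysis of the dynamic rules in \cref{figsembasic,figsemproj,figsemshape,figsemfunction,figsemprog,figseminsertupdate}. Every nontrivial evaluation rule threads the database store sequentially through its sub-derivations, so each appeal to the inductive hypothesis on a later sub-expression requires that its three ambient invariants ($\vdash_\schema \dbn_k \,\mathtt{ok}$, the environment typing $\vdash_{\schema, \dbn_\textit{init}, \dbn_k} \vctxn : \tctxn$, and $\dbn_k$ extending $\dbn_\textit{init}$) have been reestablished after the previous subterm has mutated the store. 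Accordingly, Preservation (\cref{theorem-preservation}) serves as a load-bearing companion lemma: it is invoked at every point where the proof chains IH applications across subterms.

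First I would prove two small auxiliary lemmas. The \emph{extension lemma}: every $\deval{\vctxn}{\dbn_1}{\dbn_2}{\emeta}{\dsetmeta}$ has $\dbn_2$ extending $\dbn_1$. This is a routine induction on the dynamic derivation; the only interesting cases are \textsc{E-insert} (which adds a freshly-identified closed tuple) and \textsc{E-update} (which in the successful branch removes one tuple and reinserts it with a $\lkclosed$ mark). Transitivity of extension then propagates the relationship back to $\dbn_\textit{init}$. The \emph{monotonicity lemma} says that if $\dbn'$ extends $\dbn$ then $\vdash_{\schema, \dbn_\textit{init}, \dbn} \dsetmeta : \typmeta \hash \cardmeta$ implies $\vdash_{\schema, \dbn_\textit{init}, \dbn'} \dsetmeta : \typmeta \hash \cardmeta$. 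Both \textsc{VT-stored-ref} and \textsc{VT-new-ref} from \cref{figcomputedvaluetypes} are preserved under extension because extension only adds identifiers and can only change $\lkopen$ to $\lkclosed$, never the reverse. Monotonicity is what permits the reuse of the environment-typing hypothesis against each successively larger store.

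The bulk of the case analysis is routine. For \textsc{T-var}, \textsc{T-prim}, \textsc{T-emp}, and \textsc{T-name} the corresponding evaluation rule applies immediately without recursion. For \textsc{T-union}, \textsc{T-shaping}, \textsc{T-function}, \textsc{T-if}, \textsc{T-with}, \textsc{T-for}, and \textsc{T-orderby} we apply the IH to each sub-expression in turn, using Preservation and monotonicity between applications to restore the premises. For \textsc{T-proj-db}, \textsc{T-proj-ext}, and \textsc{T-backlink}, the auxiliary point is that $\mathsf{project}$ and $\mathsf{seek}$ are defined whenever their inputs match the typing premises: inverting \textsc{VT-stored-ref} or \textsc{VT-new-ref} guarantees that either the reference carries the projected label internally or its identifier is associated with a tuple (in $\dbn$ or $\dbn_\textit{init}$) whose record contains it. \textsc{T-function} relies on the standing assumption from \cref{sec-exp-builtin} that $\llbracket f \rrbracket$ is total on arguments satisfying the declared parameter modifiers.

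The hardest cases are insertion and update. For \textsc{T-insert} we must exhibit a fresh identifier $\idn$ with $\idn\#\dbn_\textit{init}$ and $\idn\#\dbn_{n+1}$; this is possible because the identifier space is assumed infinite while both stores are finite. We also need each $\dsetmetax{i} : \tmeta_i \rhd \vsetmetax{i}$; unwinding the premises of \textsc{T-insert} together with \textsc{TS-simple}, \textsc{TS-reference}, and the IH gives computed value sequences whose shape is exactly what \textsc{ES-sequence}, \textsc{ES-simple}, and \textsc{ES-reference} accept. For \textsc{T-update} the main obstacle is the case split in \textsc{E-update}: after evaluating the subterms, we must inspect whether $\idn$ still lies in the current store and whether the tuple's edit mark is $\lkopen$ or $\lkclosed$; in each branch the rule supplies a conclusion, and the cardinality $\card{0}{1}$ assigned by \textsc{T-update} accommodates both the singleton and the empty outputs. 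Verifying that the constructed $\dbn''$ (or the unchanged $\dbn_{n+1}$) respects extension and well-formedness closes the induction.
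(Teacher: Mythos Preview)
Your proposal is correct and matches the paper's own argument essentially step for step: induction on the typing derivation, auxiliary lemmas that evaluation yields store extension and that extension preserves computed-value and environment typing, invocations of Preservation between chained IH applications, and the same handling of the projection, function, insert, and update cases (including the fresh-identifier argument and the three-way case split for \textsc{E-update}). The only minor slip is your closing remark about verifying well-formedness of $\dbn''$ in the update case---that belongs to Preservation, not Totality---but it does not affect the argument.
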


\begin{proof}
    By induction on the typing derivation and appeals to preservation for values.
\end{proof}

%
%

\noindent
More details of these proofs are available in supplementary materials (\Cref{sec-proofs}).
\section{Realizing Graph-Relational Programming in Gel}
\label{sec-gel}

The Gel system implements EdgeQL, a realization of the graph-relational programming language described in \cref{sec-data-at-rest,sec-expressions} \cite{GelDocs}. Gel is an open-source system that has been actively evolving since 2021. The core Gel implementation is around 150k lines of code, which includes the query compiler, the schema implementation and migration engine, and the database server. This count does \textit{not} include the PostgreSQL implementation, which is about an order of magnitude larger. Most of the code is mypy-checked Python, with the lexer, parser, and server written in a combination of Rust and Cython.

\subsection{Concrete Syntax}
\label{sec-concrete-nicities}

The SQL-adjacent concrete syntax of EdgeQL corresponds closely to the abstract syntax we have presented, with the exception of curly brackets replacing the angle-brackets used in our abstract syntax; we briefly discussed an example of some of EdgeQL's shorthand property notation in the discussion of shapes in \cref{sec-exp-shapes}.
In this section, we briefly cover some of the more interesting and pervasive differences between EdgeQL's concrete syntax and the abstract syntax for graph-relational programming that we have presented.

One important change is that some of the binding structures that our abstract syntax makes explicit are implicit in the concrete syntax of EdgeQL. We saw in \cref{sec-exp-shapes} that the shape expression $\tbname{Movie} \apply x.\rec{\maprec{\lbname{title}}{x\cdot\lbname{title}}}$ can be expressed with the syntactic sugar of \verb|select Movie { title }|, but in EdgeQL's concrete syntax even the fully expanded form does not include a binding site. Instead, EdgeQL automatically fills in the variable by \textit{shadowing the type name}, so the corresponding EdgeQL is actually \verb|select Movie { title := Movie.title }|, where the ``inner'' reference to \verb|Movie| is a variable referring to a single movie object. In addition, the most recently bound variable is implicitly accessed if a label projection is given with no argument, so \verb|select Movie { title := .title }| means the same thing as the previous two EdgeQL statements.

In EdgeQL's documentation and implementation, the term ``set'' describes what we have called \textit{sequences} consistently in this paper, in part to emphasize that the order of results is, in almost all cases, not at all guaranteed. This is reflected in the shorthand notation for unions: the expression \verb|{2,3,4}| corresponds to $2 \cup 3 \cup 4$ in our abstract syntax. These curly brackets do not form new or nested sets: \verb|{{{2}, {3, {4}}}}| is an equivalent expression.

Other conveniences of the concrete syntax could be defined via elaboration (sometimes type-directed elaboration), such as ``splat'' shapes (i.e. \verb|select Movie {*}|) that make all  properties visible, schema-defined computed values (these would allow our schema for movies to have an $\lbname{anniversary}$ property that is defined as the $\lbname{year}$ property plus 10), and schema-defined non-recursive user-defined functions (which can be defined via inlining). 

\subsection{Client Libraries and Type-directed Serialization}\label{sec-type-directed-serial}

\begin{figure}\footnotesize
\begin{center}
    \begin{tabular}{l|l||l|l}
        Computed value & Type and cardinality  & Derived JSON & Derived TypeScript type\\\hline
          $[]$ & 
          $\tstring \hash \card{0}{1}$ &  
          \verb|null| & 
          \verb`string | null`
        \\ 
          $[]$ & 
          $\tstring \hash \card{0}{\infty}$ & 
          \verb|[]| &
          \verb`string[]`
        \\ 
          $[\stringlit{Hi}]$ & 
          $\tstring \hash \card{0}{1}$ & 
          \verb|"Hi"| &
          \verb`string | null`
        \\ 
          $[\stringlit{Hi}]$ &
          $\tstring \hash \card{1}{1}$ &
          \verb|"Hi"|  &
          \verb|string| 
        \\
          $[\stringlit{Hi}]$ & 
          $\tstring \hash \card{0}{\infty}$ & 
          \verb|["Hi"]|  & 
          \verb|string[]| 
        \\
          $[\stringlit{Hi}, \stringlit{you}]$ & 
          $\tstring \hash \card{1}{\infty}$ & 
          \verb|["Hi","you"]|  &
          \verb|[string, ...string[]]| 
        \\
          $[\refval{\idname{7}}{}, \refval{\idname{8}}{}]$&
          $\reft{\tbn}{} \hash \card{0}{\infty}$ & 
          \verb|[{"id":"7"},|&
          \verb|{ id: string }[]| 
        \\ & & \verb| {"id":"8"}]| &
        \\
          $[\refval{\idname{7}}{\maprec{\lbname{foo}}{[4]}}]$&
          $\reft{\tbn}{\tmap{\lbname{foo}}{\mathit{int}}{\card{1}{1}}} \hash \card{1}{0}$ & 
          \verb|{"foo":4}| &
          \verb`{ foo: int } | null`
        \\
          $[\refval{\idname{7}}{\imaprec{\lbname{a}}{[4]}, \maprec{\lbname{b}}{[]}}]$&
          $\reft{\tbn}{\tmap{\lbname{a}}{\mathit{int}}{\card{1}{1}},\tmap{\lbname{b}}{\mathit{int}}{\card{0}{1}}} \hash \card{1}{1}$ &
          \verb|{"b":null}|&
          \verb`{ b: int | null }` 
    \end{tabular}
\end{center}
    \caption{Several examples of how graph-relational data, types, cardinality constraints (the leftmost two columns) are translated by Gel into TypeScript values and types (the rightmost two columns).
    }
    \label{figtypescript}
\end{figure}

Client libraries are an aspect of the Gel system that is outside the scope of our current formalization, but they are critical for the usability of Gel. Both types and cardinality modes contribute to providing simple, well-typed interfaces to EdgeQL queries.
Gel officially supports client libraries for TypeScript, Go, Python, and Rust; we will discuss the TypeScript library.

One way to use Gel in a TypeScript project is to compile queries into generated code and types. The query \verb|select "Hello" union <str>$name| in a file \verb|myQuery.edgeql| causes Gel to generate a TypeScript function with this signature:
\begin{quote}\small
\begin{verbatim}
function myQuery(
  client: Executor,
  args: { readonly name: string },
): Promise<[string, ...string[]]>
\end{verbatim}
\end{quote}
Rather than representing everything as a sequence, Gel generates the most appropriate type for a query based on information about cardinality. The TypeScript type \verb|[string, ...string[]]| describes nonempty arrays of strings, and so neatly captures the type and cardinality $\mathit{str} \hash \card{1}{\infty}$. \Cref{figtypescript} gives a set of examples about how both type and cardinality influence JSON-serializable JavaScript values returned from calls to Gel.\footnote{Gel also supports a typed binary serialization format that can avoid overheads connected to JSON serialization.} The last three examples demonstrate the convention that the \verb|id| field is included automatically if the output otherwise would be an property-less JavaScript object. This finally closes out our running example: the shaped query described in \cref{sec-exp-shapes} will return precisely the object-shaped data from \cref{figmov}, with all the benefits of storing that object-shaped data in a relational database.

\Cref{figtypescript} also illustrates a subtle shortcoming of our formalization. We can fully account for how a computed value and its type translate to a value in TypeScript; however, because visibility information is tracked in the dynamic semantics of our formal system and not the static semantics, we \textit{cannot} fully account for how a type and cardinality translate to the derived TypeScript type, since we don't have information about which properties are invisible and thus excluded from the type. We believe the appropriate fix is to track visibility in the static semantics, at which point it can be omitted in the dynamic semantics. Our relatively late detection of this deficiency illustrates the potential value, in future work, of also formalizing the relationship between the types and cardinalities used in EdgeQL and the types and values delivered by client libraries; such a effort would have likely highlighted this issue.

It is worth mentioning another way to use the TypeScript client library that avoids concrete syntax altogether in favor of a more DSL-like approach. Rather than generating code for individual queries, Gel can generate a schema-specific TypeScript library that allows queries to be built and, in most cases, statically checked by the TypeScript type system. This DSL actually follows our abstract syntax more closely than EdgeQL's concrete syntax, for instance by introducing a bound variable in shaping expressions. Compare the following to the abstract syntax in \cref{sec-exp-shapes}, for example:
\begin{quote} \small
\begin{verbatim}
const result = await e
  .select(e.Movie, x => ({
    title: x.title,
    year: x.year,
    directors: e.shape(x.directors, y => ({ name: y.name, age: y.age })),
    actors: e.shape(x.actors, y => ({
      name: y.name, "@character": y["@character"],
    })),
  }))
  .run(client);
\end{verbatim}
\end{quote}
The inferred TypeScript type of \verb|result| is a simple and specific description of the expected data following the pattern in \cref{figtypescript}.

\subsection{Gel and PostgreSQL}

Gel implements graph-relational programming on top of the PostgreSQL relational database; EdgeQL schemas are converted to PostgreSQL schemas in which each object
type has a dedicated table; links and properties with cardinality $\card{0}{1}$ or $\card{1}{1}$ are stored in columns of that table, and other links and properties are stored in separate tables. Links to other objects are represented by their primary key,  and link properties are stored in other columns of the link table. 

For simple queries on nested data, Gel's general strategy is to use the PostgreSQL-specific \verb|array_agg| aggregation function to represent nested data with arrays of tuples. Some illustrative examples of the queries that Gel generates are shown in supplemental material (\Cref{sec-compilation}).

\subsection{The Relationship Between EdgeQL and This Formalization}

The actual EdgeQL language implementation includes a number of features that do not exist in our formalization. A few notable examples:
\begin{itemize}
    \item \textbf{Exceptions.} Our formal system does not allow well-typed queries to fail, which allows us to give a crisp statement of what type correctness means. The actual implementation allows for checked runtime assertions in many situations. For example, there is an \verb|assert_single| operator which allows the programmer to ``downcast'' an expression of cardinality $\card{0}{\infty}$ to cardinality $\card{0}{1}$, failing at runtime (and reverting the database store to its state prior to the query's evaluation) if the computed sequence does not have exactly one element.
    \item \textbf{Deletion.} Another benefit of exceptions is permitting deletion, which necessarily creates the possibility of failure. In our running example,  attempting to delete the $\tbname{Person}$ record for either of our directors would have to fail because the schema requires that movies have at least one director.
    \item \textbf{Structured data.} In our formalization, schema-defined objects were the only structured data. In practice, it is quite useful to be able to create ``free'' objects on an ad-hoc basis even if there is no corresponding entry in the schema, as in this example:
    \begin{quote}\small
    \begin{verbatim}
for x in { 1, 2, 3 }
select { n := x };
\end{verbatim}
\end{quote}
    The EdgeQL implementation also has types for fixed-length tuples, variable-length arrays, and arbitrary nested and unstructured JSON data.
    \item \textbf{Additional types.} In addition to free objects, tuples, and arrays, EdgeQL has and multiple numeric types like \verb|int16|, \verb|float64|, and \verb|bigint|. Casts like \verb|<int16>55| allow for the introduction of basic scalar values with no dedicated syntax (numeric constants with no decimal are always inferred to be type \verb|int64|).
    \item \textbf{Inheritance.} Object types may be declared to inherit from other types. Selecting a parent type will also return all objects of its descendant types. 
    \item \textbf{Constraints.} In addition to exclusivity constraints mentioned in \cref{secrefvalues}, a schema can give an object type an arbitrary expression that must be true in order for that object to be stored or updated in the database.
\end{itemize}
All of these could potentially be in scope of a formalization effort like the one we have described here, but we have sought a middle-weight balance between complexity and completeness of this formalization. (Inheritance and subtyping, in particular, require extensive support in the implementation and so represent  a desirable target for future formalization.) Beyond these features, there are many technical aspects of creating a production-quality database system that are beyond the scope of the formalization effort described here, though they might benefit from future formalization: access policies, code that triggers on certain inserts, updates, or deletions, and data migrations.

Conversely, there are expressions that are well-typed in our formal system that are not accepted as EdgeQL queries in Gel. Some of these are best understood as outstanding bugs: Gel doesn't allow mutation in a number of places, for instance within filter statements, simply because this does not come up much in practical usage and so hasn't been worth the engineering effort. Other differences are better-motivated and unlikely to change. One example is that link properties in Gel are only allowed to have the cardinality $\card{0}{1}$ or $\card{1}{1}$, as the alternative was deemed complicated to implement and confusing to use. Another example is that Gel disallows the query \verb|select Person { name := 2 }| that changes the type of \lbname{name} ({\tstring} in our schema), and it disallows the query \verb|select Person { name := { "Alice", "Bob" }}| because it changes the cardinality ({\card{1}{1}} in our schema). This restriction has a desirable effect that, from the perspective of a TypeScript program using the result of a query, a the result of a shaped query is generally a subtype of the result of an unshaped query. However, over multiple attempts, we have found that modeling this restriction in the formal type system concealed more than it revealed. We believe that the correct statement of the current intended relationship between the systems is that \textbf{if the formal static semantics and the Gel implementation agree that an expression is well-formed, they should agree on the result of evaluating that query}.

There is one notable exception to this desirable criteria, and it's worth discussing as a case study on the ongoing relationship between formalization efforts and practical language design. In the actual EdgeQL system, certain projections de-duplicate their results when the result of the projection has object type. Our running example has three movies with one director each, and under the semantics we have presented in \cref{sec-expressions}, the expression $\tbname{Movie} \cdot \lbname{directors}$ evaluates to a sequence of three objects. The Gel system, on the other hand, will de-duplicate the result of projecting out $\lbname{directors}$, and so because two movies were directed by the same person, the EdgeQL query \verb|select Movie.directors| in Gel will return two objects representing the two different directors.

We could bring our formal semantics \textit{mostly} in line with the EdgeQL implementation by modifying the \textsc{E-proj} and \textsc{E-backlink} rules to de-duplicate their results when the result of projection is a sequence of object references $\refxval{\idn}{\shapevalmeta}$. De-duplication would mean that if the sequence contains two objects $\refval{\idn}{\shapevalmeta}$ and $\refval{\idn}{\shapevalmeta'}$ with the same $\idn$, one will nondeterministically be removed. However, this \textit{still} would not perfectly match EdgeQL's behavior, because not all projection operations trigger de-duplication: the EdgeQL query \verb|select Movie.actors@characters| does not de-duplicate actors; the result will include both of Sillier Murphy's characters ``Doc Boom'' and ``Fissure'' despite the fact that the EdgeQL query \verb|select Movie.actor| \textit{will} remove one of the two references to Sillier Murphy via de-duplication.

\subsection{An Ongoing Dance of Formalization, Implementation, and Revision}
\label{sec-dance}

What to make of this discrepancy between the formalization of graph-relational programming and the EdgeQL implementation? That is a ongoing conversation! From the perspective of the authors actively involved in formalization, this is a bug in the design of EdgeQL, an unambiguously non-compositional aspect of the query language. A reasonable counterpoint is offered by the authors who were involved in the original design of the language, who interpret projection as something that happens along a \textit{path} with multiple labels, not one label at a time. This would indicate a bug in the formalization: instead of the abstract syntax treating \verb|select Movie.actors@character| as $(\tbname{Movie} \cdot \lbname{actors}) \cdot \lbname{character}$, it is possible to parse the query as $\tbname{Movie} \cdot (\lbname{actors}, \lbname{character})$ at which point it is reasonable for a compositional semantics analyze the \textit{entirety} of the projection path to determine whether and where object values should be de-duplicated. Certainly, though, the discrepancy should be resolved one way or the other.

Work on the formalization presented here has been in progress since late 2022 while Gel 3.0 was in development, and has continued through to ongoing development of Gel 7.0. At least one significant change to the language was driven in part by the formalization effort: a change to the rules by which a table name like \verb|Movie| goes from being a reference to all movies to being a reference to a single movie. The discussion of binding and shadowing in \cref{sec-concrete-nicities} describes behavior introduced as Gel 6.0 to replace a now-deprecated set of rules that have a less compositional interpretation \cite{GelPathScoping}. These ``legacy path scoping'' rules were formalized in an earlier (unpublished) iteration of this paper; describing the system formally helped highlight the complexity of legacy path scoping relative to the actual utility of this feature for EdgeQL users. That analysis was part of the motivation for legacy path scoping's subsequent re-evaluation and eventual removal. 

These two vignettes on de-duplication and legacy path scoping are illustrative case studies in how formalization can contribute to language design and evolution. In the case of legacy path scoping, the formalization effort highlighted the complex and non-compositional nature of the feature. In the case of de-duplication, compositionality is in the eye of the beholder. Without question, EdgeQL's current de-duplication logic ``goes against the grain'' of modern programming language design,\footnote{~``[A] medium's grain comprises the tendencies, advantages, and constraints which emerge from its aggregate low-level properties, but which are conceptualized abstractly by skilled creators.'' \cite{10.1145/3544548.3581434}} and there is certainly positive correlation between how formally trained individual authors are in modern programming language design and how confusing they find the behavior of the current implementation. But ultimately, for software intended to address the needs of its users, the original design vision and the design vision implicit in the modern practice programming formalization are just two (important) inputs to a design process. Neither can provide a definitive answer without a broader understanding of the needs of a system's intended users.

\subsection{Future Use of Formalization Efforts}

Language formalization efforts such as this one, at their best, serve as a peculiar kind of fuzz testing; counterintuitive aspects of a formalization naturally prompt the question ``how does the implementation handle this?'' A number of bugs have been identified in this way. Going forward, we see significant potential in using a semantics-derived prototype implementation to systematically fuzz test the implementation. 
The formalization effort has spawned a number of candidate prototype implementations of the graph-relational model, including at least one intended to faithfully interpret the semantics at the cost of efficiency and one that attempts to use SQLite and its high tolerance for many small queries to avoid much of the complexity of the PostgreSQL-backed implementation.

Another potential win would come from using a direct implementation of the semantics to check the test suites for reliance on implementation-defined behavior: it is very easy to write a test that expects $1 \cup 2$ to evaluate to $[1,2]$ specifically, but $[2,1]$ is also allowed by the language definition. 
Our intended property of having the dynamic semantics agree on the evaluation of any expression that both static semantics treat as well-formed is sufficient to make these use cases viable.

Being able to provide mechanized guarantees about the relationship between formalization and implementation is a tantalizing, but likely distant, goal.

\section{Evaluation}
\label{sec-eval}

\begin{figure}
    \centering

\pgfplotsset{width=13cm,height=4cm,compat=1.18}
\begin{tikzpicture}
\begin{axis}[
ybar,
enlargelimits=0.1,
ymin=68,
ymax=750,
ylabel={iterations/s},
symbolic x coords={Django (Py), SQLAlchemy (Py), TypeORM (JS), Sequelize (JS), Prisma (JS), Drizzle (JS), Gel+Py,Gel+JS, asyncpg},
x tick label style={anchor=east,yshift=-0.5em,xshift=1em, rotate=30,align=left},
nodes near coords,
nodes near coords align={vertical},
]
\addplot coordinates {(Django (Py), 71) (SQLAlchemy (Py), 149) (TypeORM (JS), 182) (Sequelize (JS), 210) (Prisma (JS), 109) (Drizzle (JS), 284) (Gel+Py,544) (Gel+JS,530) (asyncpg,658)};
\end{axis}
\end{tikzpicture}
%
%
%
%
%
%
%
    \caption{Comparison of Gel's throughput (geometric mean across on three IMDBench benchmarks) against two Python ORMs, four Node.JS ORMs, and direct use of the \textbf{asyncpg} Python client library for PostgreSQL using hand-tuned queries.}
    \label{fig:perf}
\end{figure}

This section summarizes updated results from a separate effort aimed at creating a performance benchmark for the Gel system \cite{whyorms}. The benchmark queries object-shaped data from an IMDB-like database of movies, inserting a one-millisecond delay for all communication between the application and database to simulate a network delay. This delay has the effect of additionally punishing approaches that require multiple queries to fetch data.

A relatively straightforward success criteria for Gel is that the ``obvious way'' of manipulating object-shaped data with EdgeQL should be faster (lower latency and/or higher throughput) than the similarly-obvious way of performing the same manipulations in an ORM library. \Cref{fig:perf} shows that Gel clearly meets the first success criteria with regards to throughput, and additional benchmark results (including tests for latency) are included in supplemental materials (\Cref{sec-perf}).

A stronger success criteria is that, in a specific application language, the Gel client library should provide the same performance as running hand-tuned PostgreSQL queries with a language-specific Postgres client library. We believe the \texttt{asyncpg} library in \cref{fig:perf} represents the best throughput that is practically possible, and Gel comes within 20\% of the performance of hand-optimized queries using this approach. In practice, there are cases where Gel \textit{exceeds} the best possible performance achievable with popular PostgreSQL client libraries. This can happen if the client library has missing or inefficient support for PostgreSQL's arrays, necessitating the use of multiple queries (the second approach described in \cref{objectrelationalmismatch}).

\section{Conclusion}
\label{sec-conclusion}

We have described a formal model of graph-relational programming and the realization of this model in the EdgeQL query language for the Gel system. We have not intended to provide the final and definitive word on the details of graph-relational programming; rather, we have sought to present graph-relational programming as a previously-unexplored ``sweet spot'' of expressiveness and practicality for the extremely common use case of dealing with object-shaped data, while at the same time preserving many benefits that come with using relational databases. There is significant work adjacent to ours; we will conclude with an incomplete review of some connections that were not previously mentioned, with an emphasis on those that were a specific inspiration to the design of EdgeQL.

In \cref{sec-exp-shapes} we discussed the connections between and EdgeQL's shapes and GraphQL; similarly, UnQL's structurally recursive transformations on semistructured data were an inspiration for this work \cite{Buneman00vldb}. Other languages, particularly those like Lorel that query semistructured data, have previously introduced the ``dot notation'' for links that translates to database joins \cite{Abiteboul97ijdl}. It is common for languages with dot-projection to allow set-valued attributes \cite{Kim91csc,Zaniolo83sigmod}, which can be viewed as rejecting the ``first normal form'' assumption; however, these approaches tend to treat sets and singletons as different types, in contrast to our uniform treatment that distinguishes singletons from sets with cardinality constraints. This uniform treatment goes hand-in-hand with our mostly-uniform lifting of operations (such as built-in function calls) from singleton values to sets: this is a mechanism as old as APL \cite{apliverson} but one that still receives attention today \cite{10.1145/3315454.3329961}. Compared to this other work on lifted operations, the goals of Gel are modest: our aim was to capture a usable form of lifting that facilitates efficient implementation on top of the existing PostgreSQL implementation.

We believe cardinality constraints as we have presented them are relatively novel, though given that all expressions in our model evaluate to sequences of values, it is possible to view them as an extremely simplified form of something like liquid refinement types \cite{10.1145/1375581.1375602}. While we have found our minimal set of five cardinality suitable for database-query purposes, work on refinement types could potentially be adapted to give dramatically more specific statically checked cardinality constraints.

Graph-relational programming is certainly adjacent to the significant body of work on graph databases \cite{10.5555/2846287,10.1109/ICDEW.2012.31,10.1145/1322432.1322433} and graph query languages like the recent standardization efforts for SQL/PGQ and GQL \cite{francis:hal-04094449}. Despite the shared use of the word ``graph,'' we see these as largely orthogonal lines of work! Graph databases and graph query languages are intended as powerful tools for thinking about the graph structure of data, allowing properties like graph connectivity to be queried. Graph-relational programming, on the other hand, is intended to facilitate an application's connection to strongly-typed, object-shaped data. We believe techniques from graph databases could be productively applied to the graph-relational database model just as they have been applied to the relational database model.

The goals of the graph-relational database model are, in some ways, closer to the goals of object-relational mapping (ORM) systems  \cite{10.1145/1376616.1376773,TORRES20171,9481049} than to most query languages. Graph-relational programming improves upon work on ORM systems with a relatively high degree of language independence, a general methodology for avoiding the types of inefficiency endemic to ORM systems, and a powerful and compositional language for describing database-side computations.

\begin{acks}
Credit to \href{https://357meg.com/}{Megan Sullivan} for the illustration in \cref{figmov}.
Fantix King updated the 2022 benchmark and generated the results presented in \cref{sec-eval,sec-perf}.
\end{acks}

\bibliographystyle{ACM-Reference-Format}
\bibliography{main}

\appendix

\clearpage
\section{EdgeQL and PostgreSQL in More Detail}
\label{sec-compilation}

This appendix shows more of EdgeQL's concrete syntax, and discusses how EdgeQL queries actually relate to PostgresQL queries.
We will start by repeating the actual EdgeQL schema for the paper's running example presented in \cref{sec-intro-schema-type}; compare to the abstract syntax we gave in \cref{sec-schemas}:

\vspace{0.5em}
\begin{quote}\small
\begin{verbatim}
using future simple_scoping;
module default {
  type Person {
    required name: str;
    required age: int64;
    born: str;
  };
  type Movie {
    required title: str;
    required year: int64;
    required multi directors: Person;
    multi actors: Person {
      character: str;
    };
  };
}
\end{verbatim}
\end{quote}

\vspace{0.5em}
\noindent
All the cardinality modes have specific names --- $\card{0}{1}$ is the unannotated default, $\card{1}{\infty}$ is ``required multi'' --- and the ``simple\_scoping'' directive, as of Gel 6.0, tells the system to use the scoping rules described in \cref{sec-concrete-nicities} rather than the legacy ``path factoring'' rules mentioned in \cref{sec-dance}.

The actual PostgreSQL database schema used by Gel to describe our running example is not fundamentally more complicated than what was shown in \cref{figtable} aside from Gel's extensive use of UUIDs. The actual table names are also translated to UUIDs; these examples use a Postgres interface provided by Gel that renames meaningful table names to the correct UUIDs:

\vspace{0.5em}
\begin{quote}\small
\begin{verbatim}
main=# SELECT id, title, year FROM "Movie";
                  id                  |    title     | year
--------------------------------------+--------------+------
 396d42de-59d8-11f0-afe7-abd8660e07dc | Transistors  | 2007
 396da954-59d8-11f0-afe7-c71c5a3aed98 | Interception | 2010
 39a690de-59d8-11f0-afe7-dfc8a673e1d6 | Open Hammer  | 2024
\end{verbatim}
\end{quote}

\vspace{0.5em}
\noindent
We'll just show the last 5 characters of ids for the rest of the queries:

\vspace{0.5em}
\begin{quote}
    \small
    \begin{verbatim}
main=# SELECT right(id::text, 5) AS id, age, born, name FROM "Person";
  id   | age |    born     |      name
-------+-----+-------------+----------------
 03ed4 |  38 | California  | Megan Wolf
 a0467 |  38 | Los Angeles | Shy Andbuff
 61688 |  38 | Ottawa      | Elton Book
 8eb94 |  50 | New York    | Leo Tophat
 98bad |  49 | Ireland     | Sillier Murphy
 6021d |  60 | The moon    | Michael Cove
 8a4a6 |  50 | London      | Chris Nolens
 f41c1 |  41 | London      | Em Sharp
\end{verbatim}
\end{quote}

\clearpage
\begin{quote}
    \small
\begin{verbatim}
main=# SELECT right(id::text, 5) as id, title, year FROM "Movie";
  id   |    title     | year
-------+--------------+------
 e07dc | Transistors  | 2007
 aed98 | Interception | 2010
 3e1d6 | Open Hammer  | 2024
                                                             ^
main=# SELECT
  right(source::text, 5) AS source,
  right(target::text, 5) AS target
FROM "Movie.directors";
 source | target
--------+--------
 e07dc  | 6021d
 aed98  | 8a4a6
 3e1d6  | 8a4a6

main=# SELECT
  right(source::text, 5) AS source,
  right(target::text, 5) AS target,
  character
FROM "Movie.actors";
 source | target | character
--------+--------+------------
 e07dc  | 03ed4  | Meg Tech
 e07dc  | a0467  | Sam Man
 aed98  | 61688  | Spiderface
 aed98  | 8eb94  | Corn Cobb
 aed98  | 98bad  | Fissure
 3e1d6  | 98bad  | Doc Boom
 3e1d6  | f41c1  | Cat Boom
\end{verbatim}
\end{quote}

\vspace{0.5em}
\noindent
The concrete analogue of the table in \cref{figjoin} is generated like this:

\vspace{0.5em}
\begin{quote}\small
\begin{verbatim}
main=# SELECT
  right("Movie".id::text, 5) AS movie_id,
  title,
  character,
  right("Person".id::text, 5) AS person_id,
  name
FROM "Movie"
LEFT JOIN "Movie.actors" ON "Movie".id = "Movie.actors".source
INNER JOIN "Person" on "Movie.actors".target = "Person".id;
 movie_id |    title     | character  | person_id |      name
----------+--------------+------------+-----------+----------------
 e07dc    | Transistors  | Meg Tech   | 03ed4     | Megan Wolf
 e07dc    | Transistors  | Sam Man    | a0467     | Shy Andbuff
 aed98    | Interception | Spiderface | 61688     | Elton Book
 aed98    | Interception | Corn Cobb  | 8eb94     | Leo Tophat
 aed98    | Interception | Fissure    | 98bad     | Sillier Murphy
 3e1d6    | Open Hammer  | Doc Boom   | 98bad     | Sillier Murphy
 3e1d6    | Open Hammer  | Cat Boom   | f41c1     | Em Sharp
\end{verbatim}
\end{quote}

\clearpage

These examples show that the PostgreSQL \textit{schemas} produced by Gel are essentially idiomatic SQL tables. The \textit{queries} that Gel creates are \textit{not} idiomatic SQL queries for a variety of reasons, including features like support for reflection and subtyping that go beyond the scope of this paper. But the fundamental trick that Gel is using to handle nested queries is conceptually straightforward: Gel uses PostgreSQL's built-in array type and the nonstandard array aggregation function to collect an entire column or set of columns into a single value:

\vspace{0.5em}
\begin{quote}\small
\begin{verbatim}
main=# SELECT array_agg("Movie.actors".character) FROM "Movie.actors";
                                  array_agg
-----------------------------------------------------------------------------
 {"Meg Tech","Sam Man",Spiderface,"Corn Cobb",Fissure,"Doc Boom","Cat Boom"}
\end{verbatim}
\end{quote}

\vspace{0.5em}
\noindent
Array values and array aggregation are the nonstandard SQL feature that Gel relies on. Array aggregation was added to PostgreSQL in 2009,\footnote{\url{https://www.postgresql.org/docs/8.4/release-8-4.html\#AEN97667}}, but arrays and array aggregation are only available in the popular MySQL database system through a generic JSON type\footnote{\url{https://dev.mysql.com/doc/refman/8.4/en/aggregate-functions.html\#function_json-arrayagg}} that supports array aggregation as of 2018.\footnote{\url{https://dev.mysql.com/doc/relnotes/mysql/5.7/en/news-5-7-22.html}} Combined with standard SQL tuple type, array aggregation allows an entire query's tabular results to be captured in a single value:

\vspace{0.5em}
\begin{quote}\small
\begin{verbatim}
main=# SELECT array_agg(("Person".name, "Movie.actors".character))
FROM "Movie"
LEFT JOIN "Movie.actors" ON "Movie".id = "Movie.actors".source
INNER JOIN "Person" ON "Movie.actors".target = "Person".id
WHERE "Movie".title = 'Transistors';
                             array_agg
-------------------------------------------------------------------
 {"(\"Megan Wolf\",\"Meg Tech\")","(\"Shy Andbuff\",\"Sam Man\")"}
\end{verbatim}
\end{quote}

\vspace{0.5em}
\noindent
The only remaining trick is that PostgreSQL's \verb|array_agg| function returns no value, rather than a single empty array, if asked to aggregate zero elements, so when projecting a field like \verb|actors| that might have zero elements (due to its {\card{0}{\infty}} cardinality mode), a call to SQL's \verb|COALESCE()| function is required to insert an empty array as the value for movies with no actors. If we did not do that, movies with no actors be omitted from the results.

\vspace{0.5em}
\begin{quote}\small
\begin{verbatim}
main=# SELECT "Movie".title, "Movie".year,
  ( SELECT array_agg(("Person".name, "Person".age))
    FROM "Movie.directors"
    INNER JOIN "Person" on "Movie.directors".target = "Person".id
    WHERE "Movie".id = "Movie.directors".source
  ) AS directors,
  ( SELECT COALESCE(
      array_agg(("Person".name, "Movie.actors".character)),
      (ARRAY[])::record[]
    )
    FROM "Movie.actors"
    INNER JOIN "Person" on "Movie.actors".target = "Person".id
    WHERE "Movie".id = "Movie.actors".source
  ) AS actors
FROM "Movie";
\end{verbatim}
\end{quote}

\clearpage

\noindent
The results from this query look like this:

\vspace{0.5em}
\begin{quote}\small
\begin{verbatim}
    title     | year |         directors         |                      actors
--------------+------+---------------------------+----------------------------...
 Transistors  | 2007 | {"(\"Michael Cove\",60)"} | {"(\"Megan Wolf\",\"Meg Tec...
 Interception | 2010 | {"(\"Chris Nolens\",50)"} | {"(\"Elton Book\",Spiderfac...
 Open Hammer  | 2024 | {"(\"Chris Nolens\",50)"} | {"(\"Sillier Murphy\",\"Doc...
\end{verbatim}
\end{quote}

\vspace{0.5em}
\noindent
These results include all the relevant information in the correct nesting structure; Gel also provides a JSON output mode that directly aggregates to the structure from \cref{figmov} within the PostgreSQL query.
\section{Extended Proofs}
\label{sec-proofs}

We'll start with a more precise treatment of the notion of database extension that was introduced in \cref{sec-metatheory}, and a few properties about extension and its relationship to the static and dynamic semantics.

\begin{definition}[Database store extension]
    We say that $\dbn'$ \textit{extends} $\dbn$ if:
\begin{itemize}
    \item $\forall \idn. \;
    (\idn, \tbn, \lkn, \rec{\ldots}) \in \dbn
    \Rightarrow \exists \lkn, \ldots\, (\idn, \tbn, \lkn', \rec{\ldots}) \in \dbn'$
    \item $\forall \idn. \;
    (\idn, \tbn, \lkopen, \rec{\ldots}) \in \dbn'
    \Rightarrow (\idn, \tbn, \lkopen, \rec{\ldots}) \in \dbn$
\end{itemize}
\end{definition}
Extension captures a notion that $\dbn$ and $\dbn'$ agree on types and that ``unlocked'' tuples in the extended store are unchanged from the initial store.

\begin{lemma}\label{lem-extension-props}
    Extension is symmetric ($\dbn$ extends $\dbn$) and transitive (if $\dbn_2$ extends $\dbn_1$ and $\dbn_3$ extends $\dbn_2$, then $\dbn_3$ extends $\dbn_1$).
\end{lemma}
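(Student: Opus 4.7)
The plan is to unpack the definition of extension into its two clauses and verify each one separately for both properties. Since the statement is elementary, the proof amounts to set-theoretic bookkeeping, and I expect no significant obstacles; the only real care needed is to chain the two clauses in opposite directions when proving transitivity.

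For the first property (which the lemma labels as ``symmetry'' but which the parenthetical $\dbn$ extends $\dbn$ makes clear is reflexivity), both clauses hold trivially by taking the witness to be the tuple itself. The first clause asks that every tuple in $\dbn$ have some matching identifier-and-type-name tuple in $\dbn$, and the second clause asks that every $\lkopen$-marked tuple in $\dbn$ appear unchanged in $\dbn$; both are immediate.

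For transitivity, assume $\dbn_2$ extends $\dbn_1$ and $\dbn_3$ extends $\dbn_2$, and verify the two clauses of ``$\dbn_3$ extends $\dbn_1$'' independently. To establish the first clause I would chain forward: given $(\idn, \tbn, \lkn, \rec{\ldots}) \in \dbn_1$, the first clause of $\dbn_2$ extends $\dbn_1$ yields some $(\idn, \tbn, \lkn', \rec{\ldots}) \in \dbn_2$, and applying the first clause of $\dbn_3$ extends $\dbn_2$ to that tuple delivers the required $(\idn, \tbn, \lkn'', \rec{\ldots}) \in \dbn_3$. To establish the second clause I would chain backward: given $(\idn, \tbn, \lkopen, \rec{\ldots}) \in \dbn_3$, the second clause of $\dbn_3$ extends $\dbn_2$ places the same tuple in $\dbn_2$, and then the second clause of $\dbn_2$ extends $\dbn_1$ places the same tuple in $\dbn_1$.

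The only point needing any care is not to cross the two clauses: the first allows the lock mark and record to vary across stores, while the second demands they be preserved exactly for $\lkopen$-marked tuples. The forward/backward distinction in the chaining is determined by the logical form of each clause (existence in the extended store versus existence in the original), so once the clauses are written out side by side the composition is mechanical.
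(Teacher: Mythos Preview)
Your proposal is correct and matches the paper's approach: the paper's proof is a single sentence (``Immediate from definitions''), and you have simply spelled out what that unpacking looks like. Your observation that the lemma's use of ``symmetric'' is a misnomer for reflexivity is also accurate.
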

\begin{proof}
    Immediate from definitions.
\end{proof}

\begin{lemma}[Evaluation implies extension]\label{label-eval-pres-extension}
    If $\deval{\vctxn}{\dbn_1}{\dbn_2}{e}{\dsetmeta}$, then $\dbn_2$ extends $\dbn_1$.
\end{lemma}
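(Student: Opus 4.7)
The plan is to proceed by straightforward induction on the evaluation derivation $\deval{\vctxn}{\dbn_1}{\dbn_2}{e}{\dsetmeta}$, leaning heavily on \Cref{lem-extension-props} (reflexivity and transitivity of extension) to glue together conclusions from subderivations.

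First, I would dispatch the base cases \textsc{E-var}, \textsc{E-prim}, \textsc{E-emp}, and \textsc{E-name}, where the ``before'' and ``after'' stores are literally identical, so reflexivity of extension suffices. The bulk of the inductive cases --- \textsc{E-union}, \textsc{E-proj}, \textsc{E-backlink}, \textsc{E-shaping}, \textsc{E-shape}, \textsc{E-function}, \textsc{E-if}, \textsc{E-with}, \textsc{E-for}, and \textsc{E-orderby} --- all have the shape ``evaluate a sequence of subexpressions, threading the store through.'' In each such case, the inductive hypothesis applied to each premise yields a chain $\dbn_1 \preceq \dbn_a \preceq \dbn_b \preceq \cdots \preceq \dbn_n$ of extensions (writing $\preceq$ informally for the extension relation), and transitivity collapses this to $\dbn_1 \preceq \dbn_n$. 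None of these rules introduce tuples or flip edit marks directly, so there is nothing further to check.

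The two genuinely interesting cases are the mutation rules. For \textsc{E-insert}, the output store is $\dbn_{n+1} \cup \{(\idn,\tbn,\lkclosed,\rec{\ldots})\}$ with $\idn\#\dbn_{n+1}$. By the inductive hypothesis on the subevaluations (and transitivity) we already have $\dbn_1 \preceq \dbn_{n+1}$; extending by a single freshly-named locked tuple preserves both extension conditions vacuously, since every tuple in $\dbn_{n+1}$ reappears with the same type, and the only tuple present in the new store but not $\dbn_{n+1}$ is locked (so it cannot violate the ``unlocked tuples are unchanged'' condition).

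The main obstacle, as expected, is \textsc{E-update}, because it both removes and re-adds a tuple. I would split on the two disjuncts in the rule's case analysis. In the ``no-op'' branch ($\idn\#\dbn_{n+1}$ or $\lkn = \lkclosed$), the result store equals $\dbn_{n+1}$ and the IH plus transitivity finishes. In the ``effectful'' branch, the updated tuple was unlocked in $\dbn_{n+1}$ (by the side condition $\lkn = \lkopen$) and is replaced by a locked tuple with the same $\idn$ and $\tbn$ in $\dbn''$. I would verify the extension $\dbn_{n+1} \preceq \dbn''$ by hand: the first condition holds since $\idn$ still appears with the same type name, and every other identifier is untouched; the second condition holds because the only identifier whose edit mark changed was flipped from $\lkopen$ to $\lkclosed$, so no unlocked tuple of $\dbn''$ has been perturbed relative to $\dbn_{n+1}$. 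Composing this with $\dbn_0 \preceq \dbn_{n+1}$ from the IH (and transitivity) gives $\dbn_0 \preceq \dbn''$, completing the case and hence the induction.
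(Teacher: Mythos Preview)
Your proposal is correct and follows exactly the approach the paper takes: induction on the evaluation derivation, with the only non-routine cases being \textsc{E-insert} (the new tuple is locked, so extension is preserved) and \textsc{E-update} (the modified tuple keeps its type and becomes locked). Your write-up is simply a more detailed unpacking of the paper's one-line justification, including the explicit use of \Cref{lem-extension-props} to thread the store through the compositional cases.
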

\begin{proof}
    By induction on the evaluation derivation. \textsc{E-insert} adds new tuples, but they are locked, and \textsc{E-update} modifies tuples, but does not change the associated type, and always locks the modified tuple.
\end{proof}

Proving \Cref{label-eval-pres-extension} separately simplifies the statements of \cref{theorem-preservation} and \cref{theorem-progress}.

\begin{lemma}[Extension preserves value typing]\label{extension-typing}
If $\dbn_2$ extends $\dbn_1$,
\begin{itemize}
    \item If $\vdash_{\schema, \dbn_\textit{init}, \dbn_1} \dmeta : \typmeta$  then  $\vdash_{\schema, \dbn_\textit{init}, \dbn_2} \dmeta : \typmeta$.
    \item If $\vdash_{\schema, \dbn_\textit{init}, \dbn_1} \tmap{\dsetmeta}{\typmeta}{\cardmeta}$  then  $\vdash_{\schema, \dbn_\textit{init}, \dbn_2} \tmap{\dsetmeta}{\typmeta}{\cardmeta}$.
    \item If $\vdash_{\schema, \dbn_\textit{init}, \dbn_1} \vctxn : \tctxn$ then
    $\vdash_{\schema, \dbn_\textit{init}, \dbn_2} \vctxn : \tctxn$.
    \item If $\vdash_{\schema, \dbn_1} \tmap{\vsetmeta}{\tmeta}{\cardmeta}$
    then $\vdash_{\schema, \dbn_2} \tmap{\vsetmeta}{\tmeta}{\cardmeta}$.
\end{itemize}
\end{lemma}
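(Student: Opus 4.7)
The plan is to prove all four parts by mutual structural induction on the typing derivations, noting that parts 2, 3, and 4 are largely routine given part 1. Since value typing and sequence typing are mutually recursive (VT-set requires premises of the form $\vdash \dmeta_i : \typmeta$, and VT-stored-ref and VT-new-ref in turn require sequence-typing premises), I would formally carry out a single mutual induction on the sum of parts 1 and 2. Part 3 (environment typing) then follows by unfolding its definition and invoking part 2 pointwise on each binding, and part 4 is a standalone structural induction that mirrors the proof of part 1 but over the simpler grammar of stored values.

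For the main induction, the cases are as follows. VT-prim and ST-simple have no premise involving $\dbn_1$ and are immediate. The set/sequence rules VT-set and ST-set merely recurse into smaller values, so the induction hypothesis closes them. VT-stored-ref uses $\dbn_\textit{init}$, not $\dbn_1$, for the membership premise, so only the recursive sub-premises on the link-property sequences need to be re-derived, which is immediate from the induction hypothesis. The only genuinely interesting case is VT-new-ref, where the premise $(\idn, \tbn, \lkclosed, \rec{\ldots}) \in \dbn_1$ actually mentions $\dbn_1$ and must be transported to $\dbn_2$.

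The hard part will be this VT-new-ref case: I must show that a tuple that is locked in $\dbn_1$ remains present and locked in $\dbn_2$. The first clause of the extension definition gives me $(\idn, \tbn, \lkn', \rec{\ldots}) \in \dbn_2$ for some $\lkn'$, but only the second clause pins down the lock. I would argue by contradiction: suppose $\lkn' = \lkopen$; then by the second clause, $(\idn, \tbn, \lkopen, \rec{\ldots}) \in \dbn_1$, which contradicts the uniqueness of $\idn$ in a well-formed database store, since we already have $(\idn, \tbn, \lkclosed, \rec{\ldots}) \in \dbn_1$. Hence $\lkn' = \lkclosed$, and the sub-premises of VT-new-ref carry over by the induction hypothesis. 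Note that this step implicitly relies on the invariant that identifiers uniquely identify tuples (from the definition of $\vdash_\schema \dbn \,\mathtt{ok}$), which is why it is reasonable to assume well-formedness when the lemma is actually applied in the preservation proof.

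Parts 3 and 4 then fall out cleanly. For part 3, given $x \mapsto \dsetmeta \in \vctxn$ with $\tmap{x}{\typmeta}{\cardmeta} \in \tctxn$, the hypothesis gives $\vdash_{\schema, \dbn_\textit{init}, \dbn_1} \tmap{\dsetmeta}{\typmeta}{\cardmeta}$, and part 2 transports this to $\dbn_2$. For part 4, which concerns the simpler at-rest judgment $\vdash_{\schema, \dbn} \tmap{\vsetmeta}{\tmeta}{\cardmeta}$ from \Cref{figvaluetypes}, the only database-dependent rule is ST-ref, whose membership premise $(\idn, \tbn, \lkn, \rec{\ldots}) \in \dbn_1$ is preserved by the first clause of the extension definition directly, without any need to reason about locks.
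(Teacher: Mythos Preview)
Your proposal is correct and follows the same inductive approach as the paper, whose entire proof is ``Straightforward induction over the relevant typing derivation, and the definition of extension.'' You have simply filled in the details the paper elides, including the one nontrivial case (\textsc{VT-new-ref}). One minor correction: the uniqueness of identifiers in a database store is a \emph{syntactic} requirement on $\dbn$ (stated where database stores are first defined in \Cref{sec-schemas}), not a consequence of $\vdash_\schema \dbn\,\mathtt{ok}$; so your contradiction argument for \textsc{VT-new-ref} goes through without the extra well-formedness hypothesis you flagged as a concern.
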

\begin{proof}
    Straightforward induction over the relevant typing derivation, and the definition of extension.
\end{proof}

\subsection{Proof of Theorem 4.3, Preservation}

To restate, if we have:
\begin{enumerate}
    \item $\dbn_1$ \textit{extends} $\dbn_\textit{init}$
    \item $\tctxn \vdash_\schema \tmap{\emeta}{\typmeta}{\cardmeta}$,
    \item $\strut\vdash_{\schema} \dbn_\textit{init}\,\mathsf{ok}$,
    \item $\strut\vdash_{\schema} \dbn_\textit{before}\,\mathsf{ok}$,
    \item $\strut\vdash_{\schema, \dbn_\textit{init}, \dbn_\textit{before}} \vctxn : \tctxn$, and
    \item $\deval{\vctxn}{\dbn_\textit{before}}{\dbn_\textit{after}}{\emeta }{\dsetmeta}$,
\end{enumerate}
then $\vdash_{\schema, \dbn_\textit{init}, \dbn_\textit{after}}\strut \tmap{\dsetmeta}{\typmeta}{\cardmeta}$ and $\strut\vdash_{\schema} \dbn_\textit{after}\,\mathsf{ok}$

\medskip
\noindent
The proof is by induction on the evaluation derivation. Representative cases are shown:

\[
\inferrule[(E-var)]
{x \mapsto \dsetmeta \in \vctxn}
{\seval{\vctxn}{\dbn}{\dbn}{x}{\dsetmeta}}
\]
To show: $\mathstrut \vdash_{\schema, \dbn_\textit{init}, \dbn}\strut \tmap{\dsetmeta}{\typmeta}{\cardmeta}$ and $\mathstrut \vdash_\schema \dbn\,\textsf{ok}$. The latter is immediate from (4).

By inversion on the typing judgment (2) the first derivation must end with \textsc{T-var}, and we have $\tmap{x}{\typmeta}{\cardmeta} \in \tctxn$. From (5) and the definition of environment typing (page~\pageref{pos-envtypingdef} in \cref{sec-metatheory}) we have
$\vdash_{\schema, \dbn_\textit{init},\dbn}\strut \tmap{\dsetmeta}{\typmeta}{\cardmeta}$ as required.

\[
\inferrule[(E-emp)]
{ }
{\seval{\vctxn}{\dbn}{\dbn}{\varnothing_\typmeta}{[]}}
\]
To show: $\mathstrut \vdash_{\schema, \dbn_\textit{init}, \dbn}\strut \tmap{[]}{\typmeta}{\cardmeta}$ and $\mathstrut \vdash_\schema \dbn\,\textsf{ok}$. The latter is immediate from (4).

By inversion on the typing judgment, the first derivation must end with \textsc{T-emp}, $\cardmeta = \card{0}{0}$. We derive $\vdash_{\schema, \dbn_\textit{init}, \dbn} \tmap{[]}{\typmeta}{\card{0}{0}}$ as required with \textsc{VT-set}.

\[
\inferrule[(E-union)]
{
\deval{\vctxn}{\dbn_a}{\dbn_b}{\emeta_1}{\dsetmetax{1}}
\\\\
\deval{\vctxn}{\dbn_b}{\dbn_c}{\emeta_2}{\dsetmetax{2}}
\\
\left[ \dsetmetax{1} , \dsetmetax{2} \right] \approx \dsetmeta
}
{\deval{\vctxn}{\dbn_a}{\dbn_c}{\emeta_1 \cup \emeta_2}{\dsetmeta}}
\]
To show: $\mathstrut \vdash_{\schema, \dbn_\textit{init}, \dbn_c} \tmap{\dsetmeta}{\typmeta}{\cardmeta}$ and $\mathstrut \vdash_\schema \dbn_c\,\textsf{ok}$.

By inversion on the typing judgment, the derivation must end with \textsc{T-union} and we have $\tctxn \vdash_\schema \tmap{\emeta_1}{\typmeta}{\cardmeta_1}$ and $\tctxn \vdash_\schema \tmap{\emeta_2}{\typmeta}{\cardmeta_2}$ and $\cardmeta = \cardmeta_1 + \cardmeta_2$.

Because evaluation preserves extension and extension is transitive (\cref{label-eval-pres-extension,lem-extension-props}), $\dbn_b$ extends $\dbn_\textit{init}$, $\dbn_b$ extends $\dbn_a$, and  $\dbn_c$ extends $\dbn_b$. Because extension preserves value typing (\cref{extension-typing}), $\strut\vdash_{\schema, \dbn_\textit{init}, \dbn_b} \vctxn : \tctxn$.

By the induction hypothesis, $\mathstrut\vdash_{\schema, \dbn_\textit{init}, \dbn_b} \dsetmetax{1} : \cardmeta_1$
and $\mathstrut\vdash_{\schema}\dbn_b\,\textsf{ok}$, and again by the induction hypothesis
$\mathstrut\vdash_{\schema, \dbn_\textit{init}, \dbn_c} \dsetmetax{2} : \cardmeta_2$
and $\mathstrut\vdash_{\schema}\dbn_c\,\textsf{ok}$.

By the third premise, $\dsetmeta$ is some permutation of the concatenation of $\dsetmetax{1}$ and $\dsetmetax{2}$, and all the elements $\dmeta$ in those sets have
$\vdash_{\sigma, \dbn_\textit{init}, \dbn_c} \tmap{\dmeta}{\typmeta}{\cardmeta}$. (This involves a use of \cref{extension-typing} for the elements in $\dsetmetax{1}$.)

The result follows from \textsc{VT-set} and the following lemma, which ensures that the combined list has a length in $\cardmeta_1 + \cardmeta_2$:

\begin{lemma}
    If $l_i \in \cardmeta_i$ for $i = 1\ldots n$, then $l_1 + \ldots + l_n \in \cardmeta_1 + \ldots + \cardmeta_n$.
\end{lemma}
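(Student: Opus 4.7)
The plan is to proceed by induction on $n$, reducing the general case to the binary one. The first step is to establish that the rounded addition on cardinality modes is associative and commutative, so that $\cardmeta_1 + \ldots + \cardmeta_n$ is well-defined without parenthesization. This is a finite case check: for lower bounds, where the summands come from $\{0,1\}$ and $\lfloor \cdot \rfloor$ rounds down into $\{0,1\}$, both $\lfloor \lfloor a+b\rfloor + c\rfloor$ and $\lfloor a + \lfloor b+c\rfloor\rfloor$ equal $1$ if any of $a,b,c$ is $1$, and $0$ otherwise. For upper bounds, where $\lceil \cdot \rceil$ rounds up into $\{0,1,\many\}$, both sides equal $\many$ whenever any summand is $\many$ or the plain sum exceeds $1$, and otherwise equal the plain sum.

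For the base case $n=2$, suppose $l_1 \in \card{a_1}{b_1}$ and $l_2 \in \card{a_2}{b_2}$, so $a_i \leq l_i \leq b_i$ (treating $\many$ as $+\infty$ for the upper bound comparison). By monotonicity of natural addition, $a_1 + a_2 \leq l_1 + l_2 \leq b_1 + b_2$. Since $\lfloor \cdot \rfloor$ is a lower bound on its argument, $\lfloor a_1 + a_2 \rfloor \leq a_1 + a_2 \leq l_1 + l_2$; dually, $l_1 + l_2 \leq b_1 + b_2 \leq \lceil b_1 + b_2 \rceil$. Hence $l_1 + l_2 \in [\lfloor a_1 + a_2\rfloor, \lceil b_1 + b_2\rceil] = \cardmeta_1 + \cardmeta_2$.

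For the inductive step, write $l = l_1 + \ldots + l_n$ and $\cardmeta = \cardmeta_1 + \ldots + \cardmeta_n$. By the inductive hypothesis, $l \in \cardmeta$, and by the binary case, $l + l_{n+1} \in \cardmeta + \cardmeta_{n+1}$, which equals $\cardmeta_1 + \ldots + \cardmeta_{n+1}$ by the associativity established in the first step.

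The only mildly tedious part is verifying associativity, since the arithmetic on cardinalities is not ordinary integer addition but a truncated-rounded variant. Beyond that, the proof is just an appeal to monotonicity of $\lfloor \cdot \rfloor$ and $\lceil \cdot \rceil$ together with the fact that $l_i \leq b_i$ continues to hold after summing.
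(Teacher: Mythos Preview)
Your argument is correct: the binary case follows from monotonicity of rounding, associativity reduces to the observation that the rounded-sum on lower bounds is a bounded $\max$ and on upper bounds saturates at $\many$ once the total exceeds $1$, and the induction is routine. The paper's own proof is a single line (``Follows from definition of cardinality mode''), so your version is simply a careful unpacking of the same idea rather than a different route.
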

\begin{proof}
    Follows from definition of cardinality mode.
\end{proof}

\[
\inferrule[(E-name)]
{\bigl[\{\refval{\idn}{}\} \mid (\idn, \tbn, \lkn, \rec{\ldots}) \in \dbnx{init}\bigr] \approx \dsetmeta}
{\deval{\vctxn}{\dbn}{\dbn}{\tbn}{\dsetmeta}}
\]
To show: $\mathstrut \vdash_{\schema, \dbn_\textit{init}, \dbn}\strut \tmap{\dsetmeta}{\typmeta}{\cardmeta}$ and $\mathstrut \vdash_\schema \dbn\,\textsf{ok}$. The latter is immediate from (4).

By inversion on the typing judgment, the derivation must end with \textsc{T-name}, $\typmeta = \reft{\tbn}{}$, and $\cardmeta = \card{0}{\infty}$. The result follows immediately from \textsc{VT-stored-ref} and \textsc{VT-set}.

\[\inferrule[(E-proj)]
{
\deval{\vctxn}{\dbn_a}{\dbn_b}{\emeta}{\dsetmetax{e}}
\\\\
 \left[ \mathsf{project}(\dbn_\textit{init}, \lbn, \dmeta) \mid \dmeta \in \dsetmetax{e} \right] \approx \dsetmetax{b}
}
{\deval{\vctxn}{\dbn_a}{\dbn_b}{\emeta \cdot \lbn}{\dsetmetax{b}}}
\]
To show: $\mathstrut \vdash_{\schema, \dbn_\textit{init}, \dbn_b}\strut \tmap{\dsetmetax{b}}{\typmeta_\textit{after}}{\cardmeta}$ and $\mathstrut \vdash_\schema \dbn_b\,\textsf{ok}$. The latter follows straightforwardly from the induction hypothesis and the extension lemmas, the former is more interesting.

By inversion on the typing judgment, the derivation must end in one of two ways, \textsc{T-proj-db} or \textsc{T-proj-ext}.

In the first case, $\typmeta_\textit{after} = \tmeta$, $\cardmeta'' = \cardmeta \times \cardmeta'$,
and we have $\tctxn \vdash_{\schema} \tmap{e}{\reft{\tbn}{\tmap{\lbnx{1}}{\typmeta_1}{\cardmeta_1},\ldots,\tmap{\lbnx{n}}{\typmeta_n}{\cardmeta_n}}}{\cardmeta'}$
with $\lbn$ not among the $\lbn_i$ and $\tmeta$ and $\cardmeta$ as the type that the schema assigns to $\lbn$.  By the induction hypothesis $\vdash_{\schema,\dbn_\textit{init}, \dbn_b} \tmap{\dsetmetax{e}}{\reft{\tbn}{\tmap{\lbnx{1}}{\typmeta_1}{\cardmeta_1},\ldots,\tmap{\lbnx{n}}{\typmeta_n}{\cardmeta_n}}}{\cardmeta'}$. For each $\dmeta \in \dsetmetax{e}$, projection will fall through to the second case, resulting in a $\vsetmeta$ such that $\vdash_{\schema, \dbn_\textit{init}} \tmap{\vsetmeta}{\tmeta}{\cardmeta}$. This is the stored value typing judgment from \cref{figvaluetypes} rather than the computed value typing judgment from \cref{figcomputedvaluetypes}, so we need an extra lemma (and an appeal to extension lemmas) to finish this case.

\begin{lemma}
If $\vdash_{\schema, \dbn} \tmap{\vsetmeta}{\tmeta}{\cardmeta}$ (stored value typing) then $\vdash_{\schema, \dbn, \dbn'} \tmap{\vsetmeta}{\tmeta}{\cardmeta}$ (computed value typing).
\end{lemma}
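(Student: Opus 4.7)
The plan is to proceed by induction on the stored-value typing derivation. Because the judgment $\vdash_{\schema, \dbn}$ has two interdependent forms---one for value sequences (\textsc{ST-set}) and one for individual values (\textsc{ST-prim} and \textsc{ST-ref})---I will strengthen the statement to a mutual induction, simultaneously proving the companion claim that $\vdash_{\schema, \dbn} \vmeta : \tmeta$ implies $\vdash_{\schema, \dbn, \dbn'} \vmeta : \tmeta$ for every $\dbn'$. The key identification that makes the proof go through is that the store $\dbn$ appearing in the stored-value judgment plays exactly the role of $\dbnx{init}$ in the computed-value judgment, while $\dbn'$ is an arbitrary extended store that is threaded through the derivation but never inspected.

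For the \textsc{ST-set} case, I would apply the individual-value IH to each $\vmeta_i$ in the sequence and reassemble the result with \textsc{VT-set}, preserving the cardinality bound $\card{i_\textit{lo}}{i_\textit{hi}}$. For \textsc{ST-prim}, the primitive typing premise $\vsmeta : \tsmeta$ discharges \textsc{VT-prim} directly. The only case with any content is \textsc{ST-ref}: given a stored reference $\refval{\idn}{\imaprec{\lbnx{1}}{\vssetmetax{1}}\recsep\ldots\recsep\imaprec{\lbnx{n}}{\vssetmetax{n}}}$ with $(\idn, \tbn, \lkn, \rec{\ldots}) \in \dbn$, each $\vssetmetax{i}$ is syntactically a scalar value sequence (hence a computed value sequence via the refinement noted in the caption of \Cref{figvaluetypes}), and the sequence-level IH yields $\vdash_{\schema, \dbn, \dbn'} \vssetmetax{i} : \tsmeta_i \hash \cardmeta_i$. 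The tuple's presence in $\dbn$ satisfies the $\dbnx{init}$-membership premise of \textsc{VT-stored-ref} under the identification $\dbnx{init} = \dbn$, completing the case.

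The main obstacle is really just careful bookkeeping around the refinement relationships between the two syntactic worlds: scalar values inhabit the syntax of computed values, scalar types inhabit the syntax of general types, and the $\iassgn$ marks that appear on every record entry of a stored reference are absorbed by \textsc{VT-stored-ref}'s quantification over either visibility mark via $\umaprec{\cdot}{\cdot}$. No hypothesis about $\dbn'$ is ever needed in the derivation we construct, so the lemma holds vacuously for any choice of extended store whatsoever---in particular, for the $\dbn_b$ produced by an evaluation step in the calling preservation proof for projection.
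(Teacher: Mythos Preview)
Your proposal is correct and follows essentially the same approach as the paper, which dispatches the lemma in one line as ``Proof by cases, uses of \textsc{ST-ref} become uses of \textsc{VT-stored-ref}.'' Your mutual induction is simply the carefully unrolled version of that same argument, and your observation that the bounded nesting depth of stored values (references carry only scalar link properties) makes the recursion essentially finite explains why the paper can get away with saying ``by cases'' rather than ``by induction.''
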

\begin{proof}
    Proof by cases, uses of \textsc{ST-ref} become uses of \textsc{VT-stored-ref}.
\end{proof}

If the derivation ends with \textsc{T-proj-ext}, then each $\dmeta \in \dsetmetax{e}$ has an associated sequence of values for the label $\lbn$, so the projection function will always return the mapping stored in the value. The result follows from inverting the value typing judgment; whether it is proved with \textsc{VT-stored-ref} or \textsc{VT-new-ref} there is a premise that all values stored within have the expected type.

The expected cardinality follows from the relevant property of cardinality modes, given by this lemma:

\begin{lemma}
    If $l_i \in \cardmeta$ for $i = 1\ldots n$, and $n \in \cardmeta'$ then $l_1 + \ldots + l_n \in \cardmeta \times \cardmeta'$.
\end{lemma}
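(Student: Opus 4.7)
The plan is to unfold the definitions and verify the two bounds on the sum directly. Write $\cardmeta = [a, b]$ and $\cardmeta' = [a', b']$ with $a, a' \in \{0, 1\}$ and $b, b' \in \{0, 1, \many\}$, so that $\cardmeta \times \cardmeta' = [a \cdot a', b \cdot b']$. The hypotheses give $a \leq l_i \leq b$ for each $i \in \{1, \ldots, n\}$ and $a' \leq n \leq b'$, and the goal is $a \cdot a' \leq l_1 + \cdots + l_n \leq b \cdot b'$.

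For the lower bound, I would sum the inequality $l_i \geq a$ over $i = 1, \ldots, n$ to get $l_1 + \cdots + l_n \geq n \cdot a$, and then use $n \geq a'$ together with $a \geq 0$ to conclude $n \cdot a \geq a' \cdot a$. For the upper bound, I would split on whether either of the bounds $b, b'$ equals $0$: if $b = 0$ then every $l_i$ is $0$ and the sum vanishes, while if $b' = 0$ then $n = 0$ and the sum is empty; in both cases $b \cdot b' = 0$ under the semiring convention that $0 \cdot \many = 0$, which is forced by the axiom that $[0,0]$ is a multiplicative absorber. In the remaining case $b, b' \geq 1$, the chain $l_1 + \cdots + l_n \leq n \cdot b \leq b' \cdot b$ suffices, with the right-hand side being either $1$ (when $b = b' = 1$) or $\many$ (otherwise).

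The only subtle point, such as it is, is the $0 \cdot \many$ convention needed to make the semiring laws go through; once this is fixed, the arithmetic in both directions reduces to a small case split on the finitely many values of $a, b, a', b'$.
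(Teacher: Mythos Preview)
Your argument is correct and is precisely the unfolding that the paper's one-line ``Follows from definition of cardinality mode'' gestures at; you have simply made the case split on $b,b'$ and the $0 \cdot \many = 0$ convention explicit, which the paper leaves implicit in calling the cardinality modes a semiring with additive unit $\card{0}{0}$.
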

\begin{proof}
    Follows from definition of cardinality mode.
\end{proof}

\[
\inferrule[(E-insert)]
{
\schema(\tbn) = \rec{\tmap{\lbnx{1}}{\tmeta_1}{\cardmeta_1},\ldots,\tmap{\lbnx{n}}{\tmeta_n}{\cardmeta_n}}
\\
\deval{\vctxn}{\dbn_1}{\dbn_2}{\emeta_1}{\dsetmetax{1}}
\quad
\cdots
\quad
\deval{\vctxn}{\dbn_n}{\dbn_{n+1}}{\emeta_n}{\dsetmetax{n}}
\\
\dsetmetax{1} : \tmeta_1 \rhd \vsetmetax{1}
\quad
\cdots
\quad
\dsetmetax{n} : \tmeta_n \rhd \vsetmetax{n}
\\
\idn\#\dbn_\textit{init}
\\
\idn\#\dbn_{n+1}
\\
\dbn' = \dbn_{n+1} \cup \left\{\left(
  \idn, \tbn, \lkclosed\,,
   \rec{
     \maprec{\lbnx{1}}{\vsetmetax{1}},\ldots,
     \maprec{\lbnx{n}}{\vsetmetax{n}}
   }
\right)\right\}
}
{\deval{\vctxn}{\dbn_1}{\dbn'}{
\mathsf{insert}\,\tbn\,
\rec{\maprec{\lbnx{1}}{\emeta_1}\recsep\ldots\recsep\maprec{\lbnx{n}}{\emeta_n}}
}
{[
\refval{\idn}{
\imaprec{\lbnx{1}}{\dsetmetax{1}},\ldots,
\imaprec{\lbnx{n}}{\dsetmetax{n}}}
]}
}
\]
To show: $\mathstrut \vdash_{\schema, \dbn_\textit{init}, \dbn'}\strut \tmap{[
\refval{\idn}{
\imaprec{\lbnx{1}}{\dsetmetax{1}},\ldots,
\imaprec{\lbnx{n}}{\dsetmetax{n}}}
]}{\typmeta}{\cardmeta}$ and $\mathstrut \vdash_\schema \dbn'\,\textsf{ok}$.

By inversion on the typing judgment, the derivation must end with \textsc{T-insert} with $n$ sub-derivations that end with either \textsc{TS-simple} or \textsc{TS-reference}. We have \schema(\tbn) = \rec{\tmap{\lbnx{1}}{\tmeta_1}{\cardmeta_1},\ldots,\tmap{\lbnx{n}}{\tmeta_n}{\cardmeta_n}}, and for the $n$ other sub derivations $0 \leq i \leq n$, we have $\tctxn \vdash_{\schema} \emeta_i \div \typmeta_i/\tmeta_i \hash \cardmeta_i$.

For each of the $n$ sub-derivations, we call the I.H., use the extension lemmas as needed, and perform inversion on the resulting derivation to get $\mathstrut\vdash_{\schema, \dbn_{\textit{init}}, \dbn_{i+1}} \tmap{\dsetmeta_i}{\typmeta_i}{\cardmeta'_i}$ where $\cardmeta'_i \leq \cardmeta_i$ and $\mathstrut\vdash_\schema \dbn_{i+1}\,\mathsf{ok}$. That means each
$\vdash_{\schema, \dbn_{\textit{init}}, \dbn_{\textit{i+1}}} \tmap{\dsetmeta_i}{\typmeta_i}{\cardmeta_i}$ as well by the definition of value sequence typing.

The identifier $\idn$ does \textit{not} appear in $\dbn_\textit{init}$, but the value we return has all the labels that appear in the schema, so the only way to type the result is with \textsc{VT-new-ref} and \textsc{VT-set}.

We have $\vdash_\schema \dbn_{n+1}\,\mathsf{ok}$, and to show that $\vdash_\schema \dbn'\,\mathsf{ok}$ it suffices to show that $\vdash_{\schema, \dbn_{n+1}} \tmap{\vsetmetax{i}}{\tmeta_i}{\cardmeta_i}$. For this, we use a lemma:

\begin{lemma}[Preservation for storage stripping]\label{lemma-stripping-pres}
    If $\vdash_{\schema, \dbn_\textit{init}, \dbn} \tmap{\dsetmeta}{\typmeta}{\cardmeta}$ (computed value typing),
    $\tctxn \vdash_{\schema} \emeta \div \typmeta/\tmeta \hash \cardmeta$, and $\dsetmeta : \tmeta \rhd \vsetmeta$, then $\vdash_{\schema, \dbn} \tmap{\vsetmeta}{\tmeta}{\cardmeta}$ (stored value typing).
\end{lemma}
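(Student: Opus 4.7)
The plan is to proceed by simultaneous induction on the derivation of the stripping relation in both of its forms: the sequence-level judgment $\dsetmeta : \tmeta \rhd \vsetmeta$, handled by \textsc{ES-sequence}, and the singular judgment $\dmeta : \tmeta \rhd \vmeta$, handled by \textsc{ES-simple} and \textsc{ES-reference}. These three constructors line up one-for-one with the target stored-value typing rules \textsc{ST-set}, \textsc{ST-prim}, and \textsc{ST-ref}, so each case proceeds by inverting the computed value typing hypothesis to retrieve per-element or per-label premises, applying the inductive hypothesis where applicable, and closing with the matching stored typing rule. The role of the check judgment $\tctxn \vdash_{\schema} \emeta \div \typmeta / \tmeta \hash \cardmeta$ is to bridge any slack between the computed type $\typmeta$ and the stored type $\tmeta$: it tells us their scalar contents agree and that every cardinality on the computed side is $\leq$ its counterpart on the stored side.

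The \textsc{ES-simple} case is immediate: \textsc{VT-prim} and \textsc{ST-prim} share the primitive typing premise $\vsmeta : \tsmeta$, so the conclusion transports directly. The \textsc{ES-sequence} case discharges the outer cardinality $\cardmeta$: inversion on \textsc{VT-set} yields both the length bound $i_\mathit{lo} \leq n \leq i_\mathit{hi}$ and per-element typings $\vdash_{\schema, \dbn_\textit{init}, \dbn} \dmeta_i : \typmeta$; the inductive hypothesis applied to each $\dmeta_i : \tmeta \rhd \vmeta_i$ yields $\vdash_{\schema, \dbn} \vmeta_i : \tmeta$, and \textsc{ST-set} reassembles these, with the length bound carried over unchanged because stripping is length-preserving.

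The \textsc{ES-reference} case is the one I expect to be the main obstacle. Here $\refval{\idn}{\ldots}$ strips against $\reft{\tbn}{\tmap{\lbnx{1}}{\tsmeta_1}{\cardmeta_1'}\recsep\ldots}$ into $\refval{\idn}{\imaprec{\lbnx{1}}{\vsmeta_1}\recsep\ldots}$. Inversion on the computed value typing of the source splits into \textsc{VT-stored-ref} and \textsc{VT-new-ref}: the former locates the tuple in $\dbn_\textit{init}$, the latter in $\dbn$. Since the target rule \textsc{ST-ref} requires the tuple present in $\dbn$, the \textsc{VT-stored-ref} branch depends on $\dbn$ extending $\dbn_\textit{init}$---an invariant every calling context (\cref{theorem-preservation}, \textsc{E-insert}, \textsc{E-update}) always supplies, but which is not explicit in the lemma's hypotheses. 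The natural fix is to make the extension premise explicit in the statement before completing the induction, after which \cref{extension-typing} or a direct inspection of the definition of extension yields the needed tuple in $\dbn$.

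The remaining bookkeeping lies in the link-property cardinalities. Inversion on the computed value typing gives each stored scalar sequence typed at $\cardmeta_i'$, while \textsc{TS-reference} in the check judgment supplies $\cardmeta_i' \leq \cardmeta_i$, where $\cardmeta_i$ is the schema-prescribed cardinality appearing in $\tmeta$. A small monotonicity property of cardinality modes---if $n \in \cardmeta_i'$ and $\cardmeta_i' \leq \cardmeta_i$ then $n \in \cardmeta_i$---lets each per-label derivation be repackaged at the cardinality bound that \textsc{ST-ref} requires. With the tuple retrieved from $\dbn$ and each link-property sequence shown to inhabit the right bound, \textsc{ST-ref} closes the reference case and completes the induction.
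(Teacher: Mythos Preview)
Your argument is correct in substance and, in one respect, more careful than the paper's: your observation that the \textsc{VT-stored-ref} branch needs $\dbn$ to extend $\dbn_\textit{init}$ in order to supply the tuple premise of \textsc{ST-ref} is a genuine gap in the lemma statement as printed, and your proposed fix (make the extension hypothesis explicit) is the right one. The paper's own proof is the single line ``Inspection of the possible derivations of $\tctxn \vdash_{\schema} \emeta \div \typmeta/\tmeta \hash \cardmeta$,'' i.e.\ a two-way case split on \textsc{TS-simple} versus \textsc{TS-reference}, which simply does not surface this issue.

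The organizational difference is minor: you induct on the stripping derivation (\textsc{ES-sequence} over \textsc{ES-simple}/\textsc{ES-reference}), whereas the paper cases on the check judgment. Because stripping has no recursion beyond one sequence-to-element unfolding, both decompositions visit the same two leaf cases and discharge them the same way. One small wrinkle in your framing: when you invoke the ``inductive hypothesis'' on each $\dmeta_i : \tmeta \rhd \vmeta_i$, the element-level statement you need is not literally the lemma (which is phrased with the sequence-level check judgment $\tctxn \vdash_{\schema} \emeta \div \typmeta/\tmeta \hash \cardmeta$), so you are implicitly relying on a companion element-level statement whose hypothesis is just the $\typmeta/\tmeta$ relationship extracted from \textsc{TS-simple} or \textsc{TS-reference}. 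The paper's organization sidesteps this by casing on the check judgment first, which fixes the $\typmeta/\tmeta$ shape up front and lets the per-element reasoning proceed without restating the lemma.
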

\begin{proof}
    Inspection of the possible derivations of $\tctxn \vdash_{\schema} \emeta \div \typmeta/\tmeta \hash \cardmeta$.
\end{proof}

\[
\inferrule[(E-update)]
{
\seval{\vctxn}{\dbn_0}{\dbn_1}{\emeta}{[\dmeta]}
\\
\dmeta = \refval{\idn}{\ldots}
\\\\
\seval{\vctxn, x \mapsto [\dmeta]}{\dbn_1}{\dbn_2}{\emeta_1}{\dsetmetax{1}}
\quad
\cdots
\quad
\seval{\vctxn, x \mapsto [\dmeta]}{\dbn_n}{\dbn_{n+1}}{\emeta_n}{\dsetmetax{n}}
\\
\dbn_{n+1} - \left\{\left(\idn, \tbn, \lkn, \rec{
\maprec{\lbnprimex{1}}{\vsetmetaprimex{1}}\recsep\ldots\recsep\maprec{\lbnprimex{n}}{\vsetmetaprimex{n}}
}\right)\right\} = \dbn'
\\
\dbn'\#\idn
\\\\
\schema(\tbn) = \rec{\tmap{\lbnx{1}}{\tmeta_1}{\cardmeta_1},\ldots,\tmap{\lbnx{n}}{\tmeta_n}{\cardmeta_n},\ldots}
\\
\dsetmetax{1} : \tmeta_1 \rhd \vsetmetax{1}
\quad
\cdots
\quad
\dsetmetax{n} : \tmeta_n \rhd \vsetmetax{n}
\\
\dbn'' = \dbn' \cup \left\{\left(
\idn, \tbn, \lkclosed\,, \rec{
\left\{
\maprec{\lbnprimex{i}}{\vsetmetaprimex{i}}
\mid \lbnprimex{i} \notin \{ \lbnx{1}, \ldots, \lbnx{n} \}
\right\}
\cup
\left\{
\maprec{\lbnx{1}}{\vsetmetax{1}},\ldots,\maprec{\lbnx{n}}{\vsetmetax{n}}
\right\}
}
\right)\right\}
\\\\
{\begin{cases}
\dbn_\textit{result} = \dbn_{n+1}, \dsetmetaprime = [] & \text{if }  \idn\#\dbn_{n+1} \text{ or if } \lkn = \lkclosed\\
\dbn_\textit{result} = \dbn'', \dsetmetaprime = [\refval{\idn}{\imaprec{\lbnx{1}}{\dsetmetax{1}},\ldots,\imaprec{\lbnx{n}}{\dsetmetax{n}}}] & \textit{if } \dbn_{n+1} \neq \dbn' \text{ and } \lkn = \lkopen
\end{cases}}
}
{\seval{\vctxn}{\dbn_0}{\dbn_\textit{result}}
{
\mathsf{update}^\dagger\,\emeta\,\mathsf{set}\,x.\rec{\maprec{\lbnx{1}}{\emeta_1}\recsep\ldots\recsep\maprec{\lbnx{n}}{\emeta_n}}
}
{
\dsetmetaprime
}
}
\]
By the same logic used for the \textsc{E-insert} case we can get  for $1 \leq i \leq n$,
$\tctxn \vdash_{\schema} \emeta_i \div \typmeta_i/\tmeta_i \hash \cardmeta_i$ and
$\mathstrut\vdash_{\schema, \dbn_{\textit{init}}, \dbn_{i+1}} \tmap{\dsetmetax{i}}{\typmeta_i}{\cardmeta_i}$ where $\cardmeta'_i \leq \cardmeta_i$ and $\mathstrut\vdash_\schema \dbn_{i+1}\,\mathsf{ok}$.
 The rest of the case depends on which of the two actions update takes:
\begin{itemize}
    \item $\text{if }  \idn\#\dbn_{n+1} \text{ or if } \lkn = \lkclosed$, then no values are returned and the result is immediate from \textsc{VS-set}.
    \item Otherwise, we are performing an actual update. By the properties of database extension, the tuple that we're updating in $\dbn_{n+1}$ must also appear in $\dbn_\textit{init}$, which means that the result follows from \textsc{VT-stored-ref} and \textsc{VT-set}. We also need to show $\mathstrut\vdash_\schema \dbn''\,\mathsf{ok}$, which follows from the same \cref{lemma-stripping-pres} used for the \textsc{E-insert} case.
\end{itemize}


\subsection{Proof of Theorem 4.4, Totality}

To restate, if we have:
\begin{itemize}
    \item $\dbn_1$ \textit{extends} $\dbn_\textit{init}$
    \item $\tctxn \vdash_\schema \tmap{\emeta}{\typmeta}{\cardmeta}$,
    \item $\vdash_\schema \dbn_\textit{init} \,\mathtt{ok}$,
    \item $\vdash_\schema \dbn_1 \,\mathtt{ok}$,
    \item $\strut\vdash_{\schema, \dbn_\textit{init}} \vctxn : \tctxn$,
\end{itemize}
then there exists some $\dsetmeta$ and $\dbn_2$ such that
$\deval{\vctxn}{\dbn_1}{\dbn_2}{\emeta }{\dsetmeta}$.

\medskip
\noindent
The proof is by induction on the typing derivation: we must construct an expression typing derivation in each case. Representative cases are shown:

\[
\inferrule[(T-var)]
{ \tmap{x}{\typmeta}{\cardmeta} \in \tctxn}
{\tctxn \vdash_{\schema} \tmap{x}{\typmeta}{\cardmeta}}
\]
By the definition of environment typing (page~\pageref{pos-envtypingdef} in \cref{sec-metatheory}), there must be $\dsetmeta$ such that $x \mapsto \dsetmeta \in \vctxn$, so the result follows by \textsc{E-var} with $\dbn_2 =\dbn_1$.

\[
\inferrule[(T-prim)]
{\vsmeta : \tsmeta}
{\tctxn \vdash_{\schema} \vsmeta : \tsmeta \hash [1,1]}
\qquad
\inferrule[(T-emp)]
{ }
{\tctxn \vdash_{\schema} \varnothing_\typmeta : \typmeta \hash [0,0]}
\]
In the first case by \textsc{E-prim} ($\dsetmeta = [\vsmeta]$ and $\dbn_2 = \dbn_1$), in the second case by \textsc{E-emp} ($\dsetmeta = []$ and $\dbn_2 = \dbn_1$).

\[
\inferrule[(T-union)]
{
\tctxn \vdash_{\schema} \tmap{\emeta_1}{\typmeta}{\cardmeta_1}
\\
\tctxn \vdash_{\schema} \emeta_2 : \typmeta \hash \cardmeta_2
}
{\tctxn \vdash_{\schema} \emeta_1 \cup \emeta_2 : \typmeta \hash \cardmeta_1 + \cardmeta_2}
\]
By the I.H. on the first premise we have a $\dsetmetax{1}$ and a $\dbn_b$ with $\deval{\vctxn}{\dbn_1}{\dbn_b}{\emeta_1}{\dsetmetax{1}}$. Evaluation preserves extension (\cref{label-eval-pres-extension}), so we can invoke the I.H. on the second premise to get $\dsetmetax{2}$ and a $\dbn_c$ with $\deval{\vctxn}{\dbn_b}{\dbn_c}{\emeta_2}{\dsetmetax{2}}$. It is always possible to compute a $\dsetmeta$ such that $\left[\dsetmetax{1}, \dsetmetax{2}\right] \approx \dsetmeta$ by concatenating the two sequences, so the result follows by \textsc{E-union} with $\dbn_2 = \dbn_c$.


\[
\inferrule[(T-proj-db)]
{
\tctxn \vdash_{\schema} \tmap{e}{\reft{\tbn}{\tmap{\lbnx{1}}{\typmeta_1}{\cardmeta_1},\ldots,\tmap{\lbnx{n}}{\typmeta_n}{\cardmeta_n}}}{\cardmeta'}
\\\\
\lbn \notin \{ \lbnx{1}, \ldots, \lbnx{n} \}
\\
\schema(\tbn) = \rec{\ldots\recsep\tmap{\lbn}{\tmeta}{\cardmeta}\recsep\ldots}
}
{\tctxn \vdash_{\schema} \tmap{e \cdot \lbn}{\tmeta}{\cardmeta \times \cardmeta'}}
\]
By the I.H. on the first premise we have a $\dsetmetaprime$ and a $\dbn_b$ with $\deval{\vctxn}{\dbn_1}{\dbn_b}{\emeta}{\dsetmetaprime}$, and by preservation
$\vdash_{\schema, \dbn_\textit{init}, \dbn_b} \tmap{\dsetmetaprime}{\reft{\tbn}{\tmap{\lbnx{1}}{\typmeta_1}{\cardmeta_1},\ldots,\tmap{\lbnx{n}}{\typmeta_n}{\cardmeta_n}}}{\cardmeta'}$. Given that  $\lbn$ is in the schema for $\tbn$ and is not among the $\lbnx{i}$, the only possible way the typing of $\dsetmetaprime$ could conclude is with \textsc{VT-stored-ref}.

From this, we can conclude that each value in $\dsetmetaprime$ is a reference in the database, and given that $\dbn_\textit{init}$ is well typed, this allows us to compute $\textsf{project}(\dbn_\textit{init}, \lbn, \dmeta)$ for each $\dmeta \in \dsetmetaprime$, always taking the second branch of the definition of projection that looks up a value from the database. That allows us to construct a $\dsetmetax{b}$ such that $\left[ \mathsf{project}(\dbn_\textit{init}, \lbn, \dmeta) \mid \dmeta \in \dsetmetaprime \right] \approx \dsetmetax{b}$ and conclude with \textsc{E-proj}.

\[
\inferrule[(T-proj-ext)]
{
\tctxn \vdash_{\schema} \tmap{e}{\reft{\tbn}{\ldots\recsep\tmap{\lbn}{\typmeta}{\cardmeta}\recsep\ldots}}{\cardmeta'}
}
{\tctxn \vdash_{\schema} \tmap{e \cdot \lbn}{\typmeta}{\cardmeta \times \cardmeta'}}
\]
By the I.H. on the premise we have a $\dsetmetaprime$ and a $\dbn_b$ with $\deval{\vctxn}{\dbn_1}{\dbn_b}{\emeta}{\dsetmetaprime}$, and by preservation
$\vdash_{\schema, \dbn_\textit{init}, \dbn_b} \tmap{\dsetmetaprime}{\reft{\tbn}{\ldots\recsep\tmap{\lbn}{\typmeta}{\cardmeta}\recsep\ldots}}{\cardmeta'}$. Regardless of which rule is used to establish the type the values in $\dsetmetaprime$, typing inversion ensures that it is an object value with a mapping for $\lbn$, so the projection function will always take the first branch, allowing us to construct the result with \textsc{E-proj}.

%

\[
\inferrule[(T-function)]
{
f : \left( \typmeta_1 \hash p_1, \ldots \typmeta_n \hash p_n \right) \rightarrow \typmeta' \hash \cardmeta'
\\\\
\Gamma \vdash_\schema \tmap{\emeta_1}{\typmeta_1}{\cardmeta_1}
\quad
\cdots
\quad
\Gamma \vdash_\schema \tmap{\emeta_n}{\typmeta_n}{\cardmeta_n}
\\
\cardmeta_1 \leq \llbracket p_1 \rrbracket
\quad
\cdots
\quad
\cardmeta_n \leq \llbracket p_n \rrbracket
}
{\tctxn \vdash_{\schema} \tmap{f^\dagger(\emeta_1, \ldots, \emeta_n)}{\typmeta'}{\cardmeta'}}
\]
By $n$ repeated iterations of the induction hypothesis (supported by $n-1$ appeals to  extension lemmas), we have that
$\deval{\vctxn}{\dbn_i}{\dbn_{i+1}}{\emeta_i}{\dsetmetax{i}}$ for $1 \leq i \leq n$. By preservation, the $\dsetmetax{i}$ are within the domain of $\llbracket f \rrbracket$, so we can let $\dsetmeta = \llbracket f \rrbracket (\dsetmetax{1},\ldots,\dsetmetax{n})$
and derive the result with \textsc{E-function}.

%
%
%

\[
\inferrule[(T-insert)]
{
\schema(\tbn) = \rec{\tmap{\lbnx{1}}{\tmeta_1}{\cardmeta_1},\ldots,\tmap{\lbnx{n}}{\tmeta_n}{\cardmeta_n}}
\\
\tctxn \vdash_{\schema} \emeta_1 \div \typmeta_1 / \tmeta_1 \hash \cardmeta_1
\quad
\cdots
\quad
\tctxn \vdash_{\schema} \emeta_n \div \typmeta_n / \tmeta_n \hash \cardmeta_n
}
{\tctxn \vdash_{\schema} \tmap{\mathsf{insert}\,\tbn\,
\rec{\maprec{\lbnx{1}}{\emeta_1}\recsep\ldots\recsep\maprec{\lbnx{n}}{\emeta_n}}
}{\reft{\tbn}{
\tmap{\lbnx{1}}{\typmeta_1}{\cardmeta_1},\ldots,\tmap{\lbnx{n}}{\typmeta_n}{\cardmeta_n}
}}{\card{1}{1}}}
\]
We can further invert the $n$ premises $\tctxn \vdash_{\schema} \emeta_i \div \typmeta_i / \tmeta_i \hash \cardmeta_i$ to get smaller derivations of
$\tctxn \vdash_{\schema} \tmap{\emeta_i}{\typmeta_i}{\cardmeta'_i}$ with $\cardmeta'_i \leq \cardmeta_i$, and thus invoke the induction hypothesis $n$ times (supported by $n-1$ appeals to evaluation preserving extension, \cref{label-eval-pres-extension}) to get $\deval{\vctxn}{\dbn_i}{\dbn_{i+1}}{\emeta_i}{\dsetmetax{i}}$ for $1 \leq i \leq n$, and by preservation we have $\vdash_{\schema, \dbn_\textit{init}} \tmap{\dsetmetax{i}}{\typmeta}{\cardmeta_i}$.

It is always possible to generate an identifier $\idn$ that is fresh relative to $\dbn_\textit{init}$ and $\dbn_{n+1}$, so to conclude with \textsc{E-insert} we must only obtain a derivation of $\dsetmetax{i} : \tmeta_i \rhd \vsetmetax{i}$ for $1 \leq i \leq n$. For this, we need a lemma:

\begin{lemma}[Progress for store coercion]
    If $\tctxn \vdash_{\schema} \emeta_i \div \typmeta_i / \tmeta_i \hash \cardmeta_i$ and $\vdash_{\schema, \dbn_\textit{init}} \tmap{\dsetmetax{i}}{\typmeta_i}{\cardmeta'_i}$, then there exists some $\vsetmetax{i}$ such that we can construct $\dsetmetax{i} : \tmeta_i \rhd \vsetmetax{i}$.
\end{lemma}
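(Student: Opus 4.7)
The plan is to proceed by induction on the derivation of $\tctxn \vdash_{\schema} \emeta_i \div \typmeta_i / \tmeta_i \hash \cardmeta_i$, of which only two cases are possible: \textsc{TS-simple} and \textsc{TS-reference}. In both cases the goal $\dsetmetax{i} : \tmeta_i \rhd \vsetmetax{i}$ has to be concluded via \textsc{ES-sequence}, so as a preliminary step I would invert \textsc{VT-set} on the value-typing hypothesis to get a per-element typing $\vdash_{\schema, \dbn_\textit{init}} \dmeta : \typmeta_i$ for every $\dmeta \in \dsetmetax{i}$. Producing each stripped element then gives a derivation for $\vsetmetax{i}$ of the same length, which \textsc{ES-sequence} assembles.

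For \textsc{TS-simple}, both $\tmeta_i$ and $\typmeta_i$ are the same scalar type $\tsmeta$. Element typing can then only be derived with \textsc{VT-prim}, which forces each $\dmeta \in \dsetmetax{i}$ to be a scalar value $\vsmeta$ with $\vsmeta : \tsmeta$. Rule \textsc{ES-simple} applies immediately to each such element, so $\vsetmetax{i}$ is just the underlying sequence of scalars, now read as a stored value sequence.

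For \textsc{TS-reference}, the $\tmeta_i$ has the shape $\reft{\tbn}{\tmap{\lbnx{1}}{\tsmeta_1}{\cardmeta_1}\recsep\ldots\recsep\tmap{\lbnx{n}}{\tsmeta_n}{\cardmeta_n}}$ and the sub-premises of \textsc{TS-reference} guarantee that $\typmeta_i$ is a reference type mentioning at least the labels $\lbnx{1},\ldots,\lbnx{n}$ with the same scalar types $\tsmeta_1,\ldots,\tsmeta_n$. Element typing must then conclude with either \textsc{VT-stored-ref} or \textsc{VT-new-ref}; inversion on either gives $\dmeta = \refval{\idn}{\umaprec{\lbnx{1}}{\dsetmetax{1}}\recsep\ldots\recsep\umaprec{\lbnx{n}}{\dsetmetax{n}}\recsep\ldots}$ together with sub-derivations $\vdash_{\schema, \dbn_\textit{init}} \dsetmetax{j} : \tsmeta_j \hash \cardmeta'_j$ for $1 \leq j \leq n$. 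Applying the \textsc{TS-simple} reasoning from the previous paragraph to each of these sub-sequences shows that every one of them consists entirely of scalars, which is precisely the input shape demanded by \textsc{ES-reference}. That rule then produces the stripped reference value, and \textsc{ES-sequence} collects the results.

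The only mildly delicate point is that three different rules introduce ``extra'' labels at different levels: \textsc{TS-reference} allows $\typmeta_i$ to mention labels beyond those of $\tmeta_i$, and both \textsc{VT-stored-ref} and \textsc{VT-new-ref} allow actual reference values to carry labels beyond those of $\typmeta_i$. The input pattern of \textsc{ES-reference} is already written with a trailing ``$\ldots$'' so that unmatched labels are simply dropped; the obligation is then just to confirm that the labels $\lbnx{1},\ldots,\lbnx{n}$ required by $\tmeta_i$ are actually present in each reference value, which is exactly what the two inversions above provide. Everything else is routine.
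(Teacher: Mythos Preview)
Your proposal is correct and takes essentially the same approach as the paper, which simply says ``Case analysis on the given derivations.'' You have fleshed out exactly that case analysis (\textsc{TS-simple} versus \textsc{TS-reference}) with the appropriate inversions on value typing, and your observation about the trailing ``$\ldots$'' in \textsc{ES-reference} handling extra labels is the right way to discharge the one nontrivial obligation. (A minor stylistic point: since neither \textsc{TS-simple} nor \textsc{TS-reference} has a recursive premise, calling this ``induction'' is overkill---it really is just case analysis, as the paper says.)
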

\begin{proof}
    Case analysis on the given derivations.
\end{proof}

\[
\inferrule[(T-update)]
{
\tctxn \vdash_{\schema} \tmap{\emeta}{\typmeta}{\card{1}{1}}
\\
\typmeta = \reft{\tbn}{\ldots}
\\
\schema(\tbn) = \rec{\tmap{\lbnx{1}}{\tmeta_1}{\cardmeta_1},\ldots,\tmap{\lbnx{n}}{\tmeta_n}{\cardmeta_n}, \ldots}
\\
\tctxn, \tmap{x}{\typmeta}{\card{1}{1}} \vdash_{\schema} \emeta_1 \div \typmeta_1/\tmeta_1 \hash \cardmeta_1
\quad
\cdots
\quad
\tctxn, \tmap{x}{\typmeta}{\card{1}{1}} \vdash_{\schema} \emeta_n \div \typmeta_n/\tmeta_n \hash \cardmeta_n
}
{\tctxn \vdash_{\schema} \tmap{\mathsf{update}^\dagger\,\emeta\,\mathsf{set}\,
x.\rec{\maprec{\lbnx{1}}{\emeta_1}\recsep\ldots\recsep\maprec{\lbnx{n}}{\emeta_n}}
}{\reft{\tbn}{
\tmap{\lbnx{1}}{\typmeta_1}{\cardmeta_1},\ldots,\tmap{\lbnx{n}}{\typmeta_n}{\cardmeta_n}
}}{\card{0}{1}}}
\]
By the induction hypothesis we have $\deval{\vctxn}{\dbn_1}{\dbn_0}{e}{\dsetmetax{0}}$, and by preservation we learn that $\vdash_{\schema, \dbn_\textit{init}} \tmap{\dsetmetax{0}}{\reft{\tbn}{\ldots}}{\card{1}{1}}$, so by inversion $\dsetmetax{0} = [\dmeta]$ with $\dmeta = \refval{\idn}{\ldots}$, because evaluation preserves extension (\cref{label-eval-pres-extension}), $\dbn_0$ extends $\dbn_\textit{init}$.

By the same logic from the \textsc{T-insert} case, we can then get
$\deval{\vctxn}{\dbn_i}{\dbn_{i+1}}{\emeta_i}{\dsetmetax{i}}$
and
$\vdash_{\schema, \dbn_\textit{init}} \tmap{\dsetmetax{i}}{\typmeta}{\cardmeta_i}$
and
$\dsetmetax{i} : \tmeta_i \rhd \vsetmetax{i}$
for $1 \leq i \leq n$.

There are then three cases to deal with:
\begin{itemize}
    \item In the event that there is no tuple with first component $\idn$, then we let $\dbn' = \dbn_{n+1}$, and we have $\dbn_{n+1} - \left\{ (\idn, \tbn, \lkclosed\,, \rec{}) \right\} = \dbn'$ with $\dbn' \# \idn$. We can construct the $\dbn''$ required by \textsc{E-update}, but we're just going to ignore it, because we take the first case of the disjunctive rule for \textsc{E-update}. The derivation returns the value sequence $[]$ and the database store $\dbn_{n+1}$.
    \item In the event that there is a tuple $\left(\idn, \tbn, \lkclosed\,, \rec{
\maprec{\lbnprimex{1}}{\vsetmetaprimex{1}}\recsep\ldots\recsep\maprec{\lbnprimex{n}}{\vsetmetaprimex{n}}
}\right) \in \dbn_{n+1}$, then we construct $\dbn'$ as $\dbn_{n+1}$ without that tuple, and we can construct $\dbn''$ as required by \textsc{E-update}. We then ignore it, because the derivation returns the value sequence $[]$ and the database store $\dbn_{n+1}$.
    \item In the event that there is a tuple $\left(\idn, \tbn, \lkopen, \rec{
\maprec{\lbnprimex{1}}{\vsetmetaprimex{1}}\recsep\ldots\recsep\maprec{\lbnprimex{n}}{\vsetmetaprimex{n}}
}\right) \in \dbn_{n+1}$, then we construct $\dbn'$ as $\dbn_{n+1}$ without that tuple, and we can construct $\dbn''$ as required by \textsc{E-update}. The derivation returns the value sequence $[\refval{\idn}{\imaprec{\lbnx{1}}{\dsetmetax{1}},\ldots,\imaprec{\lbnx{n}}{\dsetmetax{n}}}]$ and the constructed updated database store $\dbn''$.
\end{itemize}

\section{Updated IMDBench Performance}
\label{sec-perf}

In \cref{sec-eval}, we described the IMDBench benchmark designed for Gel \cite{whyorms}, and presented \cref{fig:perf} showing a summary of the benchmark's throughput performance across nine implementations of the benchmark: four JavaScript ORMs, two Python ORMs, the Gel client library for Python and JavaScript, and an implementation that queried PostgreSQL directly with the asyncpg client library for PostgreSQL. Summarization was done by capturing the geometric mean across three different benchmarks.

In this appendix, we show the three benchmarks separately, and also include latency information. This section also presents data for three additional implementations of the benchmark that use PostgreSQL queries directly: a Python implementation using the psycopg2 client library, a Node.js implementation using the pg client library, and a Go implementation using the pgx client library. 
These there implementations were excluded from the paper because we're not certain they are fair comparisons: when the benchmark was created in 2022, these libraries did not have adequate support for PostgreSQL arrays, so the implementations use multiple calls to the database, each of which imposed a one-millisecond delay. As of 2025, at least one of the three, the psycopg2 client library for Python, has adequate support for PostgreSQL arrays, so the measurements are no longer representative of the best-possible implementation that uses psychopg2 and hand-tuned PostgreSQL queries.

\Cref{fig:post-perf,fig:movie-perf,fig:user-perf} each show one of the three benchmarks. Gel, which is called ``AnonDB'' in the figures, consistently outperforms all the ORM implementations and does slightly worse than the asyncpg PostgreSQL client library for Python. 
The latency graphs use a box-and-whiskers plot that captures the minimum, 25th-percentile, 50th-percentile (median), 75th-percentile, and 99th percentile latency across the benchmark runs; maximum latency frequently had significant outliers for all implementations.

\begin{figure}
    \centering
    \includegraphics[width=0.48\linewidth]{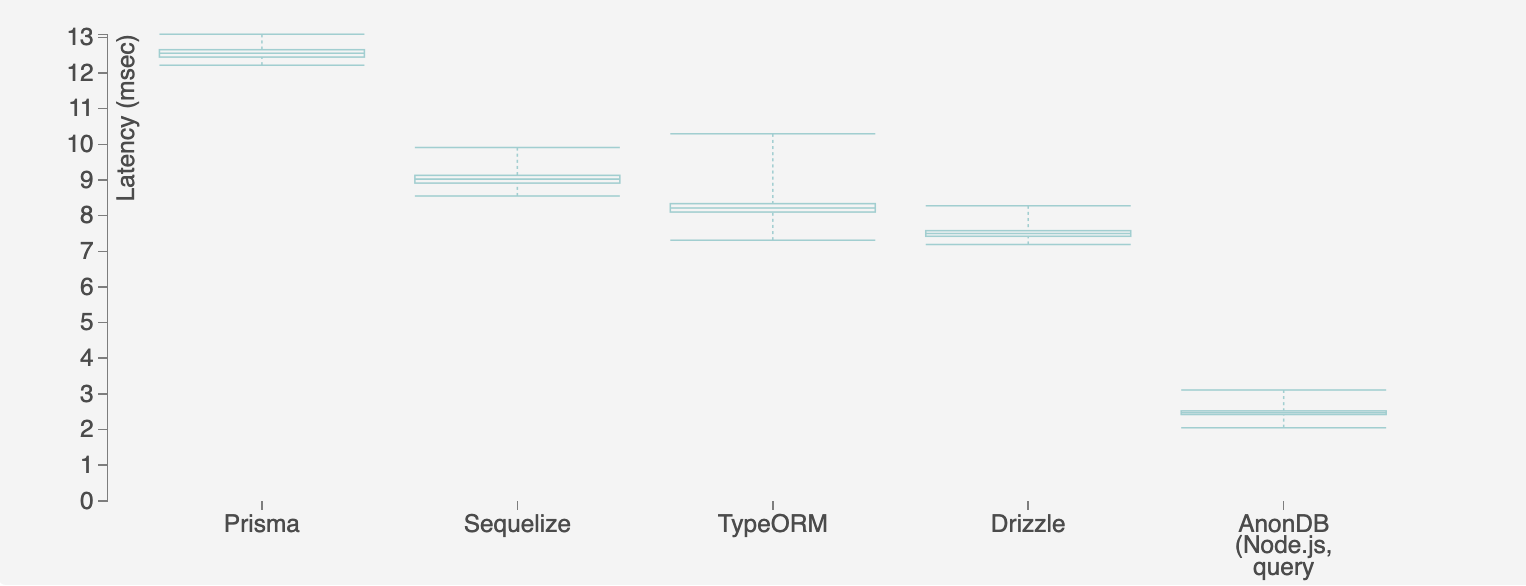}
    \includegraphics[width=0.48\linewidth]{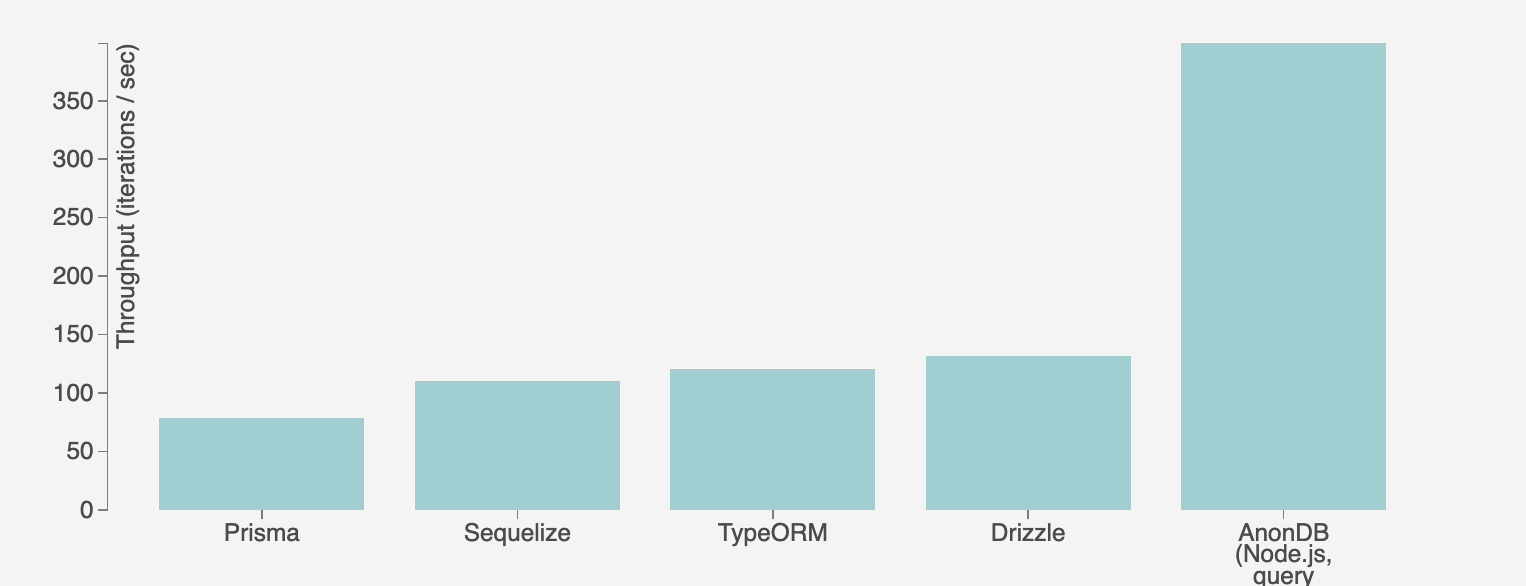}
    \includegraphics[width=0.48\linewidth]{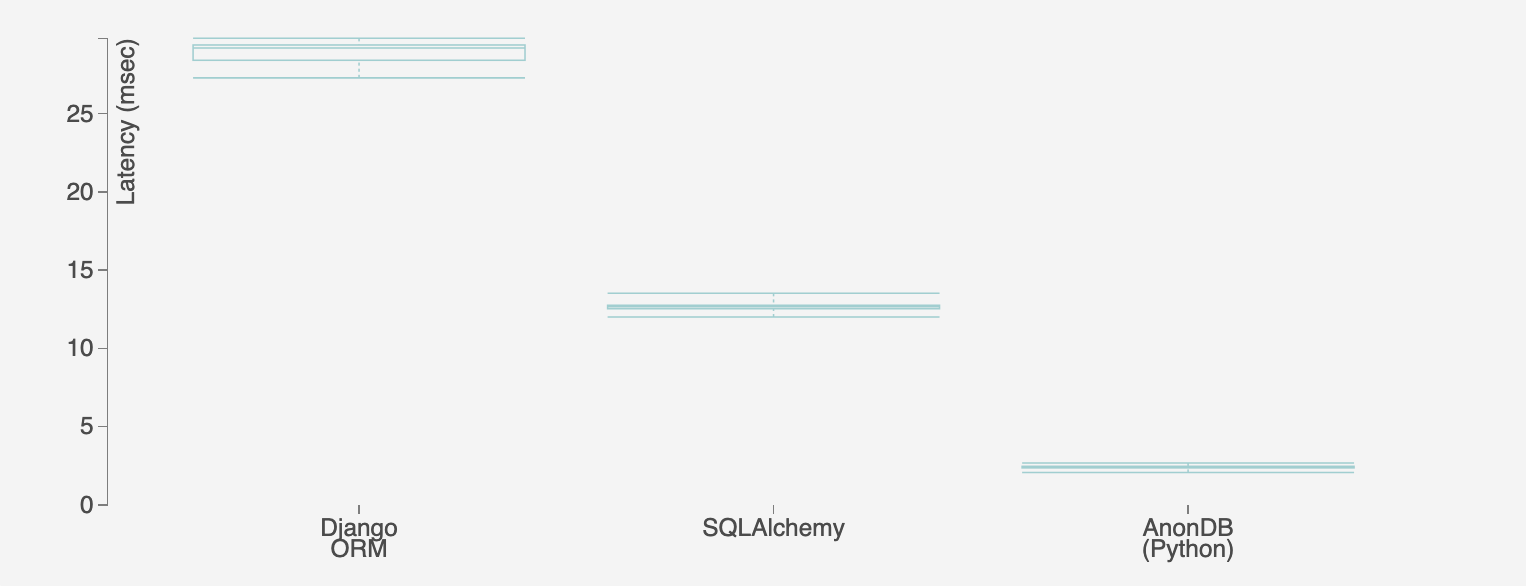}
    \includegraphics[width=0.48\linewidth]{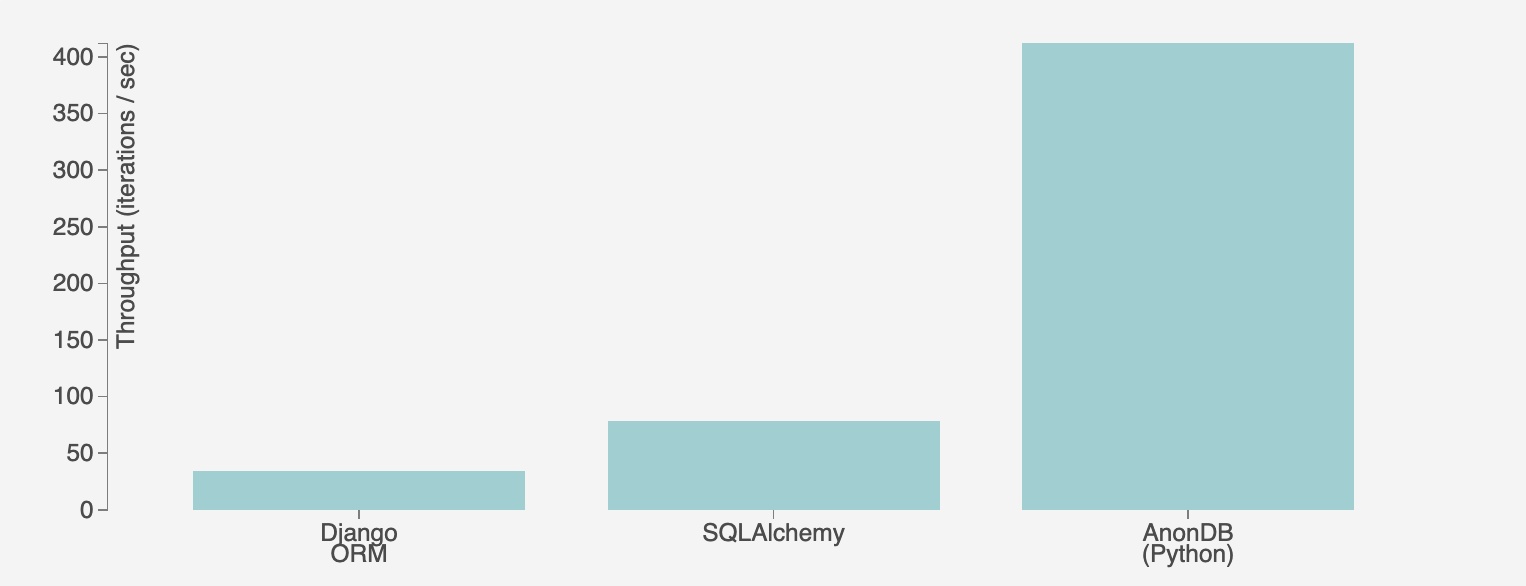}
    \includegraphics[width=0.48\linewidth]{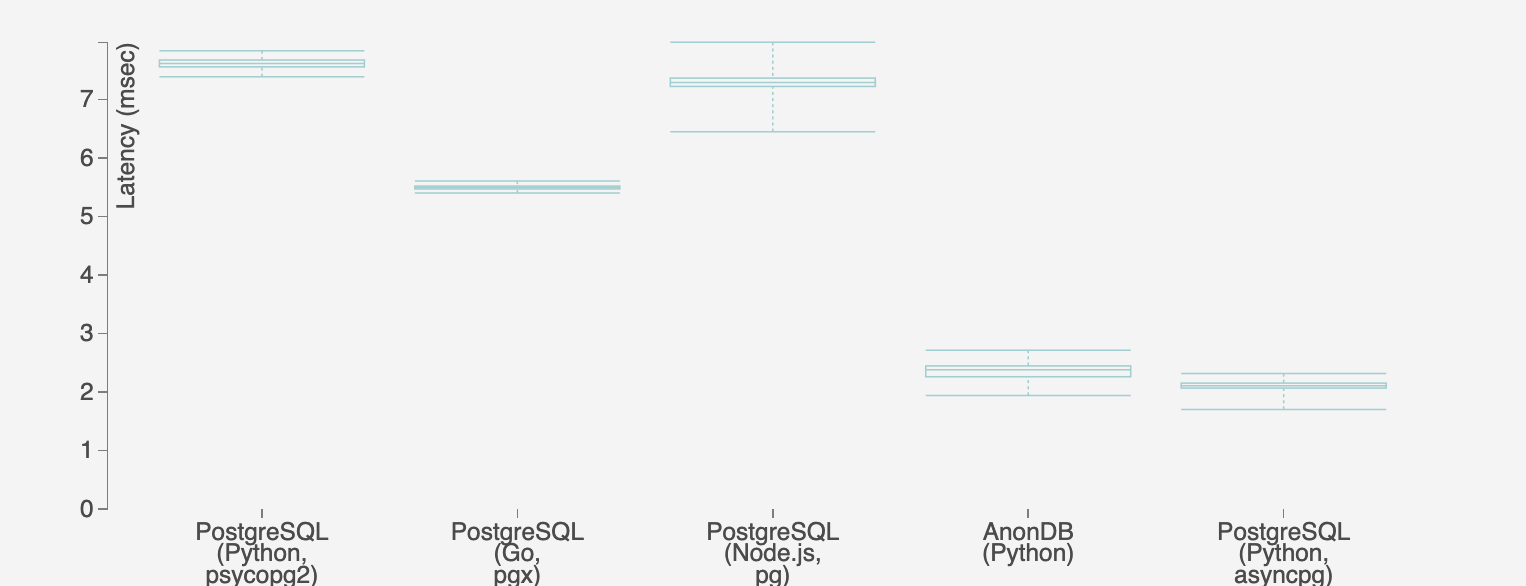}
    \includegraphics[width=0.48\linewidth]{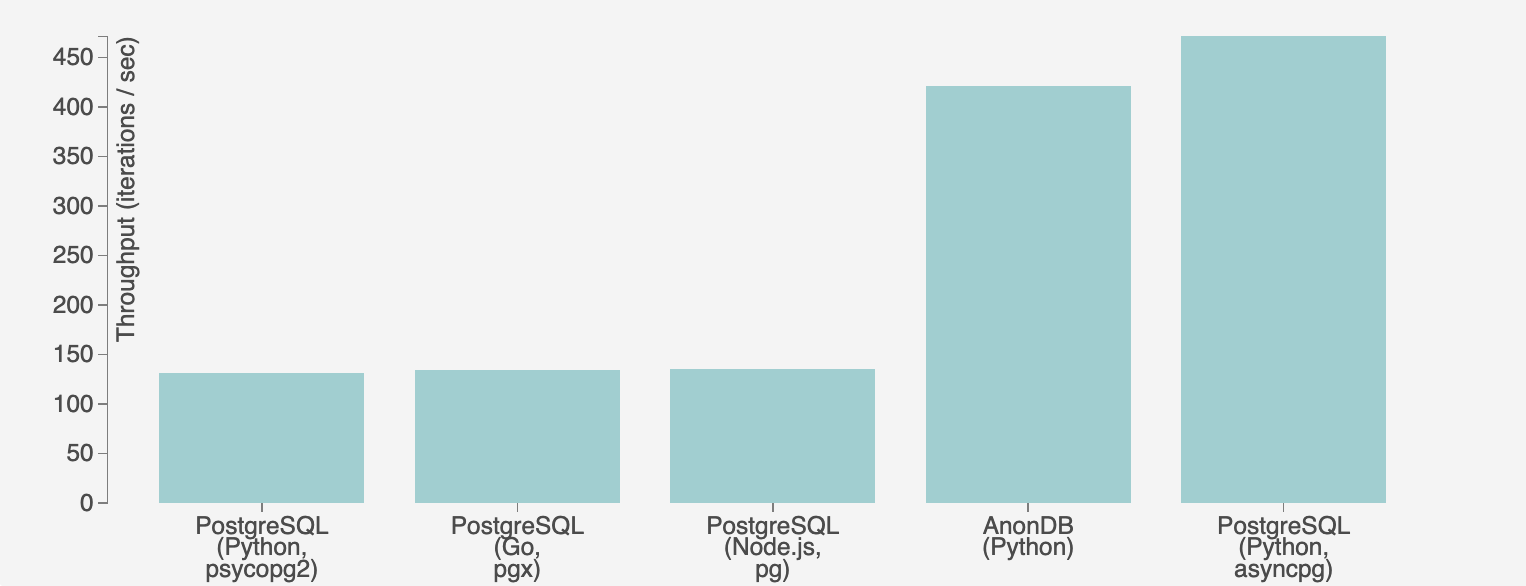}
    \caption{First benchmark: a POST query that inserts a new movie}
    \label{fig:post-perf}
\end{figure}

\begin{figure}
    \centering
    \includegraphics[width=0.48\linewidth]{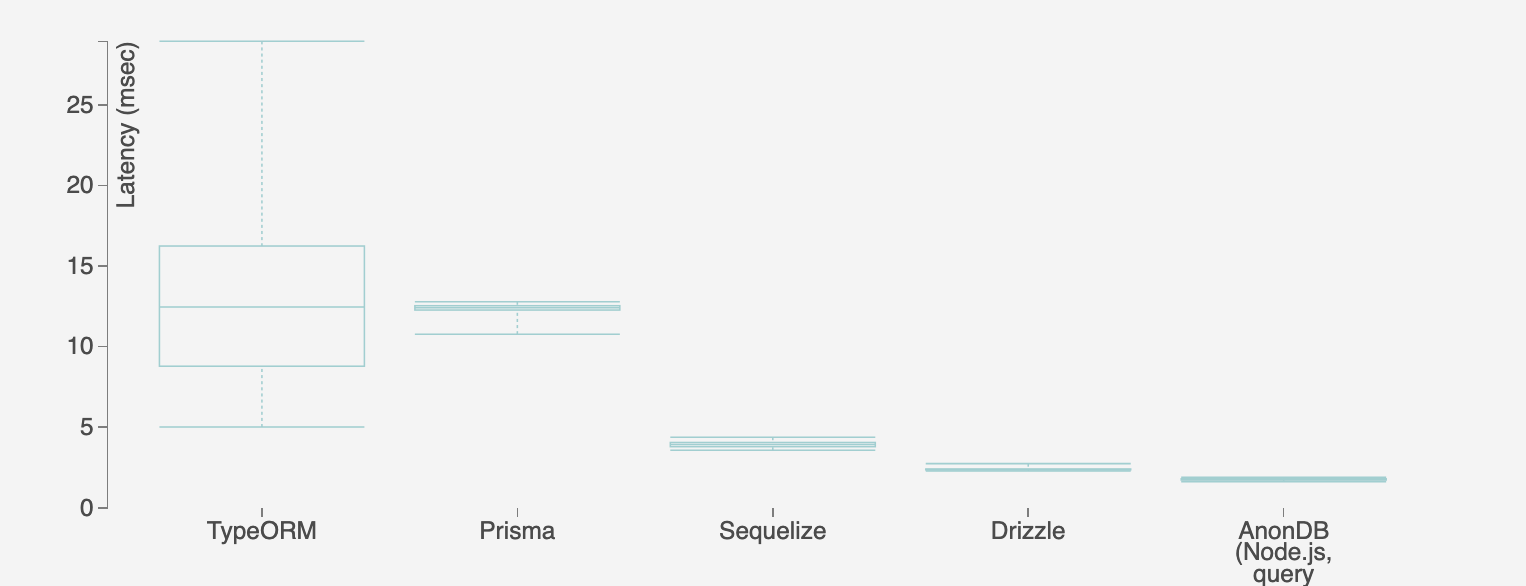}
    \includegraphics[width=0.48\linewidth]{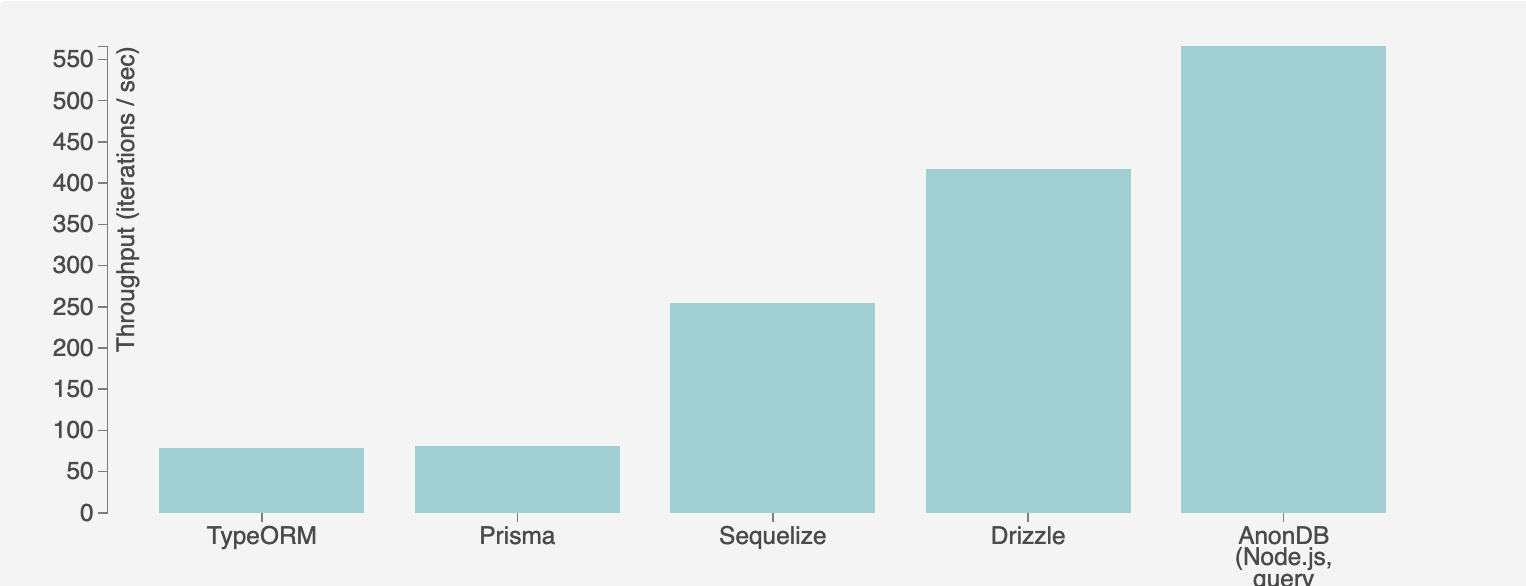}
    \includegraphics[width=0.48\linewidth]{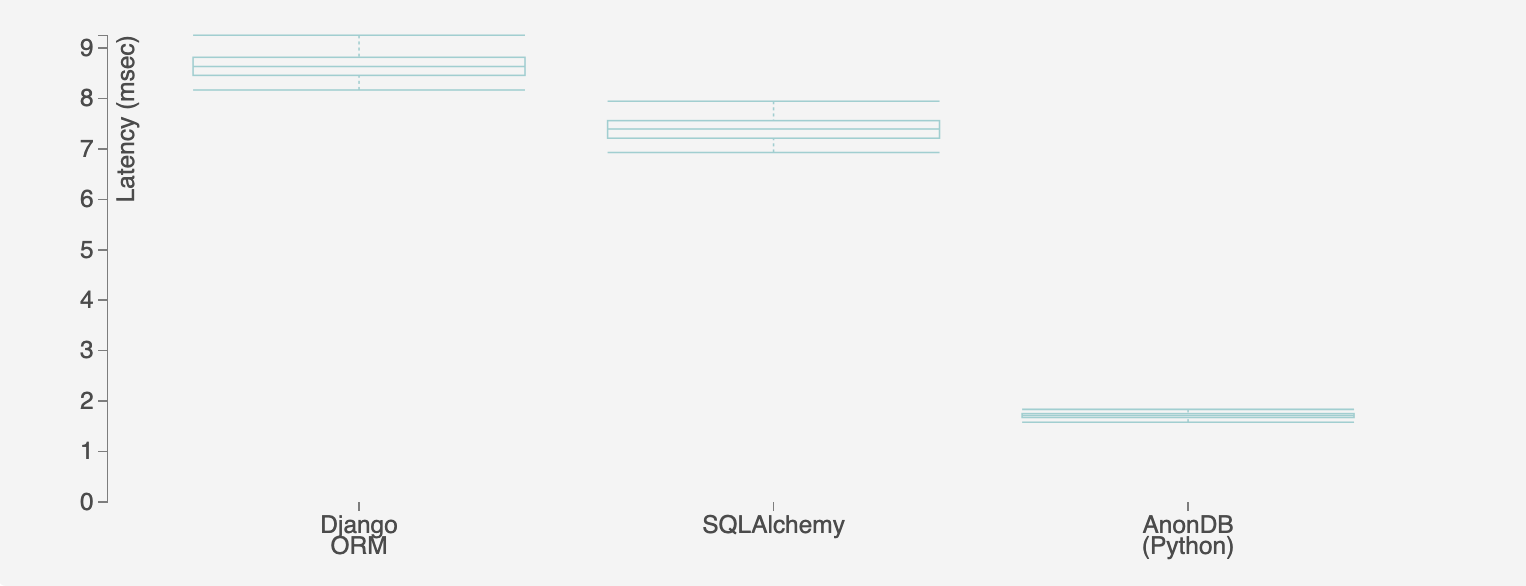}
    \includegraphics[width=0.48\linewidth]{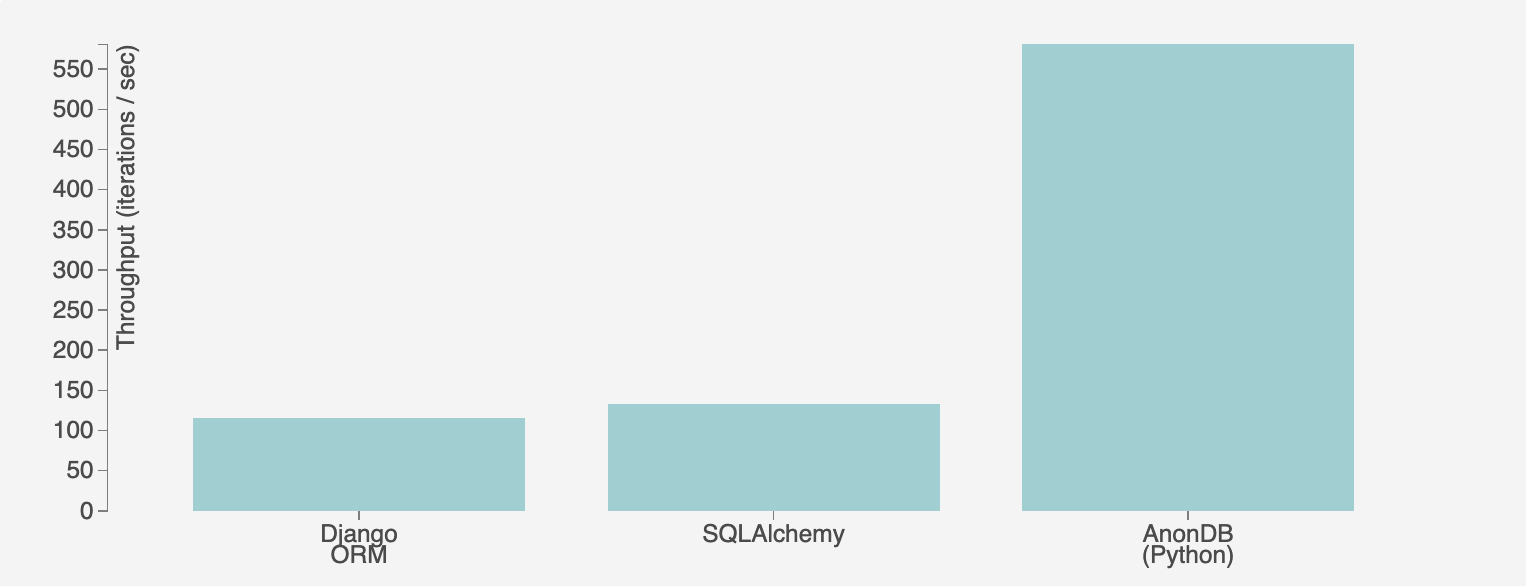}
    \includegraphics[width=0.48\linewidth]{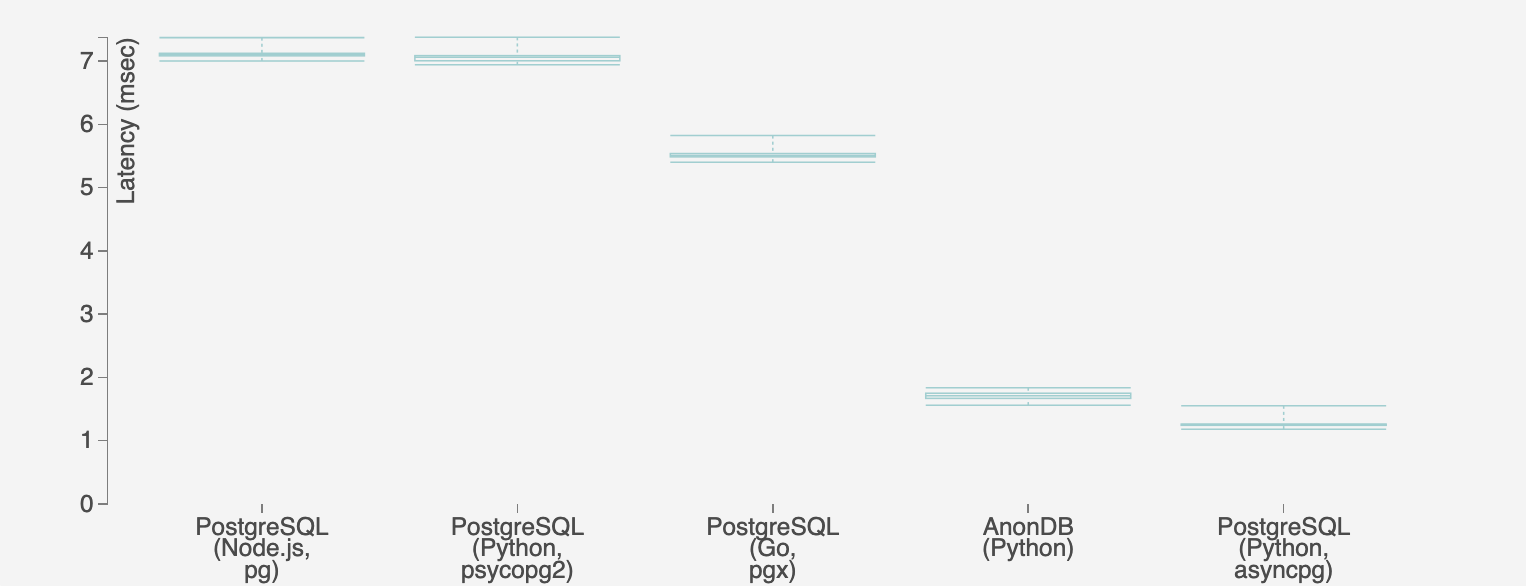}
    \includegraphics[width=0.48\linewidth]{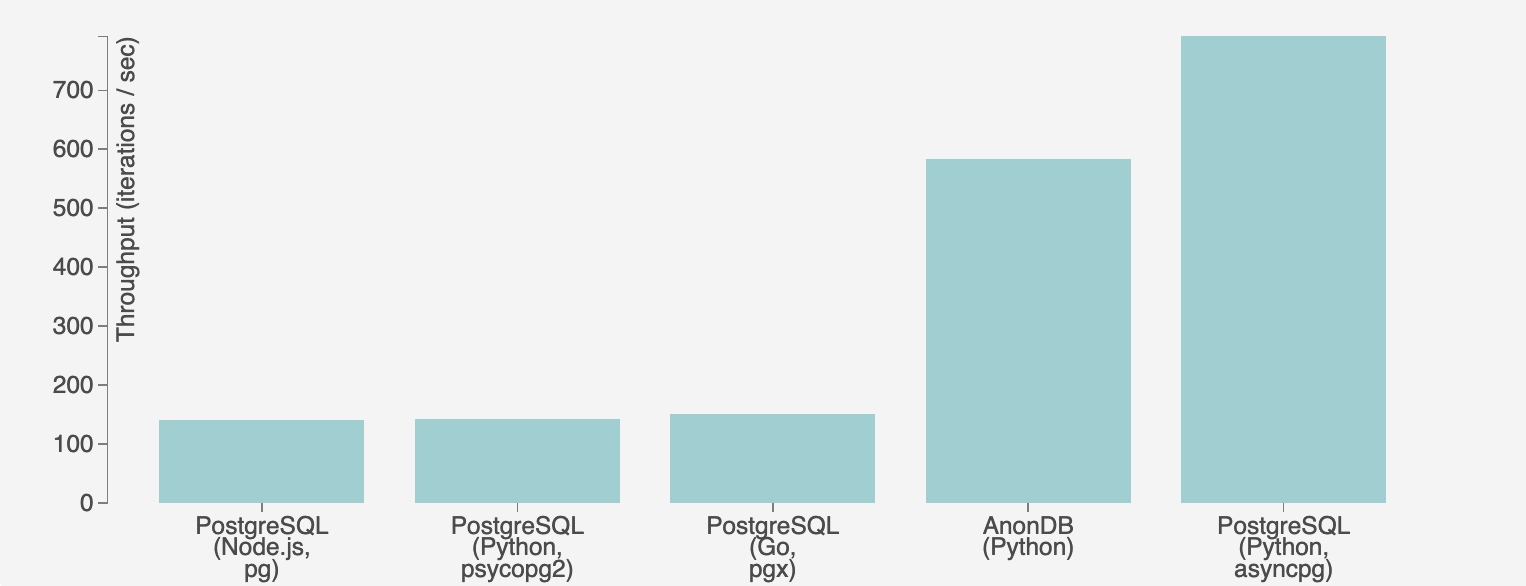}
    \caption{Second benchmark: a GET query that requests data on a movie}
    \label{fig:movie-perf}
\end{figure}

\begin{figure}
    \centering
    \includegraphics[width=0.48\linewidth]{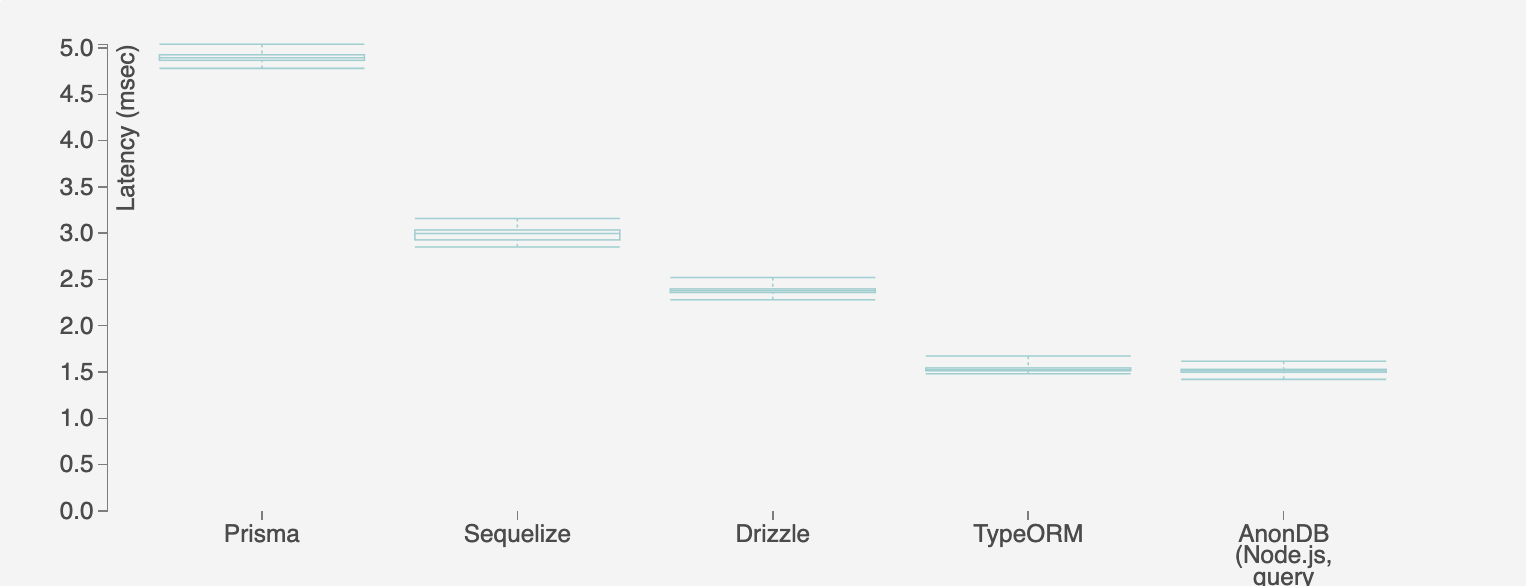}
    \includegraphics[width=0.48\linewidth]{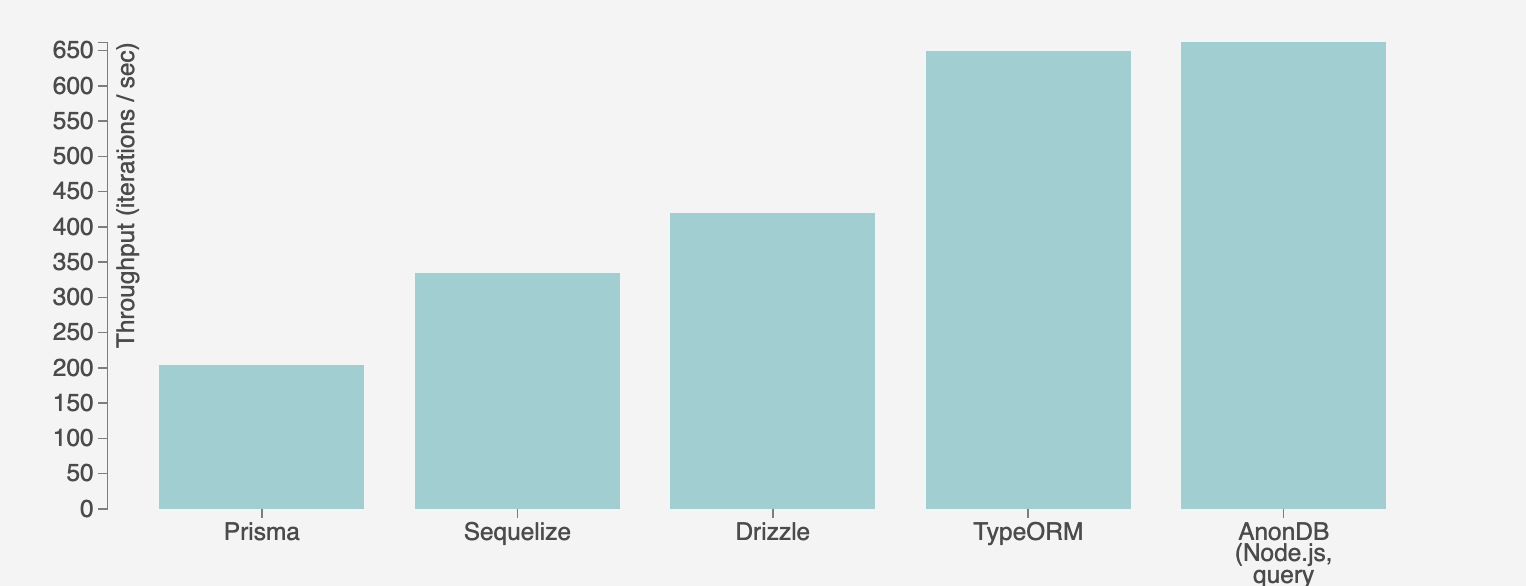}
    \includegraphics[width=0.48\linewidth]{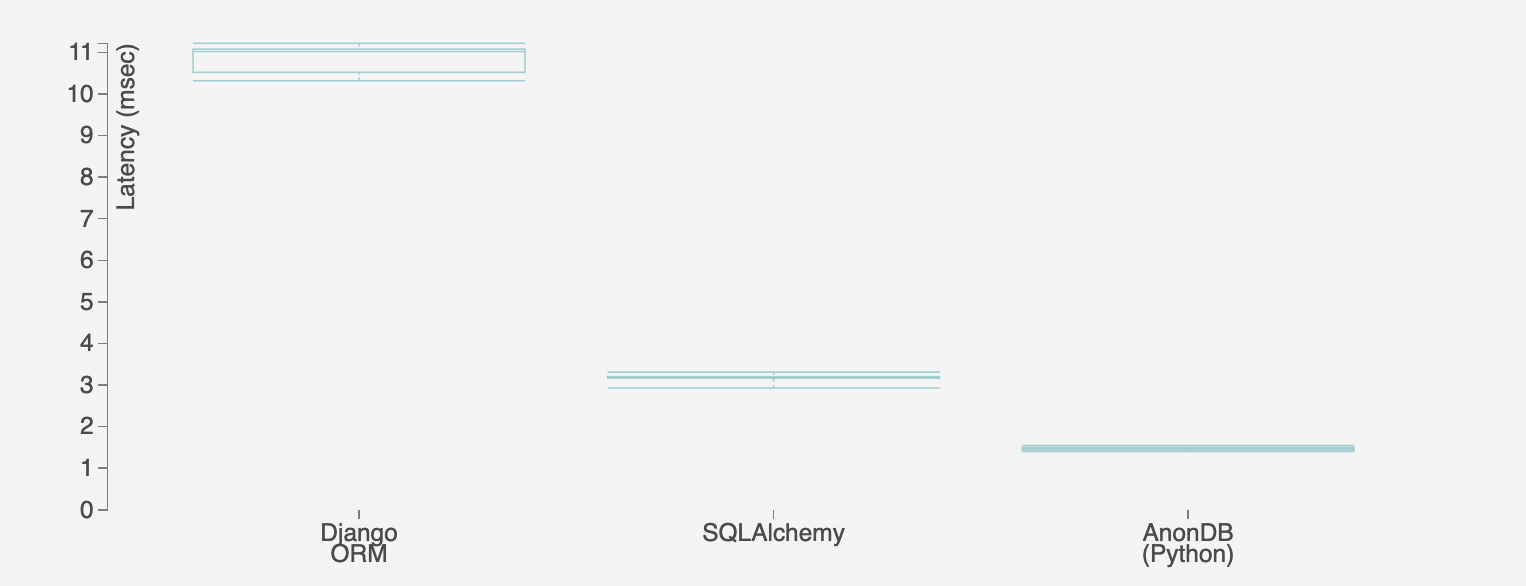}
    \includegraphics[width=0.48\linewidth]{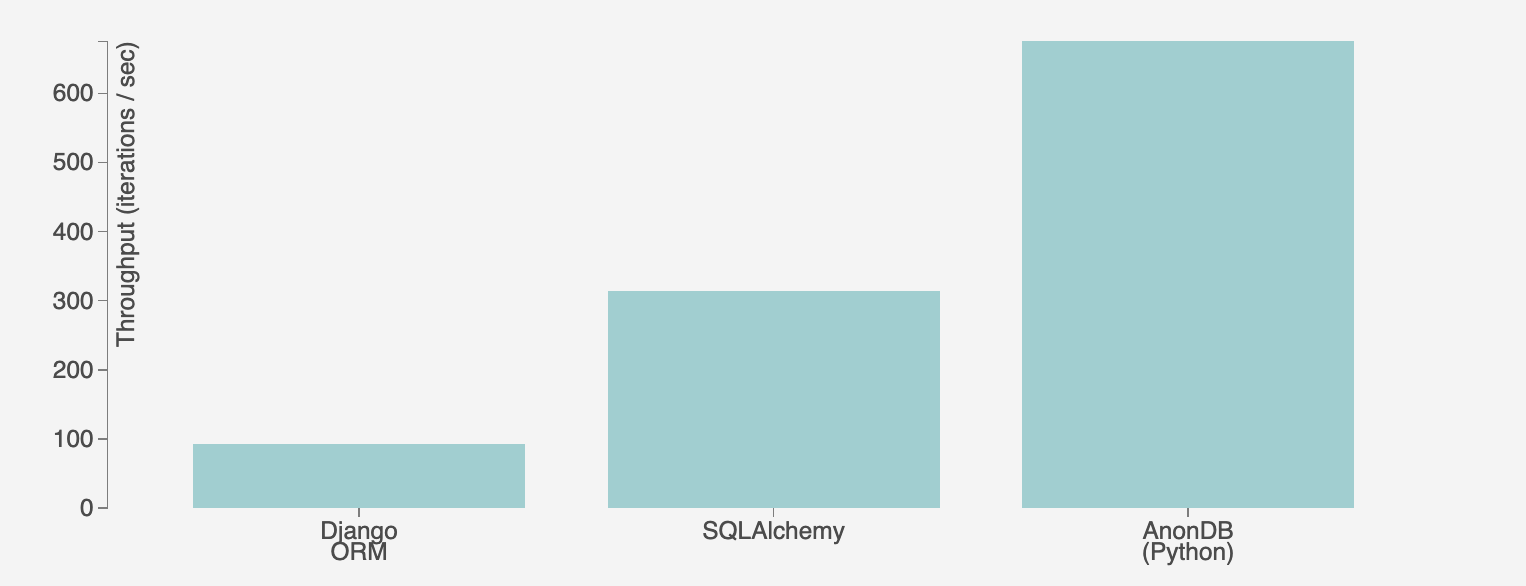}
    \includegraphics[width=0.48\linewidth]{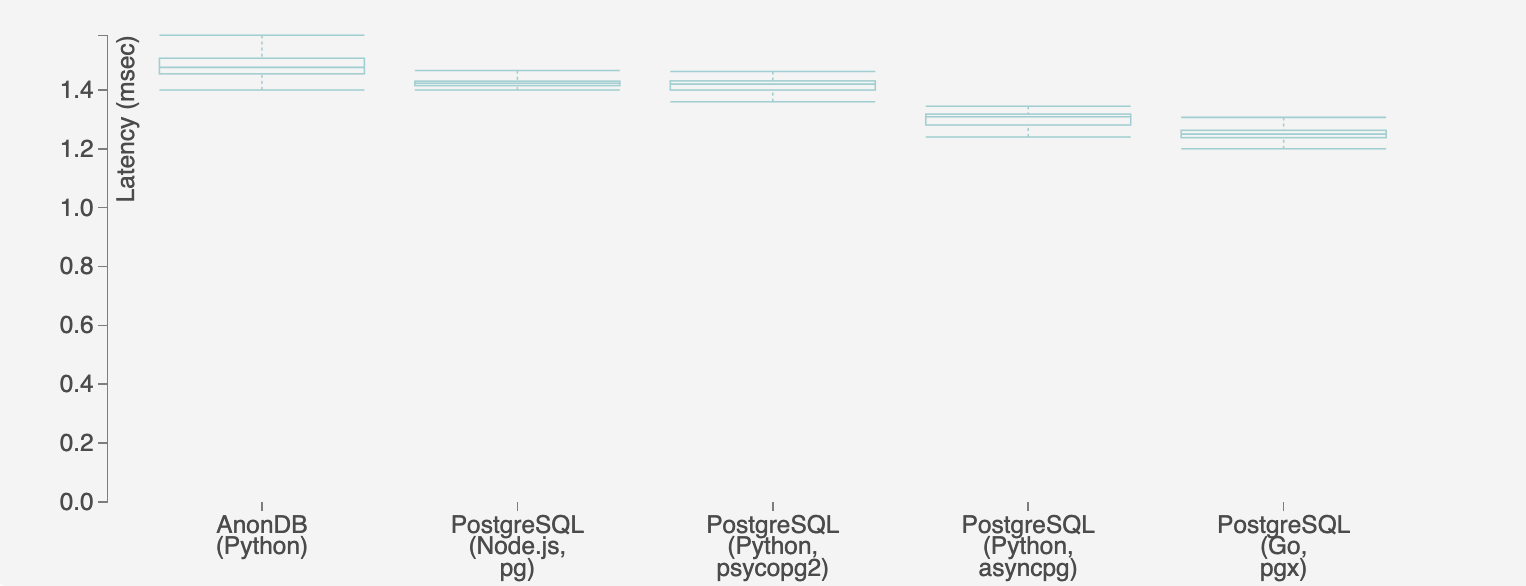}
    \includegraphics[width=0.48\linewidth]{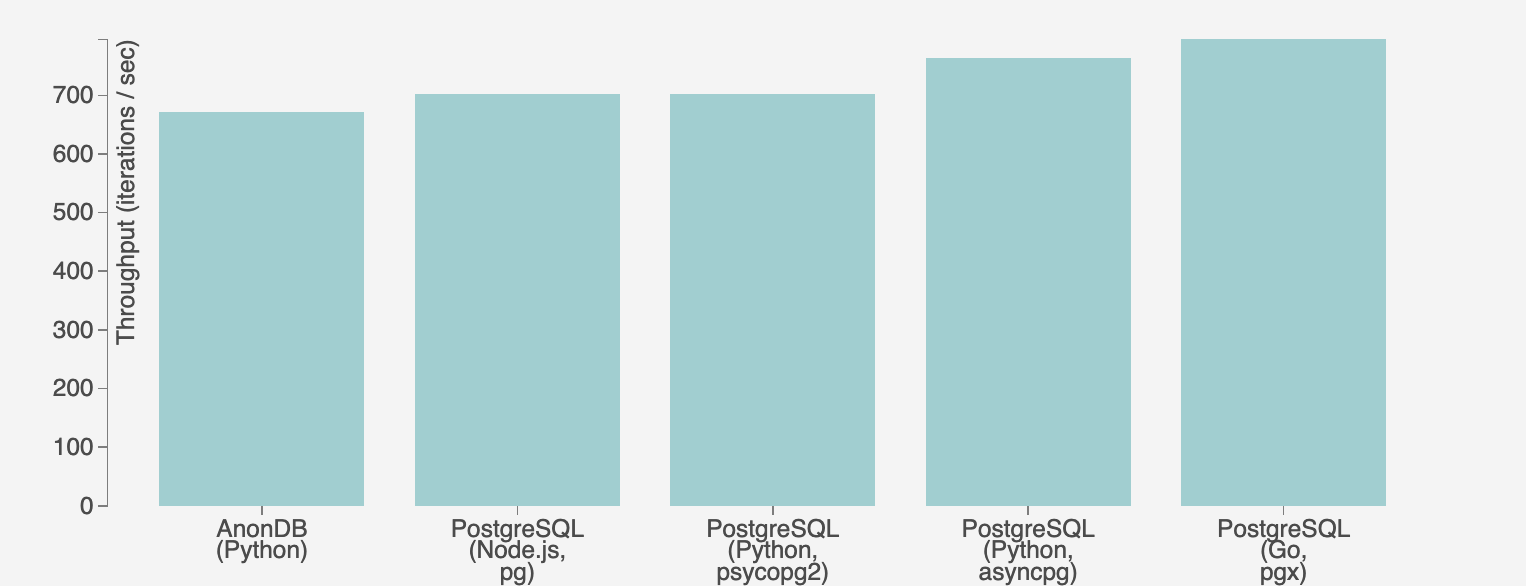}
    \caption{Third benchmark: a GET query that requests information about a user}
    \label{fig:user-perf}
\end{figure}


\end{document}